\newenvironment{proof}{\noindent\textbf{Proof: }\ignorespaces}{}
\newenvironment{proofof}[1]{\noindent\textbf{Proof of #1: }\ignorespaces}{}
\newcommand{\qed}{\hspace*{\fill}$\Box$\medskip}
\newcommand{\topic}[1]{\vspace{4pt} \noindent {\bf #1}}
\newcommand{\probA}{\mathfrak{A}}
\newcommand{\classA}{\mathfrak{C}_\mathsf{bounded}}
\newcommand{\classB}{\mathfrak{C}_\mathsf{concave}}
\newcommand{\classC}{\mathfrak{C}_\mathsf{increasing}}
\newcommand{\realization}{\mathbf{r}}
\newcommand{\calF}{\mathcal{F}}
\newcommand{\bV}{\mathbb{V}}
\newcommand{\Prob}{\Pr}
\newcommand{\bfv}{\mathbf{v}}
\newcommand{\tS}{\widetilde{S}}
\newcommand{\bw}{{\bf w}}
\newcommand{\barmu}{\widehat{\mu}}
\newcommand{\hnu}{\widehat{\nu}}
\newcommand{\opt}{\mathsf{OPT}}
\newcommand{\vsmall}{\mathsf{nm}}
\newcommand{\hugevalue}{\mathsf{Huge}}
\newcommand{\Hg}{\mathsf{Hg}}
\newcommand{\val}{\mathsf{Val}}
\newcommand{\lb}{\mathsf{L}}
\newcommand{\ub}{\mathsf{U}}
\newcommand{\Exp}{\mathbb{E}}
\newcommand{\dist}{\mathsf{dis}}
\newcommand{\len}{w}
\newcommand{\eat}[1]{}
\newcommand{\uti}{\mu}
\newcommand{\umax}{\|\mu\|_\infty}
\newcommand{\tmu}{\widetilde{\mu}}
\newcommand{\ttau}{\widetilde{\tau}}
\newcommand{\bnu}{\overline{\nu}}
\newcommand{\tnu}{\widetilde{\nu}}
\renewcommand{\d}{\mathrm{d}}
\newcommand{\poly}{\mathrm{poly}}
\newcommand{\eum}{\mathsf{EUM}}
\newcommand{\eummax}{$\mathsf{EumMax}$}
\newcommand{\STOCHSP}{$\mathsf{Stoch}$-$\mathsf{SP}$}
\newcommand{\utidecomp}{\textsc{ExpSum-Approx}}
\newcommand{\tchi}{\widetilde{\chi}}
\newcommand{\algo}{\textsc{Fourier}}
\newcommand{\feature}{\mathsf{Ft}}
\newcommand{\conf}{\mathsf{Cf}}
\long\def\ignore#1{}
\newtheorem{theorem}{Theorem}
\newtheorem{lemma}{Lemma}
\newtheorem{corollary}{Corollary}
\newtheorem{example}{Example}
\title{Maximizing Expected Utility for Stochastic Combinatorial Optimization Problems
\footnote{
A preliminary version of the paper appeared 
in the Proceedings of the 52nd Annual IEEE Symposium on Foundations of Computer Science
(FOCS), 2011.
}
}
\author[1]{Jian Li\thanks{lijian83@mail.tsinghua.edu.cn}}
\author[2]{Amol Deshpande\thanks{amol@cs.umd.edu}}
\affil[1]{Institute for Interdisciplinary Information Sciences,
Tsinghua University, Beijing, P.R.China
}
\affil[2]{Department of Computer Science,
University of Maryland, College Park, USA
}
\date{}
\begin{document}

\maketitle
\begin{abstract}
We study the stochastic versions of a broad class of combinatorial problems
where the weights of the elements in the input dataset
are uncertain. The class of problems that we study includes shortest paths,
minimum weight spanning trees, and minimum weight matchings,
and other combinatorial problems like
knapsack. We observe that the expected value is inadequate in capturing different
types of {\em risk-averse} or {\em risk-prone} behaviors, and instead we consider
a more general objective which is to maximize the {\em expected utility} of the solution for some given utility function,
rather than the expected weight
(expected weight becomes a special case).
Under the assumption that
 there is a pseudopolynomial time algorithm for the {\em exact} version of the problem
(This is true for the problems mentioned above),
\footnote{Following the literature~\cite{papadimitriou2000approximability}, we differentiate
between {\em exact} version and {\em deterministic} version of a problem;
in the exact version of the problem, we are given a target value and asked to find a solution (e.g., a path)
with exactly that value (i.e., path length).}
we can obtain the following approximation results for several important classes of utility functions:
\begin{enumerate}
\item
If the utility function $\uti$ is continuous, upper-bounded by a constant and 
$\lim_{x\rightarrow+\infty}\uti(x)=0$,
we show that we can obtain a polynomial time approximation algorithm
with an {\em additive error} $\epsilon$ for any constant $\epsilon>0$.
\item
If the utility function $\uti$ is a concave increasing function, 
we can obtain a polynomial time approximation scheme (PTAS).
\item
If the utility function $\uti$ is increasing and has a bounded derivative, 
we can obtain a polynomial time approximation scheme.
\end{enumerate}
Our results recover or generalize several prior results on stochastic shortest path,
stochastic spanning tree, and stochastic knapsack.
Our algorithm for utility maximization makes use of the separability of exponential utility and a technique to decompose a
general utility function into exponential utility functions, which may be useful in other stochastic optimization problems.
\end{abstract}

\section{Introduction}

The most common approach to deal with optimization problems in presence of uncertainty is to optimize
the expected value of the solution.
However, expected value is inadequate in expressing diverse people's preferences
towards decision-making under uncertain scenarios.
In particular, it fails at capturing different {\em risk-averse} or {\em risk-prone}  behaviors
that are commonly observed.
Consider the following simple example where we have two lotteries $L_{1}$ and $L_{2}$.
In $L_{1}$, the player could win $1000$ dollars with probability $1.0$,
while in $L_{2}$ the player could win $2000$ dollars with probability $0.5$ and $0$ dollars otherwise.
It is easy to see that both have the same expected payoff of $1000$ dollars.
However, many, if not most, people would treat $L_{1}$ and $L_{2}$ as two completely different choices.
Specifically, a risk-averse player is likely to choose $L_{1}$ and a risk-prone player
may prefer $L_{2}$ (Consider a gambler who would like to spend 1000 dollars to play double-or-nothing).
A more involved but also more surprising example is the
{\em St. Petersburg paradox} (see e.g.,~\cite{robert04stpetersburg}) which
has been widely used in the economics literature
as a criticism of expected value.
%See Appendix~\ref{app:paradox} for
The paradox is named from Daniel Bernoulli's presentation of the problem,
published in 1738 in the Commentaries of the Imperial Academy of Science of Saint Petersburg.
Consider the following game: you pay a fixed fee $X$ to enter the game.
In the game, a fair coin is tossed repeatedly until a tail appears ending the game.
The payoff of the game is $2^{k}$ where $k$ is the number of heads that appear, i.e.,
you win $1$ dollar if a tail appears on the first toss, 2 dollars if a head appears on the first toss and a tail on the second, 4 dollars if a head appears on the first two tosses and a tail on the third and so on.
The question is what would be a fair fee $X$ to enter the game?
First, it is easy to see that the expected payoff is
$$
\Exp[\text{payoff}]=\frac{1}{2}\cdot 1+\frac{1}{4}\cdot 2 + \frac{1}{8}\cdot 4 + \frac{1}{16}\cdot 8 + \cdots
    =\frac{1}{2} + \frac{1}{2} + \frac{1}{2} + \frac{1}{2} + \cdots
    =\sum_{k=1}^\infty {1 \over 2}=\infty
$$
If we use the expected payoff as a criterion for decision making,  we should therefore play the game at any finite price $X$
(no matter how large $X$ is)
since the expected payoff is always larger.
However, researchers have done extensive survey and found that not many people would pay even 25 dollars to play the game \cite{robert04stpetersburg}, which significantly deviates from what the expected value criterion predicts.
In fact, the paradox can be resolved by expected utility theory with a logarithmic utility function, suggested by Bernoulli himself~\cite{bernoulli1954exposition}.
We refer interested reader to \cite{samuelson1977st,robert04stpetersburg} for more information.
These observations and criticisms have led researchers, especially in Economics,
to study the problem from a more fundamental perspective
and to directly maximize user satisfaction, often called {\em utility}.
The uncertainty present in the problem instance naturally leads us to optimize the {\em expected utility}.
%We refer to this formulation the {\em expected utility maximization (\EUM)} problem.

Let $\calF$ be the set of feasible solutions to an optimization problem.
Each solution $S\in \calF$ is associated with a random weight $w(S)$.
For instance, $\calF$ could be a set of lotteries and $w(S)$ is the (random) payoff of lottery $S$.
We model the risk awareness of a user by a utility function $\uti: \mathbb{R} \rightarrow \mathbb{R}$:
the user obtains $\uti(x)$ units of utility
if the outcome is $x$, i.e., $w(S)=x$.
%Note that the obtained utility of a solution $S$, i.e., $\uti(w(S))$, is also a random variable
%since $S$ may have different weights in different instantiations.
Formally, the {\em expected utility maximization principle} is simply stated as follows:
the most desirable solution $S$ is the one that maximizes
the expected utility, i.e.,
$$
S=\arg\max_{S'\in \calF}\Exp[\uti(w(S'))]
$$
Indeed, expected utility theory is a branch of utility theory
that studies ``betting preferences" of people with regard to uncertain outcomes (gambles).
The theory was formally initiated by von Neumann and Morgenstern
in 1940s \cite{VonNeumann1947,Fishburn70}
\footnote{
Daniel Bernoulli also developed many ideas, such as risk aversion and utility, in his work
{\em Specimen theoriae novae de mensura sortis (Exposition of a New Theory on the Measurement of Risk)}
in 1738 \cite{Bernoulli1738}.
%but the work had undergone unnoticed for a long time.
}
who gave an axiomatization of the theory
(known as {\em von Neumann-Morgenstern expected utility theorem}).
The theory is well known to be versatile in expressing diverse risk-averse or risk-prone behaviors.
%More specifically, the \eum\ formulation can be deduced mathematically
%from three very natural assumptions
%on user preference over query answers, which we call {\em axioms}.

In this paper, we focus on the following broad class of
combinatorial optimization problems.
The deterministic version of the problem has the following form:
we are given a ground set of elements $U=\{e_{i}\}_{i=1...n}$; each element $e$ is associated with a weight $w_{e}$;
each feasible solution is a subset of the elements satisfying some property.
Let $\calF$ denote the set of feasible solutions.
The objective for the deterministic problem is to find a feasible solution $S$ with the minimum (or maximum) total weight $w(S)=\sum_{e\in S}w_{e}$.
We can see that many combinatorial problems such as shortest path, minimum spanning tree, and minimum weight matching
belong to this class.
In the stochastic version of the problem,
the weight $w_{e}$ of each element $e$ is a nonnegative random variable.
We assume all $w_{e}$s are independent of each other.
We use $p_{e}(.)$ to denote the probability density function for $w_{e}$ (or probability mass function in the discrete case).
We are also given a utility function $\uti:\mathbb{R}^{+}\rightarrow \mathbb{R}^{+} $
which maps a weight value to a utility value.
By the expected utility maximization principle, our goal here is to find a feasible solution $S\in \calF$
that maximizes the expected utility, i.e., $\Exp[\mu(w(S))]$.
We call this problem the {\em expected utility maximization ($\eum$)} problem.

Let us use the following toy example to illustrate the rationale behind $\eum$.
There is a graph with two nodes $s$ and $t$ and two parallel links $e_{1}$ and $e_{2}$.
Edge $e_{1}$ has a fixed length $1$ while the length of $e_{2}$ is $0.9$ with probability $0.9$ and $1.9$ with probability
$0.1$ (the expected value is also $1$).
We want to choose one edge to connect $s$ and $t$.
It is not hard to imagine that a risk-averse user would choose $e_{1}$ since $e_{2}$ may turn out
to be a much larger value with a nontrivial probability. We can capture such behavior using the utility function (\ref{ex:utility1})
(defined in Section~\ref{subsec:contribution}). Similarly, we can capture the risk-prone behavior by using, for example,
the utility function $\uti(x)=\frac{1}{x+1}$. It is easy to see that $e_{1}$ maximizes the expected utility in the former case,
and $e_{2}$ in the latter.

\subsection{Our Contributions}
\label{subsec:contribution}

In order to state our contribution, we first recall some standard terminologies.
A polynomial time approximation scheme (PTAS) is an algorithm which takes an instance of a minimization problem
(a maximization problem resp.)
and a parameter $\epsilon>0$ and produces a solution whose cost  is at most $(1+\epsilon)\opt$ (at least $(1-\epsilon)\opt$ resp.),
and the running time, for any fixed constant $\epsilon>0$, is polynomial in the size of the input, where $\opt$
is the optimal solution.
We use $\probA$ to denote the deterministic combinatorial optimization problem under consideration,
and $\eum(\probA)$ the corresponding expected utility maximization 
problem.
The {\em exact version} of $\probA$ asks the question whether
there is a feasible solution of $\probA$ with weight exactly equal to a given integer $K$.
We say an algorithm runs in {\em pseudopolynomial time} for the exact version of $\probA$
if the running time is polynomial in $n$ and $K$.
For many combinatorial problems, a pseudopolynomial algorithm for the exact version
is known. Examples include shortest path, spanning tree, matching and knapsack.

We discuss in detail our results for $\eum$.
We start with a theorem which underpins our other results. 
We denote $\umax=\sup_{x\geq 0} |\uti(x)|$.
We say a function $\tmu(x)$ is an {\em $\epsilon$-approximation} of 
$\uti(x)$ if $|\tmu(x)-\uti(x)|\leq \epsilon \umax$ for all $x\geq0$.
We allow $\tmu(x)$ to be a complex function
and $|\tmu(x)|$ denote its absolute value
(as we will see shortly, $\tmu(x)$ takes the form of a finite sum of complex exponentials). 
\footnote{
	In practice, the user only needs to specify a real 
	utility function $\mu(x)$.
	The complex function $\tmu(x)$ is used to approximate 
	the real utility function $\mu(x)$.
	}

\begin{theorem}
\label{thm:mainthm}
Assume that there is a pseudopolynomial algorithm for the exact version of $\probA$.
Further assume that given any constant $\epsilon > 0$,
we can find an $\epsilon$-approximation of the utility function $\uti$ as $\tmu(x)=\sum_{k=1}^{L}c_{k}\phi_k^x$,
where 
%$L$ is a constant (which may depend on $\epsilon$), 
$|\phi_{k}|\leq 1$ for all $1\leq k\leq L$ ($\phi_k$ may be complex numbers).
Let $\tau=\max_k |c_k|/\umax $.
Then, there is an algorithm that runs in time $(n\tau/\epsilon)^{O(L)}$ and finds a feasible solution $S\in \calF$ such that
$$
\Exp[\uti(w(S))]\geq \opt-\epsilon \umax.
$$
\end{theorem}

From the above theorem, we can see that if we can $\epsilon$-approximate the utility function $\uti$
by a short sum of exponentials, we can obtain good approximation algorithms for $\eum$.
In this paper, we consider three important classes of utility functions.
\begin{enumerate}
\item 
(Class $\classA$) 
Consider the deterministic problem which
$\probA$ is a minimization problem, i.e.,
we would like the cost of our solution to be as small as possible.
In the corresponding stochastic version of $\probA$,
we assume that any utility function $\uti(x)\in\classA$ is nonnegative, bounded, continuous and $\lim_{x\rightarrow \infty}\uti(x) =0$
(please see below for the detailed technical assumptions). 
The last condition captures
the fact that if the cost of solution is too large, it becomes almost useless for us.
We denote the class of such utility functions by $\classA$.
\item 
(Class $\classB$) Consider 
the deterministic problem $\probA$ which is a maximization problem.
In other words, we want the value of our solution to be as large as possible.
In the corresponding stochastic version of $\probA$,
we assume that $\uti(x)$ is a nonnegative, monotone nondecreasing and concave function.
Note that concave functions are extensively used to model risk-averse behaviors in the economics literature.
We denote the class of such utility functions by $\classB$.
\item 
(Class $\classC$) 
Consider a
deterministic maximization problem $\probA$.
In the corresponding stochastic version of $\probA$,
we assume that $\uti(x)$ is a nonnegative, differentiable and increasing function.
We assume $\frac{\d}{\d x}\uti(x)\in [\lb, \ub]$ for $x\geq 0$, where $\lb,\ub>0$ are constants.
We denote the class of such utility functions by $\classC$.
We can see that functions in $\classC$ can be concave, nonconcave, convex or nonconvex.
Convex functions are often associated with risk-prone behaviors,
while nonconvex-nonconcave utility functions have been also observed in 
various settings \cite{kahneman1979prospect, fazel2005network}.
\end{enumerate}

Now, we state in details our assumptions and results for the above classes of utility functions.

\topic{Class $\classA$:}
%In this class, we assume the utility function $\uti$ is specified as a part of the problem
%(i.e., $\uti$ does not depend on the size of the input instance).
Since $\uti$ is bounded, by scaling, without loss of generality, 
we can assume $\umax=1$.
Since $\lim_{x\rightarrow \infty} \uti(x)=0$, for any $\epsilon>0$, 
there exist a point $T_{\epsilon}$ such that
$\uti(x)\leq \epsilon$ for $x>T_{\epsilon}$.
We assume that $T_\epsilon$ is a constant only depending on $\epsilon$.
We further assume that the continuous utility function $\uti$ satisfies
the  $\alpha$-H\"{o}lder condition, i.e., $| \mu(x) - \mu(y) | \leq C \, |x - y|^{\alpha}$, for some constant $C$
and some constant $\alpha>1/2$.
We say $f$ is {\em $C$-Lipschitz} if $f$ satisfies 1-H\"{o}lder condition with coefficient $C$.
Under the above conditions, we can prove Theorem~\ref{thm:mainholder}.

\begin{theorem}
\label{thm:mainholder}
If the utility function $\uti$ belongs to $\classA$,
then, for any $\epsilon>0$, we can obtain a function $\tmu(x)=\sum_{k=1}^L c_k \phi^x_k$,
such that 
$
|\tmu(x)-\uti(x)|\leq \epsilon,
$ for $x\geq 0$,where
$$
L=
2^{O(T_\epsilon)}\poly(1/\epsilon), \quad |c_k|\leq 2^{O(T_\epsilon)}\poly(1/\epsilon), \quad |\phi_k|\leq 1 \text{ for all }k=1,\ldots, L,
$$
\end{theorem}

To show the above theorem, we use the Fourier series technique.
However, the technique cannot be used directly since it works only for periodic functions with
bounded periodicities. In order to get a good approximation for $x\in [0, \infty)$,
we leverage the fact that $\lim_{x\rightarrow \infty}\uti(x) =0$ and develop a general framework
that uses the Fourier series decomposition as a subroutine.
%Generally speaking, such an approximation is only possible if the function is ``well behaved'', i.e.,
%it satisfies some continuity or smoothness conditions.
%That is why we need the H\"{o}lder condition.

%We note that we assume in the above theorem that the utility function $\uti$ is a part of the problem
%but not a part of the particular input instance.

\begin{figure}[t]
\begin{center}
\includegraphics[width=0.8\linewidth, height=3cm]{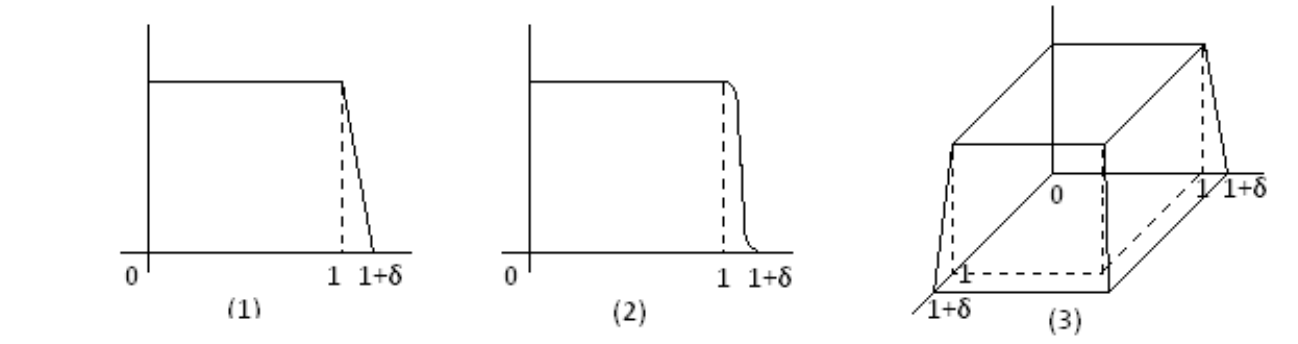}
\end{center}
\vspace{-0.9cm}
\caption{(1) The utility function $\tchi(x)$, a continuous variant of the threshold function $\chi(x)$; (2) A smoother variant of $\chi(x)$;
(3) The utility function $\tchi_{2}(x)$, a continuous variant of the 2-d threshold function $\chi_{2}(x)$.}
\vspace{-0.4cm}
\label{fig_utility}
\end{figure}

Now, we state some implications of the above results.
Consider the utility function
\begin{align}
\label{ex:utility1}
\tchi(x)= \left\{
             \begin{array}{ll}
		1 & x\in [0,1] \\
		 -\frac{x}{\delta}+\frac{1}{\delta}+1  & x\in[1, 1+\delta] \\
		 0 & x>1+\delta
	\end{array}
           \right.
\end{align}
where $\delta>0$ is a small constant (See Figure~\ref{fig_utility}(1)).
It is easy to verify that $\tchi$ is
$1/\delta$-Lipschitz and $T_\epsilon=2$ for any $\delta<1$.
Therefore, Theorem~\ref{thm:mainholder} is applicable.
This example is interesting since $\tchi$ can be viewed as a continuous variant
of the threshold function
\begin{align}
\chi(x)= \left\{
             \begin{array}{ll}
		1 & x\in [0,1] \\
		 0 & x>1
	\end{array}
           \right.,
\end{align}
 for which maximizing the expected utility is equivalent to
maximizing $\Prob(w(S) \leq 1)$.
We first note that
even the problem of computing the probability $\Prob(w(S) \leq 1)$ exactly for a fixed set $S$
is \#P-hard \cite{kleinberg1997allocating} and there is an FPTAS~\cite{li2014fully}.
Designing approximation algorithms for such special case has been considered
several times in the literature for various combinatorial problems including stochastic shortest path \cite{nikolova2006stochastic},
stochastic spanning tree~\cite{ishii1981stochastic,geetha1993stochastic},
stochastic knapsack~\cite{goel1999stochastic} and some other stochastic problems~\cite{agrawal2008stochastic,nikolova2010approximation}.

It is interesting to compare our result with
the result for the stochastic shortest path problem considered by
Nikolova et al.~\cite{nikolova2006stochastic, nikolova2010approximation}.
In \cite{nikolova2006stochastic}, they show that there is an exact $O(n^{\log n})$ time algorithm for maximizing
the probability that the length of the path is at most 1, i.e., $\Prob(w(S) \leq 1)$,
assuming all edges are normally distributed and there is a path with its mean at most $1$.
Later, Nikolova~\cite{nikolova2010approximation} extends the result to an FPTAS
for any problem under the same assumptions, if the deterministic version of the problem
has a polynomial time exact algorithm.
We can see that under such assumptions, the optimal probability is at least 
$1/2$.~\footnote{
The sum of multiple Gaussians is also a Gaussian. Hence, if we assume the mean of the length of a path (which is a Gaussian)
is at most $1$,
the probability that the length of the path is at most 1 is at least $1/2$.
}
Therefore, provided the same assumption and further assuming that $\Prob(w_{e}<0)$ is miniscule,
\footnote{
%So far, we have only focused on approximating $\uti(x)$ for $x\geq 0$ and have not provided any error guarantee
%for $x<0$.
Our technique can only handle distributions with positive supports.
Thus, we have to assume that the probability that a negative value appears is miniscule (e.g., less than $1/n^2$)
and can be safely ignored
(because the probability that there is  any realized negative value is at most $1/n$).
}
our algorithm is a PTAS for maximizing $\Exp[\tchi(w(S))]$, which can be thought as a variant of the problem of maximizing
$\Exp[\chi(w(S))]$.
Indeed, we can translate this result to a bi-criterion approximation result of the following form:
for any fixed constants $\delta,\epsilon>0$, we can find in polynomial time a solution $S$ such that
$$
\Prob(w(S)\leq 1+\delta) \geq (1-\epsilon) \Prob(w(S^{*})\leq 1).
$$
where $S^{*}$ is the optimal solution (Corollary~\ref{cor:thres}).
We note that such a bi-criterion approximation was only known for exponentially distributed edges before~\cite{nikolova2006stochastic}.

Let us consider another application of our results to the stochastic knapsack problem defined in \cite{goel1999stochastic}.
Given a set $U$ of independent random variables $\{x_{1},\ldots, x_{n}\}$, with associated profits $\{v_{1},\ldots, v_{n}\}$
and an overflow probability $\gamma$, we are asked to pick a subset $S$ of $U$ such that
$$
\Prob\left(\sum_{i\in S}x_{i}\geq 1\right)\leq \gamma
$$
and the total profit $\sum_{i\in S}v_{i}$ is maximized.
Goel and Indyk \cite{goel1999stochastic} showed that, for any constant $\epsilon>0$, there is a polynomial time algorithm that can find
a solution $S$ with the profit as least the optimum and $\Prob(\sum_{i\in S}x_{i}\geq 1+\epsilon)\leq \gamma(1+\epsilon)$
for exponentially distributed variables. They also gave a quasi-polynomial time approximation scheme
for Bernoulli distributed random variables.
Quite recently, in parallel with our work, Bhalgat et al.~\cite{bhalgat10} obtained
the same result for arbitrary distributions under the assumption that $\gamma=\Theta(1)$.
Their technique is based on discretizing the distributions and is quite involved.
\footnote{
They also obtain several results related to stochastic knapsack,
using the their discretization technique, together with other ideas.
Notably, they obtained a bi-criteria PTAS for 
the {\em adaptive stochastic knapsack} problem~\cite{bhalgat10}.
}
Our result, applied to stochastic knapsack, matches that of Bhalgat et al. under the same assumption.
Our algorithm is arguably simpler and has a much better running time (Theorem~\ref{cor:knapsack}).
%Despite a little loss in the approximation guarantees in some cases, our technique
%can be applied to almost all positive probability distributions, and a much richer class of utility functions.

Equally importantly, we can extend our basic approximation scheme to handle generalizations
such as multiple utility functions and multidimensional weights.
Interesting applications of these extensions include various generalizations of stochastic knapsack, such as
{\em stochastic multiple knapsack} (Theorem~\ref{thm:multiknapsack})
%\cite{lu2010approximations}
and {\em stochastic multidimensional knapsack (stochastic packing)}
%\cite{dean2005adaptivity,bhalgat10,lu2010approximations}
(Theorem~\ref{thm:multidimknapsack}).
%which the previous techniques cannot be easily extended to handle.

%\topic{Outline}:
%In Section~\ref{sec:algorithm}, we present our approximation scheme by proving
%Theorem~\ref{thm:mainthm} and Theorem~\ref{thm:mainholder}.
%Then, in Section~\ref{sec:app}, we consider several applications, including the stochastic path problem and the
%stochastic knapsack problem we mentioned in the introduction.
%In Section~\ref{sec:extension},
%we discuss some extensions of our basic scheme, with interesting applications to multiple knapsack
%and multidimensional knapsack.

\topic{Class $\classB$:}
We assume the utility function $\uti : [0,\infty)\rightarrow [0,\infty)$ is a concave, monotone nondecreasing function.
This is a popular class of utility functions used to model risk-averse behaviors.
For this class of utility functions, we can obtain the following theorem in Section~\ref{sec:concave}.

\begin{theorem}
\label{thm:mainthmconcave}
Assume the utility function $\uti$ belongs to $\classB$,
and there is a pseudopolynomial algorithm for the exact version of $\probA$.
Then, there is  a PTAS for $\eum(\probA)$.
\end{theorem}

Theorem~\ref{thm:mainthmconcave} is also obtained by an application of Theorem~\ref{thm:mainthm}.
However, instead of approximating the original utility function $\uti$ using a short sum of exponentials, 
which may not be possible in general,
\footnote{
Suppose $\uti$ is a finite sum of exponentials.
When $x$ approaches to infinity, either $|\uti(x)|$ is periodic, or approaches to infinity,  or approaches to 0.
}
we try approximate a truncated version of $\uti$.
%To handle the contribution of very large values (larger than the truncation point), we need 
%some additional ideas.
Theorem~\ref{thm:mainthmconcave} recovers the recent result of \cite{bhalgat2014utility}.
Finally, we remark the technique of \cite{bhalgat2014utility} strongly relies on the concavity of $\uti$,
and seems difficult to extend to handle non-concave utility functions.

\topic{Class $\classC$:}
We assume the utility function $\uti : [0,\infty)\rightarrow [0,\infty)$ is a 
positive, differentiable, and increasing function.
For technical reasons, 
we assume $\frac{\d}{\d x}\uti(x)\in [\lb, \ub]$ for some constants $\lb,\ub>0$ and all $x\geq 0$.
%In other words, $\frac{\d}{\d x}\uti(x)$ does not go to infinity as $x$ approaches to infinity.
%We believe that the assumption is met in most applications.
For this class of utility functions, we can obtain the following theorem in Section~\ref{sec:classC}.

\begin{theorem}
\label{thm:mainthmconvex}
Assume the utility function $\uti$ belongs to $\classC$,
and there is a pseudopolynomial algorithm for the exact version of $\probA$.
Then, there is  a PTAS for $\eum(\probA)$.
\end{theorem}

Again, it may not be possible in general to approximate such an increasing function using 
a finite sum of exponentials.
Instead, we approximate a truncated version of $\uti$, similar to the concave case.
We note this is the first such result for general increasing utility functions.
Removing the bounded derivative assumption remains an interesting open problem.

We believe our technique can be used to handle other classes of utility functions
or other stochastic optimization problems.

\subsection{Related Work}

In recent years stochastic optimization problems
have drawn much attention from the computer science community and
stochastic versions of many classical combinatorial optimization problems
have been studied.
In particular, a significant portion of the efforts has been devoted to
the two-stage stochastic optimization problem.  In such a problem, in a first
stage, we are given probabilistic information about the input but the cost of selecting an
item is low; in a second stage, the actual input is revealed but the
costs for the elements are higher. We are asked to make decision after each stage
and minimize the expected cost.
Some general techniques have been developed \cite{gupta2004boosted,shmoys2006approximation}.
We refer interested reader to \cite{swamy2006approximation} for a comprehensive survey.
Another widely studied type of problems considers designing {\em adaptive} probing policies for stochastic
optimization problems where the existence or the exact weight of an element can be only known upon
a probe. There is typically a budget for the number of probes (see e.g., \cite{guha2008adaptive,cheng2008cleaning}),
or we require an irrevocable decision whether to include the probed element in the solution right after the probe
(see e.g., \cite{dean2008approximating, chen2009approximating, bansal2010lp,dean2005adaptivity,bhalgat10}).
However, most of those works focus on optimizing the expected value of the solution.
There is also sporadic work on optimizing the overflow probability or some other objectives subject to
the overflow probability constraints.
In particular, a few recent works have explicitly motivated such objectives as a way to capture
the risk-averse type of behaviors \cite{agrawal2008stochastic,nikolova2010approximation,swamy2008algorithms}.
Besides those works, there has been little work on optimizing
more general utility functions for combinatorial stochastic optimization problems
from an approximation algorithms perspective.

The most related work to ours is the stochastic shortest path problem (\STOCHSP),
which was also the initial motivation for this work.
The problem has been studied extensively for several special utility functions
in operation research community.
Sigal et al.~\cite{sigal1980stochastic} studied the problem of finding the path with greatest probability of being the shortest path.
Loui~\cite{loui1983optimal} showed that \STOCHSP\ reduces to the shortest path (and sometimes longest path) problem
if the utility function is linear or exponential.
Nikolova et al.~\cite{nikolova2006optimal} identified more specific utility and distribution combinations that
can be solved optimally in polynomial time.
Much work considered dealing with more general utility functions, such as piecewise linear or concave functions,
e.g., \cite{murthy1997exact,murthy1998stochastic,bard1991arc}. However, these algorithms are essentially heuristics and
the worst case running times are still exponential.
Nikolova et al.~\cite{nikolova2006stochastic} studied the problem of maximizing the probability that the length of the chosen path is less than
some given parameter.
Besides the result we mentioned before, they also considered Poisson and exponential distributions.
%Note that this problem is also a special case of \STOCHSP\ with $\uti(x)=1$ if $x\leq t$ and $\uti(x)=0$ otherwise.
Despite much effort on this problem, no algorithm
is known to run in polynomial time
and have provable performance guarantees, especially for more general utility functions or more general distributions.
This is perhaps because the hardness comes from different sources, as also noted in \cite{nikolova2006stochastic}:
the shortest path selection per se is combinatorial; the distribution of the length of a path is
the convolution of the distributions of its edges; the objective is nonlinear; to list a few.

Kleinberg et al. \cite{kleinberg1997allocating} first considered the stochastic knapsack problem
with Bernoulli-type distributions and provided a polynomial-time $O(\log 1/\gamma)$
approximation where $\gamma$ is the given overflow probability.
In the same paper, they noticed that even computing the 
overflow probability for a fixed set of items is \#P-hard.
Li and Shi \cite{li2014fully} provided an FPTAS for computing 
the overflow probability (or the threshold probability for 
a sum of random variables).
For item sizes with exponential distributions, Goel and Indyk \cite{goel1999stochastic} provided a bi-criterion PTAS, and for
Bernoulli-distributed items they gave a quasi-polynomial approximation scheme.
Chekuri and Khanna \cite{chekuri2000ptas} pointed out that a PTAS can be obtained for the Bernoulli case using their techniques
for the multiple knapsack problem.
Goyal and Ravi \cite{goyal2009chance} showed a PTAS for Gaussian distributed sizes.
Bhalgat, Goel and Khanna \cite{bhalgat10} developed a general discretizaton technique
that reduces the distributions to a small number of equivalent classes which we can efficiently enumerate
for both adaptive and nonadaptive versions of stochastic knapsack.
They used this technique to obtain improved results for several variants of stochastic knapsack,
notably a bi-criterion PTAS for the {\em adaptive version} of the problem.
In a recent work ~\cite{li2013stoch}, 
the bi-criterion PTAS was further simplified and extended to the more general case where
the profit and size of an item can be correlated and an item can be cancelled in the middle.
Dean at al.~\cite{dean2008approximating} gave the first constant approximation for
the adaptive version of stochastic knapsack.
The adaptive version of stochastic multidimensional knapsack
(or equivalently stochastic packing) has been considered in \cite{dean2005adaptivity,bhalgat10} where
constant approximations and a bi-criterion PTAS were developed.

This work is partially inspired by our prior work on top-$k$ and other queries over probabilistic datasets~\cite{pods09_LD,li09unified}.
In fact, we can show that both the {\em consensus answers} proposed in \cite{pods09_LD}  and the {\em parameterized ranking functions}
proposed in \cite{li09unified}  follow the expected utility maximization principle where the utility functions
are materialized as distance metrics for the former and the weight functions for the latter.
Our technique for approximating the utility functions is also similar to the approximation scheme used in \cite{li09unified} in spirit.
However, no performance guarantees are provided in that work.

Recently, Li and Yuan \cite{li2013stoch} showed that an additive PTAS for $\uti\in \classA$ can be obtained
using a completely different approach, called the Poisson approximation technique.
%Recall in our algorithm, for each element $e$, we use a constant dimension vector
%$\{\Exp[\phi_{i}^{w_{e}}\}_i$ as the ``feature" of the distribution of $w_e$
%in the final deterministic optimization step.
Roughly speaking, the Poisson approximation technique allows us to extract
a constant (depending on $\epsilon$) number of features
from each distribution (called signature in \cite{li2013stoch}) and reduce the stochastic problem to
a constant dimensional deterministic optimization problem, which is similar to the algorithm presented in this paper.
We suspect that besides this superficial similarity,
there may be deeper connections between two different techniques.

There is a large volume of work on approximating functions using short exponential sums over a bounded domain, e.g.,
\cite{Osborne95modified,beylkin2002generalized,beylkin2005approximation,beylkin2010approximation}.
Some works also consider using linear combinations of Gaussians or other {\em kernels} to approximate
functions with finite support over the entire real axis
$(-\infty, +\infty)$ \cite{cheney2000}.
This is however impossible using exponentials since $\alpha^{x}$ is either periodic (if $|\alpha|=1$)
or approaches to infinity when $x\rightarrow +\infty$ or $x\rightarrow -\infty$ (if $|\alpha|\ne 1$).

\section{An Overview of Our Approach}

The high level idea of our approach is very simple and consists of the following steps:

\begin{enumerate}
\item We first observe that the problem is easy if the utility function is an exponential function.
Specifically, consider the exponential utility function $\uti(x)=\phi^{x}$ for some complex number $\phi\in \mathbb{C}$.
Fix an arbitrary solution $S$. Due to independence of the elements, we can see that
\begin{align*}
\Exp[\phi^{w(S)}]=\Exp\bigl[\phi^{\sum_{e\in S}w_e}\bigr]=\Exp\Bigl[\,\prod_{e\in S} \phi^{w_{e}}\Bigr]
=\prod_{e\in S}\Exp[\phi^{w_{e}}]
%\int_{0}^{+\infty}\phi^{x}\Prob(\sum_{e\in Q}w_e=x)dx
\end{align*}
Taking log on both sides, we get $\log\Exp[\phi^{w(S)}]=\sum_{e\in S}\log\Exp[\phi^{w_{e}}].$
If $\phi$ is a positive real number and  $\Exp[\phi^{w_{e}}]\leq 1$ (or equivalently, $-\log \Exp[\phi^{w_{e}}]\geq 0$), this
reduces to the deterministic optimization problem.
\item
In light of the above observation, we $\epsilon$-approximate the utility function $\uti(x)$ by a short exponential sum, i.e.,
$\sum_{i=1}^L c_{i} \phi_i^x$ with $L$ being a small value (only depending on $\epsilon$), where
($c_{i}$ and $\phi_{i}$ may be complex numbers.
Hence, $\Exp[\uti(w(S))]$ can be approximated by $\sum_{i=1}^L c_{i} \Exp[\phi_{i}^{w(S)}]$.
\item
Consider the following multi-criterion version of the problem with
$L$ objectives $\{\Exp[\phi_{i}^{w(S)}]\}_{i=1,\ldots,L}$:
given $L$ complex numbers $v_{1},\ldots, v_{L}$,
we want to find a solution $S$ such that $\Exp[\phi_{i}^{w(S)}]\approx v_{i}$ for $i=1,\ldots, L$.
We achieve this by utilizing the pseudopolynomial time algorithm for the exact version of the problem.
We argue that we only need to consider a polynomial number of
$v_{1},\ldots,v_{L}$ combinations (which we call {\em configurations}) to find out the approximate
optimum.
\end{enumerate}

In Section~\ref{sec:algorithm}, we show how to solve the multi-criterion problem provided that
a short exponential sum approximation of $\uti$ is given.
In particular, we prove Theorem~\ref{thm:mainthm}.
%In Section~\ref{subsec:computingE}, we discuss how to compute exactly or approximate $\Exp[\phi_{i}^{w_{e}}]$
%for different distributions.
Then, we show how to approximate $\uti\in \classA$ by a short exponential sum by
proving Theorem~\ref{thm:mainholder} in Section~\ref{subsec:approximateutility} and Section~\ref{subsec:fourier}.
For $\uti\in \classB$ or $\uti\in \classC$, 
it may not be possible to approximate $\uti$ directly by an exponential sum, and some additional ideas are required.
The details are provided in Section~\ref{sec:concave} and Section~\ref{sec:classC}.

We still need to show how to compute $\Exp[\phi^{w_{e}}]$.
If $w_{e}$ is a discrete random variable with a polynomial size support,
we can easily compute $\Exp[\phi^{w_{e}}]$ in polynomial time.
If $w_{e}$ has an infinite discrete or continuous support,
we can not compute $\Exp[\phi^{w_{e}}]$ directly and need to approximate it.
We briefly discuss this issue and its implications in Appendix~\ref{app:approxe}.

\section{Proof of Theorem~\ref{thm:mainthm}}
\label{sec:algorithm}

Now, we prove Theorem~\ref{thm:mainthm}.
We start with some notations.
We use $|c|$ and $\arg(c)$ to denote the absolute value and the argument of the complex number $c\in \mathbb{C}$, respectively.
In other words, $c=|c|\cdot(\cos(\arg(c))+i \sin(\arg(c))) =|c| e^{i\arg(c)}$.
We always require $\arg(c)\in [0,2\pi)$ for any $c\in \mathbb{C}$.
Recall that we say the exponential sum $\sum_{i=1}^L c_{i} \phi_i^x$ is an
{\em $\epsilon$-approximation} for $\uti(x)$ if the following holds:
$$
|\uti(x)-\sum_{i=1}^L c_{i} \phi_i^x |\leq \epsilon \umax \quad\text{ for } x\geq 0.
$$

%In this section, we show that if there exist $2L$ complex numbers $c_1, \phi_1, \ldots, c_L, \phi_L$ such that
%$\sum_{i=1}^L c_1 \phi_i^x$ $\epsilon$-approximates $\uti(x)$ and $L$ is a constant,
%we can find an approximation for \STOCH\ with absolute error $O(\epsilon)$ in polynomial time.
%In the next section, we show that if the utility function satisfies certain smoothness conditions, we can find a short exponential sum
%that $\epsilon$-approximates the utility function for any $\epsilon>0$.

%Suppose we can show the following:
%\begin{enumerate}
%\item
%For any $\epsilon>0$ and $L=L(N, \epsilon)$, $\frac{1}{N}\sum_{|k|>L}|\hmu_{k}| \leq \epsilon$. (\red{We need to prove this})
%\item

We first show that if the utility function can be decomposed exactly into a short exponential sum,
we can approximate the optimal expected utility well.

\begin{theorem}
\label{thm:algofortmu}
Assume that $\tmu(x)=\sum_{k=1}^{L}c_{k}\phi_k^x$ is the utility function
where $|\phi_{k}|\leq 1$ for $1\leq k\leq L$.
Let $\tau=\max_k |c_k|/\umax $.
We also assume that there is a pseudopolynomial algorithm for the exact version of $\probA$.
Then, for any $\epsilon>0$, there is an algorithm that runs in time $(n/\epsilon)^{O(L)}$
and finds a solution $S$ such that
$$|\Exp[\tmu(\len(S))]-\Exp[\tmu(\len(\tS))]|<\epsilon \umax,$$
where $\tS=\arg\max_{S'}|\Exp[\tmu(\len(S'))|$.
\end{theorem}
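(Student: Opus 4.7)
The plan is to exploit the multiplicative structure of exponentials to reduce to a multi-dimensional version of the exact problem of $A$. By linearity of expectation and independence,
$$
\Exp[\tmu(\len(S))] \;=\; \sum_{k=1}^{L} c_k \prod_{e\in S}\Exp[\phi_k^{\len_e}].
$$
Setting $\alpha_k(e):=\Exp[\phi_k^{\len_e}]$, we have $|\alpha_k(e)|\le 1$ since $|\phi_k|\le 1$ and $\len_e\ge 0$. The task reduces to choosing $S\in\calF$ so that the $L$-dimensional \emph{profile} $v(S):=\bigl(\prod_{e\in S}\alpha_k(e)\bigr)_{k=1}^L$, lying in the $L$-fold product of closed unit disks, makes $\sum_k c_k v_k(S)$ close (as a complex number) to $\sum_k c_k v_k(\tS)$.

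I would turn products into sums via complex logarithms: write $\log\alpha_k(e)=-r_k(e)+\uimag\,\theta_k(e)$, with $r_k(e)=-\log|\alpha_k(e)|\ge 0$ and $\theta_k(e)=\arg\alpha_k(e)\in[0,2\pi)$. A truncation step is crucial: if $\sum_{e\in S}r_k(e)>R:=\ln(nL\max_k|c_k|/\epsilon)$, then $|c_k\prod_e\alpha_k(e)|<\epsilon/(nL)$, contributing negligibly, so I cap each $r_k(e)$ at $R$. Then round both $r_k(e)$ and $\theta_k(e)$ to the nearest multiple of $\eta:=\epsilon/(Kn)$ for a constant $K$ to be fixed. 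Each element now carries a $2L$-tuple of non-negative integer coordinates, each bounded by $M=\lceil R/\eta\rceil=\poly(n/\epsilon)$.

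Next, the feasibility question ``is there an $S\in\calF$ realizing a prescribed coordinate sum vector $V\in\{0,\dots,nM\}^{2L}$?'' is a multi-dimensional exact problem, which I would solve by invoking the hypothesized one-dimensional pseudopolynomial exact-version algorithm on the standard coordinate-packing encoding $\hat\len(e):=\sum_{j=1}^{2L}w_j(e)(nM+1)^{j-1}$, with target $\sum_j V_j(nM+1)^{j-1}$; one such call runs in $\poly\bigl(n,(nM)^{2L}\bigr)=(n/\epsilon)^{O(L)}$ time. Enumerating all $(nM+1)^{2L}=(n/\epsilon)^{O(L)}$ target configurations, reconstructing the approximate profile $\tilde v_k=\exp(-V^r_k\eta+\uimag V^\theta_k\eta)$ from each feasible one, and returning the $S$ whose $\sum_k c_k\tilde v_k$ is largest in modulus, gives the claimed running time.

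The main obstacle is the error analysis. I must verify (i) truncation beyond $R$ loses at most $O(\epsilon)$ from the objective; (ii) $\eta$-rounding accumulates to $O(n\eta)=O(\epsilon/K)$ per coordinate of $\sum_e\log\alpha_k(e)$, and using $|e^a-e^b|\le|e^b|\cdot|a-b|\,e^{|a-b|}$ together with $|e^b|\le 1$ (non-positive real part) and $|a-b|\ll 1$, each profile coordinate $v_k$ shifts by $O(\epsilon/K)$; (iii) summed over $k$ and weighted by $|c_k|$ (whose aggregate is $O(1)$ since $\tmu$ is bounded), the total complex error is $O(L\cdot\epsilon/K)$, below $\epsilon$ once $K$ is a suitable constant depending on $L$ and $\sum_k|c_k|$. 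Since the profile of $\tS$ matches some enumerated configuration up to these controlled errors, the output $S$ satisfies $|\Exp[\tmu(\len(S))]-\Exp[\tmu(\len(\tS))]|<\epsilon$, as required.
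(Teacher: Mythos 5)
Your proposal follows essentially the same scaling-and-rounding route as the paper: decompose $\Exp[\tmu(\len(S))]$ via independence into per-term products $\prod_{e}\Exp[\phi_k^{w_e}]$, round $-\ln|\cdot|$ and $\arg(\cdot)$ to a polynomial grid, cap the magnitude coordinate using the fact that once the total exceeds $\approx\ln(nL/\epsilon)$ the contribution is below $\epsilon/(nL)$, enumerate all $(n/\epsilon)^{O(L)}$ configurations, and decide realizability of each via the pseudopolynomial exact algorithm using a coordinate-packing encoding of the $2L$-dimensional target. One small slip in the error analysis: boundedness of $\tmu$ does \emph{not} by itself imply $\sum_k|c_k|=O(1)$ (cancellation is possible), but the paper tacitly relies on the same assumption that each $|c_k|$ is an $O(1)$ constant, which holds in the intended setting where the $c_k$ come from a Fourier decomposition with absolutely convergent coefficients.
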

%\end{enumerate}

We use the scaling and rounding technique that has been used often in
multi-criterion optimization problems (e.g., \cite{safer04fully,papadimitriou2000approximability}).
Since our objective function is not additive and not monotone,
the general results for multi-criterion optimization
\cite{papadimitriou2000approximability,mittal2008general, safer04fully, ackermann2005decision}
do not directly apply here. We provide the details of the algorithm here.
%Suppose $\max_k |c_k|\leq (1/\epsilon)^d\umax$ for constant $d>0$.
We use the following parameters:
$$
\gamma=\frac{\epsilon}{Ln\tau},\quad
J=\max\left(\left\lceil\frac{-\ln (\epsilon/L\tau)n}{\gamma}\right\rceil, \left\lceil\frac{2\pi n}{\gamma}\right\rceil\right).
$$
Let $\bV$
be the set of all $2L$-dimensional integer vectors of the form
$\bfv=\langle x_1,y_{1},\ldots, x_{L}, y_L \rangle$
where $1\leq x_{i}\leq J$ and $1\leq y_{i}\leq J$ for $i=1,\ldots, L$.

For each element $e\in U$, we associate it with a $2L$-dimensional integer vector
$$
\feature(e)=\langle \alpha_{1}(e),\beta_{1}(e),\ldots, \alpha_L(e), \beta_L(e) \rangle,
$$
\begin{align}
\label{eq:feature}
\text{ where }\quad
\alpha_{i}(e)= \left\lfloor  \min\left( \frac{-\ln |\Exp[\phi_{i}^{w_{e}}]|}{\gamma} , \frac{J}{n} \right)\right\rfloor
\quad\text{ and }\quad
\beta_{i}(e)=\left\lfloor \frac{\arg(\Exp[\phi_{i}^{w_{e}}])}{\gamma} \right\rfloor.
\end{align}
We call $\feature(e)$ the {\em feature vector} of $e$.
Since $|\phi_{i}|\leq 1$, we can see that $\alpha_{i}(e)\geq 0$ for any $e\in U$.
It is easy to see that $\feature(e)\in \bV$ for all $e\in U$
and $\sum_{e\in S}\feature(e)\in \bV$ for all $S\subseteq U$.
Intuitively, $\alpha_{i}(e)$ and $\beta_{i}(e)$ can be thought as the
scaled and rounded versions of $-\ln |\Exp[\phi_{i}^{w_{e}}]|$ and $\arg(\Exp[\phi_{i}^{w_{e}}])$, respectively.

We maintain $J^{2L}=(n/\epsilon)^{O(L)}$ {\em configurations} (a configuration is  just like a 
state in a dynamic program).
Each configuration $\conf(\bfv)$ is indexed by a $2L$-dimensional vector $\bfv\in \bV$
and takes 0/1 value.
In particular, the value of $\conf(\bfv)$ for each $\bfv\in \bV$ is defined as follows:
For each vector $\bfv\in \bV$,
\begin{enumerate}
\item
$\conf(\bfv)=1$ if and only if there
is a feasible solution $S\in \calF$ such that 
$\sum_{e\in S}\feature(e)=\bfv$.
\item
$\conf(\bfv)=0$ otherwise.
\end{enumerate}
For any $\bfv=\langle x_1, y_1, \ldots, x_L, y_L\rangle$, define the {\em value} of $\bfv$ to be
$$
\val(\bfv) = \sum_{k=1}^{L}c_{k} e^{-x_{k}\gamma+i y_{k}\gamma}.
$$

Lemma~\ref{lm:close} tells us the value of a configuration 
is close to the expected utility of the corresponding solution.
Lemma~\ref{lm:config} shows we can compute those configurations in polynomial time.

\begin{lemma}
\label{lm:close}
Suppose $\tmu(x)=\sum_{k=1}^{L}c_{k}\phi_k^x$,
where $|\phi_k|\leq 1$ for all $k=1,\ldots, L$.
Let $\tau=\max_k |c_k|/\umax $.
For any vector $\bfv=\langle x_1,y_{1},\ldots, x_{L}, y_L \rangle \in \bV$,
$\conf_{v}(\bfv)=1$ if and only if
there is a feasible solution $S\in \calF$ such that
$$
\Bigl|\Exp[\tmu(\len(S))]-\val(\bfv) \Bigr|= \Bigl|\Exp[\tmu(\len(S))]-\sum_{k=1}^{L}c_{k} e^{-x_{k}\gamma+i y_{k}\gamma} \Bigr| \leq O(\epsilon \umax).
$$
\end{lemma}

\begin{proof}
We first notice that
$\Exp[\tmu(\len(S))]=\Exp[\sum_{k=1}^{L} c_k \phi_k^{\len(S)}]=\sum_{k=1}^{L} c_k\Exp[ \phi_k^{\len(S)}].$
Therefore, it suffices to show that
for all $k=1,\ldots, L$,
$|\Exp[\phi_{k}^{\len(S)}]- e^{-x_k\gamma+i y_k\gamma} | \leq O(\frac{\epsilon}{L\tau}).$
Since $\conf(\bfv)=1$, we know that $\sum_{e\in S}\feature(e)=\bfv$ for some feasible solution $S\in\calF$.
In other words, we have $\sum_{e\in S}\alpha_k(e)=x_k$ and $\sum_{e\in S}\beta_k(e)=y_k$ for all $1\leq k\leq L$.

Fix an arbitrary $1\leq k\leq L$.
First, we can see that the arguments of $\Exp[\phi_{k}^{\len(S)}]$ and $e^{-x_k\gamma+i y_k\gamma}$ are close:
\begin{align*}
\left|\arg(\Exp[\phi_{k}^{\len(S)}])-y_k\gamma\right|
& \leq \sum_{e\in S}\left|\arg(\Exp[\phi_{k}^{w_{e}}])-\beta_k(e)\gamma\right|
\leq \sum_{e\in S} \gamma \leq n\gamma = \frac{\epsilon}{L\tau},
\end{align*}
where we use $\arg(c)$ to denote the argument of the complex number $c$.
Now, we show the magnitude of $\Exp[\phi_{k}^{\len(S)}]$ and $e^{-x_k\gamma+i y_k\gamma}$ are also close.
We distinguish two cases:
\begin{enumerate}
\item 
Recall that
$
\alpha_{i}(e)= \left\lfloor  \min\left( \frac{-\ln |\Exp[\phi_{i}^{w_{e}}]|}{\gamma} , \frac{J}{n} \right)\right\rfloor.
$
If there is some $e\in S$ such that $\frac{-\ln |\Exp[\phi_{i}^{w_{e}}]|}{\gamma} > \frac{J}{n} $ 
(which implies that $\alpha_k(e)=\lfloor\frac{J}{n}\rfloor$), we know that
$$
-\ln(|\Exp[\phi_{k}^{\len(S)}]|)=\sum_{e\in S}(-\ln (|\Exp[\phi_k^{w_{e}}|))>\frac{J\gamma}{n}.
$$
In this case, we have $x_{k}=\sum_{e\in S}\alpha_k(e) \geq \frac{J}{n}$. Thus, we have that
$$
\Bigl ||\Exp[\phi_{k}^{\len(S)}]|-|e^{-x_k\gamma}|\Bigr |< e^{-J\gamma/n}\leq e^{\gamma\lceil\frac{n\ln (\epsilon/L\tau)}{\gamma}\rceil/n}<\frac{\epsilon}{L\tau}.
$$
\item On the other hand, if 
$\alpha_{k}(e)= \left\lfloor \frac{-\ln |\Exp[\phi_{k}^{w_{e}}]|}{\gamma} \right\rfloor$
for all $e\in S$, 
we can see that
\begin{align*}
-\ln (|\Exp[\phi_k^{\len(S)})|)-x_k\gamma&=\sum_{e\in S}(-\ln (|\Exp[\phi^{w_{e}}|)-\alpha_k(e)\gamma)
\leq \sum_{e\in S}\gamma\leq n\gamma\leq
\frac{\epsilon}{L\tau}.
\end{align*}
Since the derivative of $e^{x}$ is less than $1$ for $x<0$, we can get that
$$
\Bigl ||\Exp[\phi_{k}^{\len(S)}]|-|e^{-x_k\gamma}|\Bigr|\leq |e^{-x_k\gamma-\epsilon/L\tau}-e^{-x_k\gamma}|\leq \frac{\epsilon}{L\tau}.
$$
\end{enumerate}

For any two complex numbers $a, b$ with $|a|\leq 1$ and $|b|\leq 1$, if $\bigl ||a|-|b|\bigr| < \epsilon$ and
$|\angle ab|=|\arg(a)-\arg(b)|< \epsilon$, we can see that
\begin{align*}
|a-b|^2 & =|a|^2+|b|^2 - 2|a| |b|\cos(\angle ab) \\
&= (|a|-|b|)^2 +2|a||b|(1-\cos(\angle ab)) \\
&\leq \epsilon^2 +2(1-\cos(\angle ab)^2) \\
& \leq \epsilon^2 +2\sin(\angle ab)^2 \\
& \leq \epsilon^2 +2|\arg(a)-\arg(b)|^2\leq 3\epsilon^2.
\end{align*}
In the third inequality, we use the fact that $\sin x<x$ for all $x>0$.
The proof is completed.\qed
\end{proof}

\begin{lemma}
\label{lm:config}
Suppose there is a pseudopolynomial time algorithm for
the exact version of $\probA$, which runs in time polynomial in $n$ and
$t$ ($t$ is the maximum integer in the instance of $\probA$).
Then, we can compute the values for all configurations $\{\conf(\bfv)\}_{\bfv\in \bV}$ in time $(\frac{n\tau}{\epsilon})^{O(L)}$.
\end{lemma}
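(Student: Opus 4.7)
The plan is to reduce the multi-criterion configuration computation to a single-criterion exact problem by a standard base-$M$ encoding of the $2L$ integer coordinates, and then invoke the pseudopolynomial exact algorithm as a black box, with one extra trick to cope with the capping at $J$.

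First, I would pre-transform the element weights. For each $e\in U$ and each $j\in\{1,\ldots,L\}$, replace $a_j(e)$ by $\tilde a_j(e) := \min(a_j(e), J)$, leaving $b_j(e)$ unchanged. A short case analysis shows $\min\bigl(J,\sum_{e\in S} a_j(e)\bigr) = \min\bigl(J, \sum_{e\in S} \tilde a_j(e)\bigr)$: if some $a_j(e)\ge J$ then both sides equal $J$; otherwise $\tilde a_j(e)=a_j(e)$ for every $e\in S$ and the two sums agree. After this transformation every coordinate satisfies $\tilde a_j(e)\le J$ and $b_j(e)\le K/n$, so for any $S$ the coordinate sums live in $\{0,1,\ldots,nJ\}$ and $\{0,1,\ldots,K\}$ respectively; both bounds are $(n/\epsilon)^{O(1)}$.

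Second, I would collapse the $2L$ criteria into a single integer. Let $M$ be any integer strictly larger than $\max(nJ, K)$, and assign to each element the combined weight
\[
 W(e) \;:=\; \sum_{j=1}^{L} \bigl(\tilde a_j(e)\,M^{2j-2} + b_j(e)\,M^{2j-1}\bigr).
\]
Because every coordinate sum is $<M$, there are no ``carries'': two subsets $S,S'$ of $U$ yield the same $W$-total iff they yield the same $2L$-tuple of coordinate sums. Hence, for any fine-grained target $\bfa' = \langle \alpha_1',\beta_1',\ldots,\alpha_L',\beta_L'\rangle$ with $\alpha_j'\in[0,nJ]$ and $\beta_j'\in[0,K]$, we can decide whether a feasible $S$ with exactly those coordinate sums exists by a single call to the pseudopolynomial exact algorithm on $A$ with target $T(\bfa') := \sum_j (\alpha_j' M^{2j-2} + \beta_j' M^{2j-1})$. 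Since the maximum integer appearing is $O(nLM^{2L}) = (n/\epsilon)^{O(L)}$ and the exact algorithm is polynomial in $n$ and that maximum, one call costs $(n/\epsilon)^{O(L)}$ time. In fact, the canonical dynamic-programming pseudopolynomial algorithm (shortest path, spanning tree, matching, knapsack) fills a table indexed by $(e, t)$ with $t\in\{0,\ldots,T_{\max}\}$ in a single sweep and thus produces all achievable $T$-values simultaneously in total time $(n/\epsilon)^{O(L)}$.

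Finally, I would recover each configuration $\sigma(\bfa)$ from the fine-grained answers. Set $\sigma(\bfa)=1$ iff there exists an achievable $\bfa'$ with $\beta_j'=\beta_j$ for all $j$, and for each $j$ either $\alpha_j<J$ and $\alpha_j'=\alpha_j$, or $\alpha_j=J$ and $\alpha_j'\ge J$. This merging is a straightforward pass over the $(nJ)^L K^L = (n/\epsilon)^{O(L)}$ achievable tuples, well within the claimed budget. The main obstacle — and really the only subtle point — is the capping: without the $\tilde a_j$ trick the coordinate $\sum_e a_j(e)$ could blow up (for instance when $\Exp[\phi_j^{w_e}]$ is tiny), ruining the pseudopolynomial bound; the equality $\min(J,\sum a_j)=\min(J,\sum \tilde a_j)$ plus the fine-to-coarse merge is precisely what keeps every coordinate within a $(n/\epsilon)^{O(1)}$ range so that the base-$M$ encoding yields a $(n/\epsilon)^{O(L)}$ overall running time.
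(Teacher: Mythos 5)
Your proof is correct and follows essentially the same route as the paper: cap the $a_j(e)$ coordinates to keep them polynomially bounded, pack the $2L$-dimensional vector into a single integer via base-$M$ encoding, invoke the pseudopolynomial exact algorithm, and then merge the fine-grained answers into the capped configurations by a max over $\alpha_j' \geq J$ where $\alpha_j = J$. The only cosmetic difference is the cap: you set $\tilde a_j(e) = \min(a_j(e), J)$ whereas the paper sets the overflowing entries to $n(J+1)$, but both choices preserve the needed equality $\min(J,\sum_e a_j(e)) = \min(J,\sum_e \bar a_j(e))$ and keep every coordinate sum $(n/\epsilon)^{O(1)}$-bounded, which is all the base-$M$ encoding requires.
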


\begin{proof}
For each vector $\bfv\in \bV$, 
we can encode it as a nonnegative integer $I(\bfv)$ upper bounded by
$J^{2L}=(\frac{n}{\epsilon})^{O(L)}$.
In particular, each coordinate of $\bfv$ takes the position of a specific digit in the integral representation,
and the base is chosen to be $J$ no carry can occur when we add at most $n$
feature vectors.
Then, determining the value of a configuration $\conf(\bfv)$ is equivalent to determining whether there is
a feasible solution $S\in \calF$ such that the total weight of $S$ (i.e., 
$\sum_{e\in S}I(\feature(e))$) is exactly the given value $I(\bfv)$.
Suppose the pseudopolynomial time algorithm for the exact version of  $\probA$ runs in time $P_{\probA}(n,t)$
for some polynomial $P_{\probA}$.
Therefore, the value of each such $\conf(\bfv)$ can be also computed in time
$P_{\probA}(n, I(\bfv))=P_{\probA}(n, (\frac{n}{\epsilon})^{O(L)})=(\frac{n}{\epsilon})^{O(L)}$.
Since $J$  are bounded by $(\frac{n\tau}{\epsilon})^{O(1)}$,
the number of configuration is $(\frac{n\tau}{\epsilon})^{O(L)}$.
The total running time is $(\frac{n\tau}{\epsilon})^{O(L)}\times (\frac{n\tau}{\epsilon})^{O(L)}=(\frac{n\tau}{\epsilon})^{O(L)}$.\qed
\end{proof}

Now, everything is ready to prove Theorem~\ref{thm:algofortmu}.

\vspace{0.3cm}
\begin{proofof} {Theorem~\ref{thm:algofortmu}}
We first use the algorithm in Lemma~\ref{lm:config} to compute the values for all configurations.
Then, we find the configuration
$\conf(\langle x_1,y_{1},\ldots, x_{L}, y_L \rangle)$ that has value $1$ and that maximizes the quantity
$|\val(\bfv)|=|\sum_{k=1}^{L}c_{k} e^{-x_{k}\gamma+i y_{k}\gamma}|$.
The feasible solution $S$ corresponding to this configuration is our final solution.
It is easy to see that the theorem follows from Lemma~\ref{lm:close}.
\qed
\end{proofof}

Theorem~\ref{thm:mainthm} can be readily obtained  from Theorem~\ref{thm:algofortmu}
and the fact $\tmu$ is an $\epsilon$-approximation of $\mu$.

\vspace{0.5cm}
\begin{proofof} {Theorem~\ref{thm:mainthm}}
Suppose $S$ is our solution and $S^{*}$ is the optimal solution for utility function $\uti$.
Recall $\tmu(x)=\sum_{k=1}^{L}c_{k}\phi_k^x$.
From Theorem~\ref{thm:algofortmu}, we know that 
$$
|\Exp[\tmu(\len(S))]| \geq |\Exp[\tmu(\len(S^{*}))]|-O(\epsilon\umax).
$$
Since $\tmu$ is an $\epsilon$-approximation of $\uti$,
we can see that
\begin{align*}
\bigl|\Exp[\uti(\len(S))]-\Exp[\tmu(\len(S))]\bigr| &= \Bigl|\int (\uti(x)-\tmu(x)) \d P_S(x) \Bigr|
 \leq \Bigl|\int \epsilon \umax \d P_S(x) \Bigr| \leq \epsilon \umax.
\end{align*}
for any solution $S$, where $P_{S}$ is the probability measure of $w(S)$.
Therefore, we have
\begin{align*}
|\Exp[\uti(\len(S))]|& \geq
 |\Exp[\tmu(\len(S))]|-\epsilon \umax \geq |\Exp[\tmu(\len(S^{*}))]|-O(\epsilon \umax) \\
& \geq |\Exp[\uti(\len(S^{*}))]|-O(\epsilon\umax). %\ \ \ \ \ \qed
\end{align*}
This completes the proof of Theorem~\ref{thm:mainthm}.
\qed
\end{proofof}

\section{Class $\classA$}

The main goal of this section is to prove Theorem~\ref{thm:mainholder}.
In Section~\ref{subsec:approximateutility}, we develop a generic algorithm that
takes as a subroutine an algorithm \algo\ for approximating functions in a bounded interval domain,
and approximates $\uti(x)\in \classA$ in the infinite domain $[0,+\infty)$.
In the Section~\ref{subsec:fourier}, 
we use the Fourier series expansion as the choice of \algo\
and show that important classes of utility functions
can be approximated well.

\subsection{Approximating the Utility Function}
\label{subsec:approximateutility}

%We first present our algorithm
%and then show which classes of utility functions can be handled.
There are many works on approximating functions using short exponential sums, e.g.,
the Fourier decomposition approach \cite{stein2003fourier}, Prony's method \cite{Osborne95modified},
and many others~\cite{beylkin2002generalized,beylkin2005approximation}.
However, their approximations are done over a finite interval domain, say $[-\pi,\pi]$
or over a finite number of discrete points. No error bound can be guaranteed outside the domain.
Our algorithm is a generic procedure that
turns an algorithm that can approximate functions over $[-\pi,\pi]$ into one that can
approximate our utility function $\uti$ over $[0, +\infty)$, by utilizing the fact that $\lim_{x\rightarrow \infty} \uti(x)=0$.

Recall for $\uti\in \classA$, we assume that for any constant $\epsilon>0$, there exist a constant $T_{\epsilon}$ such that
$\uti(x)\leq \epsilon$ for $x>T_{\epsilon}$.
%Since we assume the utility function $\uti$ is specified as a part of the problem
%but not a part of the input instance, $T_{\epsilon}$ is a constant for any constant $\epsilon$.
We also assume there is an algorithm \algo\ that, for any function $f$ (under some conditions specified later),
can produce an exponential sum
$\widehat{f}(x)=\sum_{i=1}^{L}c_{i}\phi_{i}^{x}$ which is an
$\epsilon$-approximation of $f(x)$ in $[-\pi,\pi]$
such that $|\phi_{i}|\leq 1$ and $L$ depends only on $\epsilon$ and $f$.
In fact, we can assume w.l.o.g. that \algo\ can approximate $f(x)$ over $[-B, B]$
for any $B=O(1)$.
This is because we can apply \algo\ to the scaled version
$g(x)=f(x\cdot \frac{B}{\pi})$ (which is defined on $[-\pi,\pi]$)
and then scale the obtained
approximation $\widehat{g}(x)$ back to $[-B,B]$, i.e.,
the final approximation is $\widehat{f}(x)=\widehat{g}(\frac{\pi}{B}\cdot x)$.
Scaling a function by a constant factor $\frac{B}{\pi}$ typically does not affect the smoothness of $f$ in any essential way
and we can still apply \algo.
Recall that our goal is to produce an exponential sum that is an $\epsilon$-approximation for $\uti(x)$
in $[0, +\infty)$.
We denote this procedure by \utidecomp.

\begin{center}
\fbox{
    \parbox{\linewidth}{
\begin{center}
\vspace{-0.3cm}
\textbf{Algorithm}: \utidecomp($\mu$)
\vspace{-0.3cm}
\end{center}
\begin{enumerate}
\item Initially, we slightly change function $\uti(x)$ to a new function $\barmu(x)$ as follows:
We require $\barmu(x)$ is a ``smooth '' function in $[-2T_{\epsilon}, 2T_{\epsilon}]$ such that
$\barmu(x)=\uti(x)$ for all $x\in [0,T_{\epsilon}]$;
$\barmu(x)=0$ for $|x|\geq 2T_{\epsilon}$. We choose $\barmu(x)$ in $[-2T_{\epsilon},0]$
and $[T_{\epsilon}, 2T_{\epsilon}]$ such that $\barmu(x)$ is smooth.
We do not specify the exact smoothness requirements now since
they may depend on the choice of \algo.
Note that there may be many ways to interpolate $\uti$ such that the above conditions are satisfied (see Example~\ref{ex:example1} below).
The only properties we need are: (1) $\barmu$ is amenable to algorithm \algo;
(2) $|\barmu(x)-\uti(x)|\leq \epsilon$ for $x\geq 0$.
\item We apply \algo\ to
$g(x)=\eta^{x}\barmu(x)$  over domain $[-hT_{\epsilon},hT_{\epsilon}]$
($\eta\geq 1$ and $h\geq 2$ are constants to be determined later).
Suppose the resulting exponential sum $\widehat{g}(x)=\sum_{i=1}^{L}c_{i}\phi_{i}^{x}$,
such that $|\widehat{g}(x)-g(x)|\leq \epsilon$ for all $x\in [-hT_{\epsilon},h T_{\epsilon}]$.
\item Let $\tmu(x)=\sum_{i=1}^{L}c_{i}(\frac{\phi_{i}}{\eta})^{x}$, which is our final approximation
of $\uti(x)$ on $[0,\infty)$.
\vspace{-0.3cm}
\end{enumerate}
}
}
\end{center}

\begin{example}
\label{ex:example1}
Consider the utility function
$\uti(x)=1/(x+1).$
Let $T_{\epsilon}=\frac{1}{\epsilon}-1$. So $\uti(x)<\epsilon$ for all $x>T_{\epsilon}$.
Now we create function $\barmu(x)$ according to the first step of
\utidecomp.
If we only require $\barmu(x)$ to be continuous, then we can use, for instance,
the following piecewise function:
\eat{
$$\barmu(x)= \left\{
             \begin{array}{ll}
		\frac{1}{x+1}& x\in [0,T_{\epsilon}], \\
		\frac{x}{\epsilon T}+\frac{2}{\epsilon} & x\in [T_{\epsilon},2T_{\epsilon}], \\
		0 & x>2T_{\epsilon}.  \\
		-\barmu(x) & x<0.
	\end{array}
           \right.
$$
}
$\barmu(x)=
		\frac{1}{x+1}$ for $x\in [0,T_{\epsilon}]$;
$\barmu(x)= -\frac{x}{\epsilon T_\epsilon}+\frac{2}{\epsilon}$ for $x\in [T_{\epsilon},2T_{\epsilon}]$;
$\barmu(x)= 0$ for $x>2T_{\epsilon}$;
$\barmu(x)=\barmu(-x)$ for $x<0.$
It is easy to see that $\barmu$ is continuous and $\epsilon$-approximates $\mu$. \qed
\end{example}

By setting $\eta=2$ and
\begin{align}
\label{eq:h1}
h\geq \max\left(2, \frac{\log (\sum_{i=1}^{L}|c_{i}|/\epsilon)}{T_{\epsilon}}\right),
\end{align}
we can show the following theorem.

\begin{lemma}
\label{lm:approxfunciton}
$\tmu(x)$ is a $2\epsilon$-approximation of $\uti(x)$.
\end{lemma}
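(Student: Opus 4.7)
The plan is to prove the bound $|\tmu(x)-\uti(x)|\le 2\epsilon$ for all $x\ge 0$ by splitting the nonnegative real axis into two regions and handling each region separately. The cutoff will be at $x = hT_{\epsilon}$, the right endpoint of the interval on which \algo\ supplies its approximation guarantee.

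On the ``small'' region $x\in[0,hT_{\epsilon}]$, I would exploit the approximation guarantee directly. Since $\widehat{f}(x)=\sum_{i=1}^{L}c_{i}\phi_{i}^{x}$ is an $\epsilon$-approximation of $f(x)=\eta^{x}\barmu(x)$ on $[-hT_{\epsilon},hT_{\epsilon}]$, dividing by $\eta^{x}$ (which is at least $1$ for $x\ge 0$) gives
\[
|\tmu(x)-\barmu(x)| \;=\; \eta^{-x}\,|\widehat{f}(x)-f(x)| \;\le\; \eta^{-x}\epsilon \;\le\; \epsilon.
\]
Combining with the guarantee $|\barmu(x)-\uti(x)|\le\epsilon$ built into the construction of $\barmu$ (property (2) in the first step of \utidecomp) and applying the triangle inequality yields $|\tmu(x)-\uti(x)|\le 2\epsilon$ on this region.

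On the ``tail'' region $x>hT_{\epsilon}$, I would argue that $\tmu(x)$ and $\uti(x)$ are individually small, so no approximation guarantee from \algo\ is needed. For $\uti(x)$, since $x>hT_{\epsilon}\ge T_{\epsilon}$ (because $h\ge 2$), the definition of $T_{\epsilon}$ forces $\uti(x)\le\epsilon$. For $\tmu(x)$, I would use $|\phi_{i}|\le 1$ to bound
\[
|\tmu(x)| \;\le\; \sum_{i=1}^{L}|c_{i}|\,\bigl|\phi_{i}/\eta\bigr|^{x} \;\le\; \eta^{-x}\sum_{i=1}^{L}|c_{i}|.
\]
The choice $\eta=2$ together with the inequality $h\ge \log(\sum_{i}|c_{i}|/\epsilon)/T_{\epsilon}$ from~(\ref{eq:h1}) guarantees $\eta^{hT_{\epsilon}}\ge \sum_{i}|c_{i}|/\epsilon$, and since $x>hT_{\epsilon}$ we get $|\tmu(x)|\le\epsilon$. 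Hence $|\tmu(x)-\uti(x)|\le 2\epsilon$ on the tail as well.

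The main conceptual step — and the reason the algorithm premultiplies by $\eta^{x}$ before handing things off to \algo\ — is the tail bound: turning the ``trivially $\le 1$'' exponentials $\phi_{i}^{x}$ into the ``strictly decaying'' exponentials $(\phi_{i}/\eta)^{x}$ is what lets us control the approximation outside the finite interval on which \algo\ operates. The only mild subtlety is making sure the logarithmic lower bound on $h$ in~(\ref{eq:h1}) is calibrated to the chosen base (here $\eta=2$), but this is a one-line check and the whole argument then reduces to two applications of the triangle inequality.
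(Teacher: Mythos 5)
Your proposal is correct and follows essentially the same two-region argument as the paper: on $[0,hT_{\epsilon}]$ you combine the $\epsilon$-accuracy of $\widehat{f}$ (damped by $\eta^{-x}\le 1$) with the $\epsilon$-closeness of $\barmu$ to $\uti$, and on $(hT_{\epsilon},\infty)$ you bound $|\tmu|$ and $|\uti|$ by $\epsilon$ separately using the decay forced by $\eta=2$ and the choice of $h$ in~(\ref{eq:h1}). The calibration remark about the base of the logarithm is a fair observation but does not change the argument.
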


\begin{proof}
We know that $|\widehat{g}(x)-g(x)|\leq \epsilon$
for $x\in [0,hT_{\epsilon}]$.
Therefore, we have that
$$
|\tmu(x)-\barmu(x)|=\left|\frac{\widehat{g}(x)}{\eta^{x}}-\frac{g(x)}{\eta^{x}}\right|\leq \frac{\epsilon}{\eta^{x}}\leq \epsilon.
$$
Combining with $|\barmu(x)-\uti(x)|\leq \epsilon$, we obtain
$|\tmu(x)-\uti(x)|\leq 2\epsilon$ for $x\in [0,hT_{\epsilon}]$.
For $x>hT_{\epsilon}$, we can see that
\begin{align*}
|\tmu(x)|&=\left|\sum_{i=1}^{L}c_{i}\left(\frac{\phi_{i}}{\eta}\right)^{x}\right|\leq \sum_{i=1}^{L}  \left|c_{i} \left(\frac{\phi_{i}}{\eta}\right)^{x}\right|
\leq \frac{1}{2^{x}}\sum_{i=1}^{L} |c_{i}| \leq \frac{1}{2^{hT_{\epsilon}}}\sum_{i=1}^{L} |c_{i}|  \leq \epsilon
\end{align*}
Since $\uti(x)<\epsilon$ for $x>hT_{\epsilon}$, the proof is complete. \qed
\end{proof}

\topic{Remark:}
Since we do not know $c_{i}$ before applying \algo, we need to set $h$ to be a quantity
(only depending on $\epsilon$ and $T_\epsilon$) such that  (\ref{eq:h1}) is always satisfied.
In particular, we need to provide an upper bound for $\sum_{i=1}^{L}|c_{i}|$.
In the next subsection, we use the Fourier series decomposition as the choice for \algo,
which allows us to provide such a bound for a large class of functions.

\subsection{Implementing {\sc Fourier}}
\label{subsec:fourier}
Now, we discuss the choice of algorithm \algo\ and the conditions that
$f(x)$ needs to satisfy so that it is possible to approximate $f(x)$
by a short exponential sum in a bounded interval.
In fact, if we know in advance that
there is a short exponential sum that can approximate $f$,
we can use the algorithms developed in
\cite{beylkin2005approximation,beylkin2010approximation} (for continuous case)
and \cite{beylkin2002generalized} (for the discrete case).
However, those works do not provide an easy characterization of the class of functions.
From now on, we restrict ourselves to the classic Fourier series technique,
which has been studied extensively and allows such characterizations.

Suppose from now on that 
$f(x)$ is a real periodic function defined on $[-\pi, \pi]$.
Consider the partial sum of the Fourier series of the function $f(x)$:
$$
(S_N f)(x)= \sum_{k=-N}^{N} c_k e^{ikx}
$$
where the Fourier coefficient $c_k = \frac{1}{2\pi} \int_{-\pi}^{\pi} f(x) e^{-ikx} \d x.$
It has $L=2N+1$ terms.
Since $f(x)$ is a real function, we have $c_k=c_{-k}$ and the partial sum is also real.
We are interested in the question under which conditions does the function $S_{N} f$
converge to $f$ (as $N$ increases) and what is convergence rate?
Roughly speaking, the ``smoother'' $f$ is, the faster $S_{N} f$ converges to $f$.
%However, this question in general is quite intricate and deep and is
%one of the central topics in harmonic analysis.
%Thus, it is not possible to give a comprehensive treatment.
In the following, we need one classic result about the convergence of Fourier series
and show how to use it in our problem. 
%Then we provide a few concrete examples.

We need a few more definitions.
We say $f$ satisfies the  {\em $\alpha$-H\"{o}lder condition} if
$| f(x) - f(y) | \leq C \, |x - y|^{\alpha}$, for some constant $C$ and $\alpha>0$ and any $x$ and $y$.
The constant $C$ is called the {\em H\"{o}lder coefficient} of $f$, also denoted as $|f|_{C^{0,\alpha}}$.
We say $f$ is {\em $C$-Lipschitz} if $f$ satisfies 1-H\"{o}lder condition with coefficient $C$.
\begin{example}
It is easy to check that the utility function $\uti$ in Example~\ref{ex:example1} is
1-Lipschitz since $|\frac{\d \uti(x)}{\d x}|\leq 1$ for $x\geq 0$.
We can also see that $\tchi(x)$ (defined in \eqref{ex:utility1}) is $\frac{1}{\delta}$-Lipschitz.
%Therefore, to obtain an $\epsilon$-approximation of $\uti$,
%we can let $L=O(\frac{1}{\epsilon}\log \frac{1}{\epsilon})$ and $h=O(1)$, and  then apply \utidecomp\ to $\uti$
%using Fourier series decomposition as the subroutine.
\end{example}
We need the following classic result of Jackson.
\begin{theorem} {\em (See e.g., \cite{powell81})}
\label{thm:fourier}
Suppose that 
$f(x)$ is a real periodic function defined on $[-\pi, \pi]$.
If $f$ satisfies the  $\alpha$-H\"{o}lder condition, it holds that
$$
|f(x)-(S_Nf)(x)|\le O\Bigl( {|f|_{C^{0,\alpha}}\ln N\over N^\alpha}\Bigr).
$$
\end{theorem}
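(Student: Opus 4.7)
The plan is to establish this classical Jackson-type estimate by combining two standard ingredients from harmonic analysis: the near-optimality of partial Fourier sums up to the Lebesgue constant, and a direct trigonometric polynomial approximation bound for H\"older functions.

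First, I would exploit the fact that for any trigonometric polynomial $T_N$ of degree at most $N$, the partial Fourier sum operator satisfies $S_N T_N = T_N$. Writing $f - S_N f = (f - T_N) - S_N(f - T_N)$ and taking the sup norm, one obtains the pointwise estimate
$$|f(x) - (S_N f)(x)| \leq (1 + L_N)\, \|f - T_N\|_\infty,$$
where $L_N = \frac{1}{2\pi}\|D_N\|_{L^1}$ is the $N$-th Lebesgue constant associated with the Dirichlet kernel $D_N(t)=\sin((N+\tfrac12)t)/\sin(t/2)$. A standard calculation using the explicit form of $D_N$ (splitting the integral at scale $1/N$) yields $L_N = O(\log N)$. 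Since the inequality is uniform over $T_N$, it reduces the problem to bounding the best uniform approximation error $E_N(f) := \inf_{T_N}\|f - T_N\|_\infty$.

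Second, I would bound $E_N(f)$ by constructing an explicit near-best approximant via convolution with the Jackson kernel $J_N$, a nonnegative even trigonometric polynomial of degree at most $N$ (essentially a normalized square of the Fej\'er kernel). The convolution $J_N * f$ is a trigonometric polynomial of degree at most $N$, and by the H\"older condition together with the positivity of $J_N$,
$$\|f - J_N * f\|_\infty \leq |f|_{C^{0,\alpha}} \int_{-\pi}^{\pi} |t|^\alpha J_N(t)\, \d t.$$
The crucial moment estimate $\int_{-\pi}^{\pi}|t|^\alpha J_N(t)\,\d t = O(N^{-\alpha})$ then gives $E_N(f) = O(|f|_{C^{0,\alpha}}/N^\alpha)$. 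Combining this with the first step produces the claimed bound $O(|f|_{C^{0,\alpha}} \log N/N^\alpha)$.

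The main technical obstacle is the kernel moment estimate for $J_N$: it requires a careful analysis showing that $J_N$ concentrates near the origin at scale $1/N$ with enough decay that fractional moments of order $\alpha$ shrink like $N^{-\alpha}$. Once that concentration bound is in hand, the Lebesgue constant computation and the algebraic splitting $f - S_N f = (f - T_N) - S_N(f - T_N)$ are routine, and together they immediately yield the stated theorem.
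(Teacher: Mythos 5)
The paper does not prove this theorem; it cites it as a classical Jackson-type estimate from the approximation-theory literature (e.g., Powell's text). Your proof is a correct reconstruction of the standard textbook argument: the decomposition $f - S_N f = (f - T_N) - S_N(f - T_N)$ reduces the problem to the Lebesgue constant $L_N = O(\log N)$ times the best approximation error $E_N(f)$, and convolution with the Jackson kernel gives $E_N(f) = O(|f|_{C^{0,\alpha}} N^{-\alpha})$ via the kernel moment bound $\int_{-\pi}^{\pi}|t|^\alpha J_N(t)\,\mathrm{d}t = O(N^{-\alpha})$. One small remark worth keeping in mind: the moment bound for the classical Jackson kernel (normalized fourth power of $\sin(nt/2)/\sin(t/2)$) delivers $O(N^{-\alpha})$ only for $\alpha \le 2$, which suffices here since a nonconstant function can only be $\alpha$-H\"older with $\alpha \le 1$; if one wanted the higher-order analogue for smoother $f$, one would pass to higher-power Jackson-type kernels. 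Both steps you outline are exactly what the cited source does, so your proposal matches the intended reference rather than offering an alternative route.
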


We are ready to spell the details of \algo.
Recall $g(x)$ is obtained in step 2 in Algorithm
\utidecomp.
By construction, $g(-hT_{\epsilon})=g(hT_{\epsilon})=0$ for $h\geq 2$.
Hence, it can be considered as a periodic function with period
$2hT_{\epsilon}$.
Note that in Jackson's theorem, the periodic function $f$
is defined
on $[-\pi,\pi]$.
In order to apply Jackson's theorem to 
$g(x)$ over $[-hT_{\epsilon},hT_{\epsilon}]$,
we consider the following function $f$, which is
the scaled version of $g$:
$$
f(x)=g(xhT_{\epsilon}/\pi).
$$
Then, \algo\ returns the following function $\widehat{g}$, which is a sum of exponential functions:
$$
\widehat{g}(x) = S_N f \left(\frac{x\pi}{hT_\epsilon}\right).
$$

Now, we show that $|\widehat{g}(x)-g(x)|\leq \epsilon$ for all $x\in [-hT_{\epsilon},h T_{\epsilon}]$.
For the later parts of the analysis, we need a few simple lemmas.
The proofs of these lemmas are straightforward and thus omitted here.

\begin{lemma}
\label{lm:holderpiecewise}
Suppose $f: [a,c]\rightarrow \mathbb{R}$ is a continuous function
which consists of two pieces $f_{1}:[a,b]\rightarrow \mathbb{R}$ and $f_{2}:[b,c]\rightarrow \mathbb{R}$.
If both $f_{1}$ and $f_{2}$ satisfy the  $\alpha$-H\"{o}lder condition
with H\"{o}lder coefficient $C$, then
$|f|_{C^{0,\alpha}}\leq 2C.$
\end{lemma}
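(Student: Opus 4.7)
The plan is to verify the Hölder bound directly from the definition by case analysis on which pieces the two test points lie in. Fix arbitrary $x,y \in [a,c]$ with $x \leq y$; we must show $|f(x)-f(y)| \leq 2C|x-y|^{\alpha}$.

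First I would dispatch the easy cases: if both $x,y \in [a,b]$, then $f$ agrees with $f_1$ on both points and the bound $|f(x)-f(y)| \leq C|x-y|^{\alpha} \leq 2C|x-y|^{\alpha}$ is immediate from the hypothesis on $f_1$; symmetrically for $x,y \in [b,c]$. The only nontrivial case is $x \in [a,b]$ and $y \in [b,c]$, which is handled by inserting the breakpoint $b$ and invoking continuity of $f$ (so $f_1(b)=f_2(b)=f(b)$). By the triangle inequality,
\begin{align*}
|f(x)-f(y)| &\leq |f(x)-f(b)| + |f(b)-f(y)| \\
&\leq C|x-b|^{\alpha} + C|b-y|^{\alpha}.
\end{align*}

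The main (minor) step is bounding $|x-b|^{\alpha}+|b-y|^{\alpha}$ by $2|x-y|^{\alpha}$. Since $x \leq b \leq y$, both $|x-b|$ and $|b-y|$ are at most $|x-y|$, and raising to the power $\alpha>0$ preserves this (monotonicity of $t\mapsto t^{\alpha}$ on $[0,\infty)$). Therefore each term is at most $|x-y|^{\alpha}$, yielding $|f(x)-f(y)| \leq 2C|x-y|^{\alpha}$, and taking the supremum over $x,y$ gives $|f|_{C^{0,\alpha}}\leq 2C$ as claimed.

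I do not anticipate any real obstacle: the only subtle point is that the argument uses the continuity of $f$ at the junction $b$ (so that the value of $f$ at $b$ is unambiguous and $f_1(b)=f_2(b)$), which is given in the hypothesis. No hidden dependence on $\alpha \leq 1$ is needed since we only use monotonicity of $t^{\alpha}$ on the nonnegative reals, so the statement is robust across all $\alpha > 0$.
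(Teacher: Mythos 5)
Your proof is correct, and since the paper explicitly omits the proof of this lemma as "straightforward," your argument—splitting at the breakpoint $b$, using continuity there, and bounding each resulting term by $C|x-y|^{\alpha}$ via monotonicity of $t\mapsto t^{\alpha}$—is exactly the intended direct verification. No further comparison is needed.
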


\begin{lemma}
\label{lm:holderscale}
Suppose $g: [a,c]\rightarrow \mathbb{R}$ is a continuous function
satisfying the  $\alpha$-H\"{o}lder condition with H\"{o}lder coefficient $C$.
Then, for $f(x)=g(tx)$ for some $t>0$, we have
$|f|_{C^{0,\alpha}}\leq Ct^{\alpha}.$
\end{lemma}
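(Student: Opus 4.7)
The plan is to verify the Hölder bound for $g$ directly from the definition, by chasing the inequality through the substitution $x \mapsto hx$. This is a one-line computation rather than a serious argument, so the proposal amounts mostly to spelling out which substitution to make.

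First I would fix two points $x, y$ in the domain of $g$ (which is $[a/h, c/h]$ if $h>0$, up to endpoint conventions) and write $g(x) - g(y) = f(hx) - f(hy)$. Applying the $\alpha$-Hölder condition for $f$ at the points $hx$ and $hy$ (both of which lie in $[a,c]$ by construction) gives $|f(hx) - f(hy)| \leq C\,|hx - hy|^{\alpha}$. Pulling the constant $h$ out of the exponent yields $C\,h^{\alpha}\,|x-y|^{\alpha}$, which is exactly the bound claimed. Taking the supremum of $|g(x)-g(y)|/|x-y|^\alpha$ over distinct $x,y$ then shows $|g|_{C^{0,\alpha}}\leq Ch^\alpha$.

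There is essentially no obstacle here: the lemma is a pure definition-chase, and the only thing worth double-checking is that the inputs $hx, hy$ to $f$ lie in $[a,c]$ so that we are entitled to invoke the Hölder bound for $f$. The preceding lemma (Lemma~\ref{lm:holderpiecewise}) will be combined with this one inside the main approximation argument of Section~\ref{subsec:fourier}, where the role of $h$ will be the scaling factor $\pi/(hT_\epsilon)$ needed to transplant $\widehat{f}$ from $[-\pi,\pi]$ to the interval $[-hT_\epsilon, hT_\epsilon]$ on which \algo{} must approximate $f$; so only the asymptotic dependence on $h$ matters, and the explicit constant $Ch^\alpha$ is exactly what feeds into Theorem~\ref{thm:fourier}.
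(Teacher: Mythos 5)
Your proof is correct and is exactly the straightforward definition-chase the paper has in mind; the paper itself omits the proof, remarking only that it (together with Lemma~\ref{lm:holderpiecewise}) is "straightforward." The only nit is that the bound should read $C|h|^{\alpha}$ if $h$ is not assumed positive, but in the paper's usage $h$ is always a positive scaling factor, so this is immaterial.
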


By Lemma~\ref{lm:holderpiecewise}, we know that the piecewise function $\widehat{\uti}$ 
(defined in step 1 in \utidecomp)
satisfies
$\alpha$-H\"{o}lder condition with coefficient $2C$.
Therefore, we can easily see that $g(x)=\barmu(x)\eta ^{x}$
satisfies $\alpha$-H\"{o}lder condition with coefficient at most $2^{1+2T_{\epsilon}}C$ on $[-hT_{\epsilon}, hT_{\epsilon}]$
(This is because $\barmu$ is non-zero only in $[-2T_{\epsilon}, 2T_{\epsilon}]$).
According to Lemma~\ref{lm:holderscale}, we have
$
|f(x)|_{C^{0,\alpha}}=|g(xhT_{\epsilon}/\pi)|_{C^{0,\alpha}}\leq 2^{1+2T_{\epsilon}}(hT_{\epsilon}/\pi)^{\alpha}C.
$
Using Theorem~\ref{thm:fourier}, 
%and Lemma~\ref{lm:holderscale} again
 %(with $t=hT_\epsilon$ and $f(x)=g(xhT_{\epsilon}/\pi)$), 
 we obtain the following corollary.
\begin{corollary}
\label{cor:fourier}
Suppose $\uti\in \classA$ satisfies the  $\alpha$-H\"{o}lder condition
with $|\uti|_{C^{0,\alpha}}=O(1)$. 
For 
$$
N=2^{O(T_{\epsilon})}(h/\epsilon)^{1+1/\alpha},
$$
it holds that
$|g(x)-\widehat{g}(x)|\leq \epsilon$ for $x\in [-hT_{\epsilon},hT_{\epsilon}]$.
\end{corollary}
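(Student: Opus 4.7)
The plan is to reduce to Jackson's theorem (Theorem~\ref{thm:fourier}) via a linear change of variables that maps the interval $[-hT_{\epsilon},hT_{\epsilon}]$ onto the canonical Fourier domain $[-\pi,\pi]$. Define $g(y):=f\bigl(\tfrac{hT_{\epsilon}}{\pi}\,y\bigr)$ for $y\in[-\pi,\pi]$. By Lemma~\ref{lm:holderscale} applied with scale factor $hT_{\epsilon}/\pi$, the function $g$ satisfies the $\alpha$-H\"older condition with coefficient
\[
|g|_{C^{0,\alpha}} \;\le\; |f|_{C^{0,\alpha}}\,\bigl(hT_{\epsilon}/\pi\bigr)^{\alpha} \;=\; O\bigl((hT_{\epsilon})^{\alpha}\bigr),
\]
using the hypothesis $|f|_{C^{0,\alpha}}=O(1)$.

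Next, I would apply Theorem~\ref{thm:fourier} to $g$ on $[-\pi,\pi]$ to obtain, for every $M$,
\[
\bigl|g(y)-(S_{M}g)(y)\bigr| \;\le\; O\!\left(\frac{(hT_{\epsilon})^{\alpha}\,\ln M}{M^{\alpha}}\right) \qquad\forall\,y\in[-\pi,\pi].
\]
Pulling back via $y=\pi x/(hT_{\epsilon})$ turns $S_{M}g$ into a Fourier-type partial sum on $[-hT_{\epsilon},hT_{\epsilon}]$ having the same number of terms; this is exactly $S_{N}f$ with $N=M$, and the sup-norm error is unchanged. So it suffices to pick $N$ with $(hT_{\epsilon})^{\alpha}\ln N/N^{\alpha}\le c\epsilon$ for a suitable constant~$c$.

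The only real work is inverting this inequality, which is a routine calculation but constitutes the main (minor) obstacle. Choosing
\[
N \;=\; C\,hT_{\epsilon}\,\Bigl(\tfrac{1}{\epsilon}\log\tfrac{1}{\epsilon}\Bigr)^{1/\alpha}
\]
for a large enough constant $C$, one has $N^{\alpha}=C^{\alpha}(hT_{\epsilon})^{\alpha}\tfrac{1}{\epsilon}\log\tfrac{1}{\epsilon}$, so
\[
\frac{(hT_{\epsilon})^{\alpha}\ln N}{N^{\alpha}} \;=\; \frac{\ln N}{C^{\alpha}\,\tfrac{1}{\epsilon}\log\tfrac{1}{\epsilon}} \;=\; O\!\left(\frac{\epsilon\,\ln N}{\log(1/\epsilon)}\right).
\]
Since $\ln N = \ln(hT_{\epsilon})+\tfrac{1}{\alpha}\ln(\tfrac{1}{\epsilon}\log\tfrac{1}{\epsilon}) = O(\log(1/\epsilon))$ (as $hT_{\epsilon}$ is at most polynomial in $1/\epsilon$), the right-hand side is $O(\epsilon)$, which is at most $\epsilon$ after absorbing constants into $C$. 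This yields $|f(x)-(S_{N}f)(x)|\le\epsilon$ uniformly on $[-hT_{\epsilon},hT_{\epsilon}]$ and completes the proof.
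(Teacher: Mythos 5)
Your proposal is correct and follows exactly the route the paper intends: rescale to $[-\pi,\pi]$, invoke Lemma~\ref{lm:holderscale} to control the H\"older coefficient, apply Jackson's theorem (Theorem~\ref{thm:fourier}), and solve for $N$. The paper states this corollary without writing the derivation out, and your calculation supplies the implicit details faithfully (including the observation that $\ln N = O(\log(1/\epsilon))$, which holds here since $hT_\epsilon = O(\log(1/\epsilon))$).
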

\begin{proof}
Applying Theorem~\ref{thm:fourier} to $f$
and plugging in the given value of $N$,
we can see that 
$
|f(x)-(S_Nf)(x)|\le \epsilon
$
for $x\in [-\pi,\pi]$.
Hence, we have that
$
|g(x)-\widehat{g}(x)|=
|f(\frac{x\pi}{hT_\epsilon})-S_Nf(\frac{x\pi}{hT_\epsilon})|\leq \epsilon
$
for $x\in [-hT_{\epsilon},hT_{\epsilon}]$.
\qed
\end{proof}

\vspace{0.1cm}
\topic{How to Choose $h$}:
Now, we discuss the issue left in Section~\ref{subsec:approximateutility},
that is how to choose $h$ (the value should be independent of $c_{i}$s and $L$)
to satisfy (\ref{eq:h1}),
when $\uti$ satisfies the  $\alpha$-H\"{o}lder condition for some $\alpha>1/2$.
%Indeed, we can choose
%$h=O(\frac{1}{T_{\epsilon}}\log \frac{1}{\epsilon} )$.
%See Appendix~\ref{app:h} for the details.
We need the following results about the absolute convergence of
Fourier coefficients.
If $f$ satisfies the  $\alpha$-H\"{o}lder condition
for some $\alpha>1/2$,
then
$
\sum_{i=-\infty}^{+\infty} |c_{i}|\leq |f|_{C^{0,\alpha}}\cdot c_{\alpha}
$
where $c_{\alpha}$ only depends on $\alpha$ \cite{stein2003fourier}.
We can see that in order to ensure \eqref{eq:h1}, it suffices to 
 to set value $h$ such that
\begin{align*}
hT_{\epsilon} & \geq \log \frac{2^{1+2T_{\epsilon}} (hT_{\epsilon}/\pi)^{\alpha} Cc_{\alpha}}{\epsilon}
 =2T_{\epsilon}+O\bigl(\log \bigl(hT_{\epsilon}/\epsilon\bigr)\bigr).
\end{align*}
We can easily verify that the above condition can be satisfied by letting
$h=\max(O(\frac{1}{T_{\epsilon}}\log \frac{1}{\epsilon} ), 2)$.

\begin{proofof}{Theorem~\ref{thm:mainholder}}
Everything is in place to prove Theorem~\ref{thm:mainholder}.
First, we bound $L$  by Corollary~\ref{cor:fourier}: 
$$
L=2N+1=
2^{O(T_\epsilon)}\poly(1/\epsilon).
$$
Next, we bound the magnitude of each $c_k$.
Recall $c_k$ is the Fourier coefficient: 
$c_k = \frac{1}{2\pi} \int_{-\pi}^{\pi} f(x) e^{-ikx} \d x.$
where $f(x)=g(x h T_\epsilon/\pi)=\eta^{xh T_\epsilon/\pi}\barmu(x hT_\epsilon/\pi)$ for $x\in [-\pi,\pi]$.
Since $hT_\epsilon= \max(O(T_\epsilon), O(\log 1/\epsilon))$, we can see $|f(x)|\leq 2^{O(T_\epsilon)}\poly(1/\epsilon)$ for $x\in [-\pi,\pi]$.
Therefore, 
$$|c_k|\leq \frac{1}{2\pi} \int_{-\pi}^{\pi} |f(x)|\d x\leq 2^{O(T_\epsilon)}\poly(1/\epsilon).$$
Finally, combining Corollary~\ref{cor:fourier} and
Lemma~\ref{lm:approxfunciton}, we complete the proof of Theorem~\ref{thm:mainholder}.
\qed
\end{proofof}

\section{Class $\classB$}
\label{sec:concave}

In this section, we handle the case where the utility function $\uti : [0,\infty)\rightarrow [0,\infty)$ is a concave nondecreasing function
and our goal is to prove Theorem~\ref{thm:mainthmconcave}.

We use $\opt$ to denote the optimal value of our 
problem $\eum(\probA)$.
We can assume without loss of generality that we know $\opt$, modulo a multiplicative factor of $(1\pm \epsilon)$.
This can be done by guessing all powers of $(1+\epsilon)$ between 
$\max_{e\in U} \Exp[\uti(w_e)]$ and $\Exp[\uti(w(U))]$,
\footnote{
We can assume every $e\in U$ is in at least one feasible solution $S\in \calF$.
Otherwise, we can simply remove those irrelevant elements.
Then, $\opt$ is at least $\max_{e\in U} \Exp [\uti(w_e)]$.
We can test whether an item is an irrelevant element by using  
the pseudopolynomial time algorithm as follows: we assign the item with weight 1 and other items weight 0.
We ask whether there is a feasible solution with weight exactly 1.
} 
and run our algorithm for each guess.
For ease of notation, we assume that our current guess is
exact $\opt$.
%From now on, we assume our current guess $\guess$ 
Let
 \begin{align}
 \label{eq:parameter1}
 H=\opt/\epsilon^2,\quad T_1=\uti^{-1}(\opt/\epsilon) \quad \text{ and }\quad T_2=\uti^{-1}(\opt/\epsilon^2).
\end{align}
We first make the following simplifying assumption and show how to remove it later:
\begin{enumerate}
\item [S1.]
We assume $\uti(0)=0$ and $\uti(x) = \uti(T_2) =H$ for all $x>T_2$.
\end{enumerate}

\begin{figure}[t]
\begin{center}
\includegraphics[width=0.95\linewidth, height=3cm]{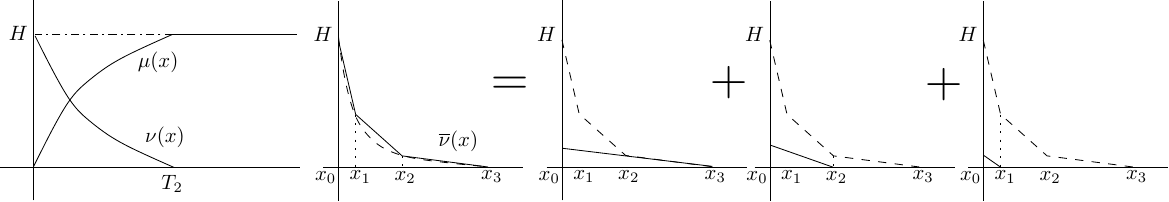}
\end{center}
\vspace{-0.5cm}
\caption{(1) The concave utility function $\uti(x)$
and $\nu(x)=H-\uti(x)$
(2) The piecewise linear function $\bnu(x)$.
(3)-(5) Decomposing $\bnu(x)$ into 
three scaled copies of $\tau(x)$.}
\vspace{-0.0cm}
\label{fig_concave}
\end{figure}

\begin{lemma}
\label{lm:concave}
If the utility function $\uti\in \classB$ satisfies the additional assumption S1,
then, for any $\epsilon>0$, we can obtain 
an exponential sum 
$\tmu(x)=\sum_{k=1}^{L} c_k \phi_k^x$,
such that $|\tmu(x)-\uti(x)|\leq O(\epsilon\opt)$ for all $x>0$,
where 
$L=\poly(1/\epsilon)$, $|c_k|\leq \poly(1/\epsilon)H$ and $|\phi_k|\leq 1$ for all $k=1,\ldots,L$.
\end{lemma}
\begin{proof}
Consider the function $\nu(x)= H-\mu(x)$.
We can see $\nu$ is a nonincreasing convex function and
$\nu(x)=0$ for all $x>T_2$.
We first approximate $\nu$ by a piecewise linear function $\bnu$ as follows.
Let $N=1/\epsilon^3$.
For all $0\leq i\leq N$,
let
 $$x_i = \mu^{-1}\left(\frac{iH}{N}\right)= \nu^{-1}\left(\frac{(N-i)H}{N}\right) \text{ and } x_{N+1}=\infty.$$
Let $h_i=\frac{\nu(x_{i+1})-\nu(x_i)}{x_{i+1}-x_i}$ for $0\leq i\leq N$.
The piecewise linear function $\bnu$ is defined by 
$\bnu(x_i)=\nu(x_i)$ for all $0\leq i\leq N$
and 
$$
\bnu(x)=\bnu(x_i) + (x-x_i)h_i, \quad\text{ for } x\in [x_i, x_{i+1}].
$$
It is easy to see $\bnu$ is also a convex function (see Figure~\ref{fig_concave})
and $|\nu(x)-\bnu(x)|\leq H/N\leq \epsilon\opt$.

Now we show $\bnu$ can be written as a linear sum of $N$ scaled copies of the 
following function $\rho$:
$$
\rho(x) =  1-x \quad\text{ for } 0\leq x\leq 1, \text{ and } \rho(x)=0 \quad\text{ for } x>1.
$$
We let 
$
\rho_{h,a}(x) = (-ha)\rho(x/a). 
$
It is easy to see that the first piece of $\rho_{h,a}$ has slope $h$ and ends at $x=a$.
Define
$$
\rho_i(x)=\rho_{h_i-h_{i+1}, x_{i+1}}(x)=x_{i+1}(h_{i+1}-h_{i})\rho\left(\frac{x}{x_{i+1}}\right)\quad \text{ for } 0\leq i\leq N.
$$
It is not hard to verify that
$
\bnu(x) = \sum_{i=0}^{N-1} \rho_{i}(x)
$
(see Figure~\ref{fig_concave}).

By Theorem~\ref{thm:mainholder}, we can find a function $\ttau(x)=\sum_{k=1}^{D}d_k\psi_k^x$
with $D=\poly(N/\epsilon^2)=\poly(1/\epsilon)$, $|d_k|=\poly(1/\epsilon)$ and $|\psi_k|\leq 1$ for $k=1,\ldots, D$,
\footnote{
It suffices to let $T_\epsilon=1$ (i.e., $\rho(x)=0$ for $x\geq 1$).
}
such that 
$
|\ttau(x)-\rho(x)|\leq \epsilon^2/N   \text{ for } x\geq 0.
$
Consider the function
$$
\tnu(x) = 
\sum_{i=0}^{N-1} x_{i+1}(h_{i+1}-h_{i})\ttau\left(\frac{x}{x_{i+1}}\right)
=\sum_{i=0}^{N-1}\sum_{k=1}^D x_{i+1} (h_{i+1}-h_i) d_k (\psi_k^{1/x_{i+1}})^x.
$$
Clearly, $\tnu$ is the summation of $ND$ exponentials.
It is not difficult to see the magnitude of each coefficient, 
$|x_{i+1} (h_{i+1}-h_i) d_k|$, is at most $-x_{i+1} h_i |d_k| \leq  \poly(1/\epsilon)H$.
We can also see that 
$$
|\tnu(x)-\nu(x)|\leq |\tnu(x)-\nu(x)|+|\tnu(x)-\bnu(x)|
\leq O(H\epsilon^2) \leq O(\epsilon\opt )  \text{ for } x\geq 0.
$$
Finally, letting $\tmu(x)=H-\tnu(x)$ finishes the proof.
\qed
\end{proof}

Since $\umax=\opt/\epsilon^2$, Lemma~\ref{lm:concave}
implies that $\tmu$ is an $\epsilon^3$-approximation of $\uti$.
Then, applying Theorem~\ref{thm:mainthm},
we can immediately obtain a polynomial time algorithm that 
runs in time $(n/\epsilon)^{\poly(1/\epsilon)}$ and finds a solution $S\in \calF$ such that
$\opt-\Exp[\uti(S)]\leq \epsilon^2 \umax \leq \epsilon \opt$, i.e., a PTAS. 

Now, we show how to get rid of the assumption S1.
From now on, the utility function $\uti$ is a general increasing concave utility function 
with $\uti(0)=0$.~\footnote{
The assumption that $\uti(0)=0$ is without loss of generality.
If $\uti(0)>0$, we can solve the problem with the new utility function $\uti(x)-\uti(0)$.
It is easy to verify a PTAS for the new problem is a PTAS for the original problem.
}
Let $\uti_H(x)=\min(\uti(x), H)$.
We can see that $\uti_H(x)$ satisfies S1.
We say a value $p\in \mathbb{R}^+$ is {\em huge} is if $p>T_2$.
Otherwise, we call it {\em normal}.
We use $\hugevalue$ to denote the set of huge values.
For each element $e$, let $w^{\vsmall}_e$ be the random variable which has
the same distribution as $w_e$ in the normal value region, and zero probability elsewhere.
For any $S\subseteq U$, let $w^{\vsmall}(S)=\sum_{e\in S}w^{\vsmall}_e$.
In the following lemma, we show $\uti_H$ is a good approximation for $\uti$
for normal values.

\begin{lemma}
\label{lm:normalapprox}
For any $S\in \calF$, we have that 
$$
\Exp[\uti(w^{\vsmall}(S))]-O(\epsilon)\opt\leq \Exp[\uti_H(w^{\vsmall}(S))] \leq \Exp[\uti(w^{\vsmall}(S))]. 
$$
\end{lemma}
\begin{proof}
It is obvious that 
$\Exp[\uti_H(w^{\vsmall}(S))] \leq \Exp[\uti(w^{\vsmall}(S))]. $
So, we only need to prove the first inequality.
For any $S\subseteq \calF$, 
we have $\Exp[\uti(w(S))]\leq \opt$.
By Markov inequality, 
$\Pr[w(S)\geq T_1] \leq \epsilon$, which implies
$\Pr[w^{\vsmall}(S)\geq T_1] \leq \epsilon$.
Now, we claim that for any integer $k\geq 1$, 
\begin{align}
\label{eq:jump}
\Pr\left[w^{\vsmall}(S)\geq (k+2)T_1\right] \leq \epsilon \Pr\left[w^{\vsmall}(S)\geq kT_1\right] .
\end{align}
Consider the following stochastic process.
Suppose the weights of the elements in $S$ are realized one by one
(say $w^{\vsmall}_{e_1},  \ldots, w^{\vsmall}_{e_n}$).
Let $Z_t$ be the sum of the first $t$ realized values.
Let $t_1$ be the first time such that 
$Z_{t_1+1}\geq kT_1$.
If this never happens, let $t_1=\infty$ and $Z_{t_1}=Z_n$.
Let event $E_1$ be $t_1\leq n$
and $E_2$ be $Z_n\geq (t+2)T_1$.
%Since the support of each $w^{\vsmall}_e$ is at most $T_1$,
%$E_2$ implies $E_1$.
Consider the random value $Z_n-Z_{t_1}=\sum_{t=t_1+1}^n w^{\vsmall}_{e_t}$.
As $w^{\vsmall}(S)=Z_n=\sum_{t=1}^n w^{\vsmall}_{e_t}$ and all $w^{\vsmall}_{e_t}$ are nonnegative,  
we can see that 
$$
\Pr[Z_n-Z_{t_1} >  T_1 \mid E_1]
=\Pr\Bigl[\,\sum_{t=t_1+1}^n w^{\vsmall}_{e_t} >  T_1 \mid t_1\leq n\,\Bigr]
\leq \Pr\Bigl[\,\sum_{t=1}^n w^{\vsmall}_{e_t} >  T_1\,\Bigr]
\leq \epsilon. $$
Moreover, we can see that 
event $E_1\wedge (Z_n-Z_{t_1} >  T_1)$ is a necessary condition for event $E_2$.
Hence, the claim holds because 
$$
\hspace{4cm}
\Pr[E_2]\leq \Pr[E_1] \Pr[Z_n-Z_{t_1} >  T_1 \mid E_1] \leq \epsilon \Pr[E_1]. 
\hspace{4cm}
$$
From \eqref{eq:jump}, we can see that 
$
\Pr\left[w^{\vsmall}(S)\geq 3T_1\right]\leq \epsilon^2,
$
$
\Pr\left[w^{\vsmall}(S)\geq 5T_1\right]\leq \epsilon^3, \ldots
$
so on and so forth.
Furthermore, we can see that
\begin{align*}
\Exp[\uti(w^{\vsmall}(S))]- \Exp[\uti_H(w^{\vsmall}(S))] 
&=\int_{H}^{\infty} \Pr[\uti(w^{\vsmall}(S))\geq x] \d x 
=\int_{H}^{\infty} \Pr[w^{\vsmall}(S)\geq \uti^{-1}(x)] \d x \\
&=\frac{\opt}{\epsilon}\int_{0}^{\infty} \Pr[w^{\vsmall}(S)\geq \uti^{-1}(H+k\opt/\epsilon)] \d k \\
&\leq \frac{\opt}{\epsilon}\int_{0}^{\infty} \Pr[w^{\vsmall}(S)\geq T_2+kT_1] \d k \\
&\leq \frac{2\opt}{\epsilon}\sum_{k=2/\epsilon}^{\infty} \epsilon^{k} 
\leq O(\epsilon \opt). 
\end{align*}
The first inequality holds due to the concavity of $\uti$
(or equivalently, the convexity of $\uti^{-1}$):
$$
\hspace{1cm}
\uti^{-1}(H+k\opt/\epsilon) \geq T_2+\uti^{-1}(k\opt/\epsilon) =
T_2+k \uti^{-1}(\opt/\epsilon) = T_2+k T_1 \text{ for }k \geq 0.
\hspace{1cm}
\Box
$$
\eat{
Consider a realization $\realization$ in which $w^{\vsmall}(S)=p$ for some $p\in [(t+2)T_1, (t+3)T_1] $.
Since the support of each $w^{\vsmall}_e$ is at most $T_1$,
we know that in realization $\realization$
there exists a subset $S'_{\realization}\subset S$
such that 
$w^{\vsmall}(S')=p'$ for some $p\in [tT_1, (t+1)T_1] $
(if there are many of them, we fix an arbitrary one).
Let $S''_{\realization}=S\setminus S'_{\realization}$.
\begin{align}
\Pr\left[w^{\vsmall}(S)\in [(t+2)T_1, (t+3)T_1] \right] 
= \sum_{\realization} \Pr[\realization] E_1[\realization]
= \sum_{\realization} \Pr_{S'_\realization}[\realization(S'_\realization)] 
\Pr_{S''_\realization}[\realization(S''_\realization)] 
E_1[\realization] 
\end{align}
}

%Since each $w^{\vsmall}_e$ is supported between 0 and $T_1$,
%we can see 
%$$
%\Var[w^{\vsmall}_e] = \Exp[(w^{\vsmall}_e)^2]-\Exp[w^{\vsmall}_e]^2\leq T_1^2.
%$$
%By Chebyshev's inequality, we have that 
%$$
%\Pr[w^{\vsmall}_e]
%$$
\end{proof}

Now, we handle the contribution from huge values.
Let $\hugevalue=\{p\mid p \geq T_2\}$ and
$$
\Hg(e)=\sum_{p\in \hugevalue} \Pr[w_e=p] \uti(p).
\footnote{
If $w_e$ is continuously distributed, we let
$\Hg(e)=\int_{T_2}^\infty \uti(x) p_e(x)\d x$.
The algorithm and analysis for the continuous case are exactly the same as the discrete case.
}
$$ 
$\Hg(e)$ can be thought as the expected contribution of huge values of $e$.
We need the following observation in \cite{bhalgat2014utility}:
the contribution of the huge values can be essentially linearized and separated from the contribution
of normal values, in the sense of the following lemma.
We note that the simple insight has been used in a variety of contexts
in stochastic optimization problems (e.g., \cite{munteanu2014smallest,huang2015approximating,li2014epsilon}).

\begin{lemma} 
\label{lm:hugevalue}
{\em (The first half of Theorem 2 in \cite{bhalgat2014utility})}
For any $S\in \calF$, we have that 
$$
\Exp[\uti(w(S))] \in (1\pm O(\epsilon)) \Bigl(\Exp[\uti(w^\vsmall (S))] + \sum_{e\in S} \Hg(e)\Bigr). 
$$
\end{lemma}

Now, we are ready to state our algorithm, which is an extension of the algorithm in 
Section~\ref{sec:algorithm}.
Using Lemma~\ref{lm:concave}, we first obtain a function $\tmu_H(x) = \sum_{k=1}^L c_k \phi^x_k$
such that $|\tmu_H(x)-\uti_H(x)|\leq \epsilon\opt$. 
The feature vector $\feature(e)$ is a $2L+1$-dimensional integer vector
$$
\feature(e)=\langle \alpha_{1}(e),\beta_{1}(e),\ldots, \alpha_L(e), \beta_L(e), \lfloor n\Hg(e)/\epsilon\opt\rfloor \rangle,
$$
where $\alpha_i(e), \beta_i(e)$ are defined as in \eqref{eq:feature} with respect to $w^{\vsmall}_e$.
In other words, we extend the original feature vector by one more coordinate which represents
the (scaled and rouned) contribution of huge values.
Similarly, each configuration $\conf(\bfv)$ is indexed by such a $2L+1$-dimensional vector $\bfv$.
The last coordinate of $\bfv$ is at most $n^2/\epsilon$.
As before, we let $\conf(\bfv)=1$ if and only if there
is a feasible solution $S\in \calF$ such that 
$\sum_{e\in S}\feature(e)=\bfv$.
We slightly modify the definition of $\val(\bfv)$ to incorporate the contribution of huge values, as the following:
$$
\val(\langle x_{1}, y_1,\ldots,x_L, y_L,  z \rangle)|=
\sum_{k=1}^{L}c_{k} e^{-x_{k}\gamma+i y_{k}\delta}+ z\cdot \frac{\epsilon \opt}{n}.
$$
Using the same technique as in Lemma~\ref{lm:config} and the pseudopolynomail time algorithm for 
$\probA$, we can compute the values of all configurations in time $(n/\epsilon)^{\poly(1/\epsilon)}$.
Then, we return the solution for which the corresponding configuration 
$\conf(\langle x_{1}, y_1,\ldots,x_L, y_L,  z \rangle)$ 
that takes value 1 and
maximizes $|\val(\bfv)|$.

\vspace{0.5cm}
\begin{proofof} {Theorem~\ref{thm:mainthmconcave}}
The proof is similar to that of Theorem~\ref{thm:mainthm}.
Let any $S\subseteq U$, let 
$\bfv_S=\sum_{e\in S}\feature(e)$
Using the same proof of Lemma~\ref{lm:close} 
and the fact that $|\tmu_H(x)-\uti_H(x)|\leq \epsilon\opt$, 
we can see that for any $S\in \calF$,
$$
|\val(\bfv_S)|=\Exp[\mu(w^{\vsmall}(S))] + \sum_{e\in S}\Hg(e)\pm O(\epsilon \opt).
$$
Combining with 
Lemma~\ref{lm:normalapprox} and 
Lemma~\ref{lm:hugevalue},
we can further see that for any $S\in \calF$,
$$
|\val(\bfv_S)|=(1\pm O(\epsilon))\Exp[\mu(w(S))]\pm O(\epsilon \opt).
$$
Suppose $S$ is our solution and $S^{*}$ is the optimal solution for utility function $\uti$.
From our algorithm, we know that 
$|\val(\bfv_S)|\geq |\val(\bfv_{S^*})|$, which implies
$\Exp[\mu(w(S))]\geq(1-O(\epsilon))\opt$
and completes the proof.
\qed
\end{proofof}

\section{Class $\classC$}
\label{sec:classC}

Recall that $\uti(x)\in\classC$ is a positive, differentiable and increasing function
and $\frac{\d}{\d x}\uti(x)\in [\lb, \ub]$ for some constants $\lb,\ub>0$ and all $x\geq 0$.
By scaling, we can assume without loss of generality that 
$\lb\leq 1\leq \ub$.
Our algorithm is almost the same as the one in Section~\ref{sec:concave} except that 
we use a slightly different set of parameters:
$$
H=\frac{\opt}{\epsilon^2}\cdot \frac{\lb}{\ub},\quad
T_1=\frac{\opt}{\epsilon}\cdot \frac{\lb}{\ub}, \quad
T_2=\uti^{-1}\left(\frac{\opt}{\epsilon^2}\cdot \frac{\lb}{\ub}\right),\quad
\text{ and }\quad\hugevalue=\{p\mid p \geq T_2\}
$$

%First, we guess the value of $\opt$ up to a factor $(1\pm \epsilon)$.
Let $\uti_H(x)=\min(\uti(x), H)$.
%\footnote{
%We always use $\frac{\d}{\d x}\mu$ to denote the derivative of $\mu$, and use $\uti_H$
%to denote a different function.
%}.
So, $\uti_H$ satisfies assumption S1.
However, we can not use Lemma~\ref{lm:concave} since it requires concavity.
Nevertheless, we can still approximate $\uti_H$ by a short exponential sum, as in Lemma~\ref{lm:increasing}.
The remaining algorithm is exactly the same as the one in Section~\ref{sec:concave}.
To prove the performance guarantee, we only need to prove analogues of
Lemma~\ref{lm:normalapprox}
and 
Lemma~\ref{lm:hugevalue}. 
Now, we prove the aforementioned lemmas.

\begin{lemma}
\label{lm:increasing}
For any $\epsilon>0$, we can obtain 
an exponential sum 
$\tmu_H(x)=\sum_{k=1}^{L} c_k \phi_k^x$,
such that $|\tmu_H(x)-\uti_H(x)|\leq O(\epsilon\opt)$ for all $x>0$,
where 
$L=\poly(1/\epsilon)$, $|c_k|\leq \poly(1/\epsilon)H$ and $|\phi_k|\leq 1$ for all $k$.
\end{lemma}
\begin{proof}
Since $\frac{\d}{\d x}\uti(x)\in [\lb, \ub]$, we can see that 
$H/\ub\leq T_2\leq H/\lb$.
Consider $\nu(x)= H-\uti_H(x)$.
We can see $\nu$ is a decreasing, differentiable function and
$\nu(x)=0$ for all $x>T_2$.
Consider the function $\bnu(x)=\frac{1}{H}\nu(xH)$.
First, let $T_{\epsilon}= 1/\lb \geq T_2/H$ and we can see
$\bnu(x)=0$ for $x>T_{\epsilon}$.
Hence,  $\bnu\in \classA$ and satisfies 
$\ub$-Lipschitz condition.
By Theorem~\ref{thm:mainholder}, we can compute 
a function $\tnu(x)=\sum_{k=1}^L d_k \psi^x_k$, which is 
an $\epsilon^2$-approximation of $\nu$, with $L=\poly(1/\epsilon)$, $|\psi_k|\leq 1$ 
and $d_k=\poly(1/\epsilon)$
for all $k$.
Therefore, $\tmu_H(x)=H-H\tnu(x/H)=H-\sum_{k=1}^L (Hd_k) (\psi_k^{1/H})^x$
is the desired approximation.
\qed
\end{proof}

The following lemma is an analogue of Lemma~\ref{lm:normalapprox}.
\begin{lemma}
\label{lm:normalapprox2}
For any $S\in \calF$, we have that 
$$
\Exp[\uti_H(w^{\vsmall}(S))] \in (1\pm O(\epsilon)) \Exp[\uti(w^{\vsmall}(S))]. 
$$
\end{lemma}
\begin{proof}
The proof is almost the same as that of Lemma~\ref{lm:normalapprox}, 
except that the last line makes use of the bounded derivative assumption
(instead of the concavity):
$$
\hspace{3.5cm}
\uti^{-1}\Bigl(H+\frac{k\opt}{\epsilon}\Bigr) \geq T_2+\frac{k\opt}{\epsilon \ub} \geq T_2+k T_1 \text{ for }k > 0.
\hspace{3.5cm}
\Box
$$
\eat{
Consider a realization $\realization$ in which $w^{\vsmall}(S)=p$ for some $p\in [(t+2)T_1, (t+3)T_1] $.
Since the support of each $w^{\vsmall}_e$ is at most $T_1$,
we know that in realization $\realization$
there exists a subset $S'_{\realization}\subset S$
such that 
$w^{\vsmall}(S')=p'$ for some $p\in [tT_1, (t+1)T_1] $
(if there are many of them, we fix an arbitrary one).
Let $S''_{\realization}=S\setminus S'_{\realization}$.
\begin{align}
\Pr\left[w^{\vsmall}(S)\in [(t+2)T_1, (t+3)T_1] \right] 
= \sum_{\realization} \Pr[\realization] E_1[\realization]
= \sum_{\realization} \Pr_{S'_\realization}[\realization(S'_\realization)] 
\Pr_{S''_\realization}[\realization(S''_\realization)] 
E_1[\realization] 
\end{align}
}

\end{proof}

We handle the contribution from huge values in the same way.
Recall
$
\Hg(e)=\sum_{p\in \hugevalue} \Pr[w_e=p] \uti(p).
$ 
The following lemma is an analogue of Lemma~\ref{lm:hugevalue}.

\begin{lemma} 
\label{lm:hugevalue2}
For any $S\in \calF$, we have that 
$$
\Exp[\uti(w(S))] \in (1\pm O(\epsilon)) \Bigl(\Exp[\uti(w^\vsmall (S))] + \sum_{e\in S} \Hg(e)\Bigr). 
$$
\end{lemma}
\begin{proof}
We can use exactly the same proof of Theorem 2 in \cite{bhalgat2014utility} to show that
$\Exp[\uti(w(S))] \geq (1- O(\epsilon)) (\Exp[\uti(w^\vsmall (S))] + \sum_{e\in S} \Hg(e))$, 
as the proof holds even without the concavity assumption.
The other direction requires a different argument, which goes as follows.
Let $E_0$ be the event that no $w_e$ is realized to a huge value
and $E_{e,p}$ be the event that $w_e$ is realized to value $p\in \hugevalue$.
By Markov inequality, we have $\Pr[E_0]\geq 1-\epsilon\lb/\ub$.
Moreover, using the fact that $e^{-x}\geq 1-x$, we have that
\eat{
\begin{align*}
\sum_{e\in S, p\in \hugevalue} \Pr[E_{e,p}] 
& \leq 
\Pr \bigl[\bigcup_{e\in S, p\in \hugevalue}  E_{e,p}\, \bigr] 
+\sum_{e\in S, p\in \hugevalue}\sum_{e'\ne e, p'\in \hugevalue}
\Pr[E_{e,p}\wedge E_{e',p'}] \\
& \leq
\epsilon + \Biggl( \sum_{e\in S, p\in \hugevalue} \Pr[E_{e,p}] \Biggr)^2
\end{align*}
}
\begin{align*}
\exp\Bigl(-\sum_{e\in S, p\in \hugevalue} \Pr[E_{e,p}] \Bigr)
&=
\prod_{e\in S}  \exp\Bigl( -\sum_{p\in \hugevalue}\Pr[E_{e,p}] \Bigr) \\
& \geq
\prod_{e\in S} \Bigl(1- \sum_{p\in \hugevalue}\Pr[E_{e,p}] \Bigr) \\
&=\Pr[E_0]\geq 1-\epsilon\lb/\ub\geq 
1-\epsilon.
\end{align*}
Hence, 
$\sum_{e\in S, p\in \hugevalue} \Pr[E_{e,p}]\leq -(\ln(1-\epsilon))\leq 2\epsilon$
for $\epsilon<1/2$.

Next, we can see that
$
\Exp[\uti(w(S)) \mid E_0] \Pr[E_0] \leq \Exp[\uti(w^{\vsmall}(S)) ]
$
(for each realization of $\{w_e\}_{e\in S}$ satisfying $E_0$, there is 
a corresponding realization of $\{w^{\vsmall}_e\}_{e\in S}$).
From the bounded derivative assumption, we can also see that
$
\Exp[\uti(w(S)) \mid E_{e,p}]\leq \mu(p) + \ub \cdot\Exp[w(S)]
$
By inclusion-exclusion, we have that
\begin{align*}
\Exp[\uti(w(S))] 
& \leq \Exp[\uti(w(S)) \mid E_0] \Pr[E_0] 
+ \sum_{e\in S, p\in \hugevalue}\Exp[\uti(w(S)) \mid E_{e,p}] \Pr[E_{e,p}] \\
& \leq \Exp[\uti(w^{\vsmall}(S)) ] 
+ \sum_{e\in S, p\in \hugevalue} \Pr[E_{e,p}] \mu(p)
+ \sum_{e\in S, p\in \hugevalue}\Pr[E_{e,p}] \cdot \ub \cdot \Exp[w(S)]\\
& \leq \Exp[\uti(w^{\vsmall}(S)) ] 
+ \sum_{e\in S}\Hg(e)
+ O(\epsilon) \Exp[\uti(w(S))].
\end{align*}
The last inequality holds since 
$\Exp[w(S)]\leq \Exp[\mu(w(S))]/\lb = O(\Exp[\mu(w(S))])$. 
\qed
\end{proof}

\section{Applications}
\label{sec:app}

We first consider two utility functions $\chi(x)$ and $\tchi(x)$
presented in the introduction.
Note that maximizing $\Exp[\chi(w(S))]$ is equivalent to maximizing $\Prob(w(S)\leq 1)$.
The following lemma is straightforward.

\begin{lemma}
\label{lm:sanwich}
For any solution $S$,
$$
\Prob(w(S)\leq 1)\leq \Exp[\tchi(w(S))] \leq \Prob(w(S)\leq 1+\delta).
$$
\end{lemma}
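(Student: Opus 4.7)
The plan is to establish the sandwich inequality pointwise first, and then take expectations under the distribution of $w(S)$ to obtain the lemma. Observe that the claim amounts to the two probabilities bracketing $\Exp[\tchi(w(S))]$ from below and above, so it suffices to show
\begin{equation*}
\chi(x) \;\leq\; \tchi(x) \;\leq\; \mathbf{1}[x \leq 1+\delta] \qquad \forall x \geq 0,
\end{equation*}
where $\chi$ is the threshold function defined earlier and $\mathbf{1}[\cdot]$ is the usual indicator. Once this is in hand, integrating both inequalities against the density $p_{S}$ of $w(S)$ (all three functions are nonnegative and measurable) immediately yields
\begin{equation*}
\Prob(w(S) \leq 1) \;\leq\; \Exp[\tchi(w(S))] \;\leq\; \Prob(w(S) \leq 1+\delta),
\end{equation*}
since $\Exp[\chi(w(S))] = \Prob(w(S)\leq 1)$ and $\Exp[\mathbf{1}[w(S)\leq 1+\delta]] = \Prob(w(S)\leq 1+\delta)$.

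To verify the pointwise inequality I would break into the three cases dictated by the piecewise definition of $\tchi$. For $x \in [0,1]$, all three functions equal $1$. For $x \in [1, 1+\delta]$, we have $\chi(x) = 0$ and $\mathbf{1}[x\leq 1+\delta] = 1$, while $\tchi(x) = 1 + \frac{1}{\delta} - \frac{x}{\delta}$ is a linear interpolation from $1$ at $x = 1$ down to $0$ at $x = 1+\delta$, hence $0 \leq \tchi(x) \leq 1$. For $x > 1+\delta$, all three functions equal $0$. Thus the chain holds everywhere on $[0,\infty)$.

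There is no real obstacle here; the lemma is essentially a direct consequence of the fact that $\tchi$ is, by construction, a continuous interpolant that dominates $\chi$ (because it extends the value $1$ linearly down to $0$ over the interval $[1,1+\delta]$ rather than dropping to $0$ instantly at $x=1$) and is dominated by the indicator of $[0,1+\delta]$ (because it never exceeds $1$ on that interval and vanishes outside it). The only minor point worth stating carefully is that we are using linearity and monotonicity of the expectation, which requires the three functions to be bounded and measurable — both trivially satisfied.
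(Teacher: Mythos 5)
Your proof is correct: the pointwise sandwich $\chi(x)\leq\tchi(x)\leq\mathbf{1}[x\leq 1+\delta]$ followed by monotonicity of expectation is exactly the argument intended. The paper simply labels this lemma ``straightforward'' and omits the proof, and what you have written is the natural filling-in of that gap.
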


\begin{corollary}
\label{cor:thres}
Suppose there is a pseudopolynomial time algorithm for
the exact version of $\probA$.
Then, for any fixed constants $\epsilon>0$ and $\delta>0$,
there is an algorithm
that runs in time $(\frac{n}{\epsilon})^{\poly(1/\epsilon)}$,
and produces a solution $S\in \calF$
such that
$$
\Prob(w(S)\leq 1+\delta) +\epsilon \geq \max_{S'\in \calF}\Prob(w(S')\leq 1)
$$
\end{corollary}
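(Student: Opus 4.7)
The plan is to instantiate the general machinery (Theorems~\ref{thm:mainthm} and~\ref{thm:mainholder}) on the specific utility function $\tchi$ from (\ref{ex:utility1}), and then use Lemma~\ref{lm:sanwich} to translate the additive guarantee on expected utility into the desired threshold-probability guarantee. All the heavy lifting is already packaged in the preceding theorems, so this is essentially a bookkeeping argument.

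First I would verify that $\tchi$ meets the hypotheses of Theorem~\ref{thm:mainholder}: it is bounded in $[0,1]$, $\lim_{x\to\infty}\tchi(x)=0$ (indeed $\tchi$ is identically zero beyond $1+\delta$), and since $\delta$ is a fixed constant, $\tchi$ is $1$-H\"older with coefficient $C=1/\delta = O(1)$, so the $\alpha$-H\"older condition holds with $\alpha=1>1/2$. Theorem~\ref{thm:mainholder} therefore yields an exponential sum $\tmu(x)=\sum_{k=1}^L c_k \phi_k^x$ with $|\phi_k|\le 1$ that is an $\epsilon$-approximation of $\tchi$ on $[0,\infty)$, with $L = \poly(1/\epsilon)$ terms. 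Then I would feed this into Theorem~\ref{thm:mainthm} (using the assumed pseudopolynomial exact algorithm for $A$) to obtain, in time $(n/\epsilon)^{O(L)}$, a feasible $S\in\calF$ with
\begin{equation*}
\Exp[\tchi(w(S))] \;\ge\; \max_{S'\in\calF}\Exp[\tchi(w(S'))] - O(\epsilon).
\end{equation*}

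Next I would chain together the two inequalities in Lemma~\ref{lm:sanwich}. Let $S^\dagger=\arg\max_{S'\in\calF}\Prob(w(S')\le 1)$. The lower bound in Lemma~\ref{lm:sanwich} gives $\Exp[\tchi(w(S^\dagger))] \ge \Prob(w(S^\dagger)\le 1) = \max_{S'}\Prob(w(S')\le 1)$, so
\begin{equation*}
\max_{S'\in\calF}\Exp[\tchi(w(S'))] \;\ge\; \Exp[\tchi(w(S^\dagger))] \;\ge\; \max_{S'\in\calF}\Prob(w(S')\le 1).
\end{equation*}
Combined with the algorithm's guarantee above, $\Exp[\tchi(w(S))] \ge \max_{S'}\Prob(w(S')\le 1) - O(\epsilon)$. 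The upper bound in Lemma~\ref{lm:sanwich} then gives $\Prob(w(S)\le 1+\delta)\ge \Exp[\tchi(w(S))]$, so after rescaling $\epsilon$ by a constant at the outset,
\begin{equation*}
\Prob(w(S)\le 1+\delta) + \epsilon \;\ge\; \max_{S'\in\calF}\Prob(w(S')\le 1),
\end{equation*}
which is the desired inequality.

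For the running time, note that since $\tchi(x)=0$ for $x>1+\delta$, we can take $T_\epsilon = 1+\delta = O(1)$ uniformly in $\epsilon$. By the choice of $h$ discussed at the end of Section~\ref{subsec:fourier}, $h = O(\tfrac{1}{T_\epsilon}\log\tfrac{1}{\epsilon})$, so $hT_\epsilon = O(\log\tfrac{1}{\epsilon})$. Plugging $\alpha=1$, $|f|_{C^{0,\alpha}}=O(1)$, and this value of $hT_\epsilon$ into Corollary~\ref{cor:fourier} yields $L=O(\tfrac{1}{\epsilon}\log^2\tfrac{1}{\epsilon})$, well within the $O(\tfrac{1}{\epsilon^2}\log\tfrac{1}{\epsilon})$ claimed in the corollary's exponent, and Theorem~\ref{thm:mainthm} then gives the overall running time $(n/\epsilon)^{O(L)}$.

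The only step requiring care is the sandwiching in the middle paragraph: one must compare the approximate solution $S$ to the \emph{expected-utility} optimum $S^*$ (which is what the algorithm's guarantee bounds), and the expected-utility optimum in turn to the \emph{probability-threshold} optimum $S^\dagger$ (via the lower half of Lemma~\ref{lm:sanwich}), using a sign-correct chain of inequalities. Everything else is a direct invocation of the already-established theorems.
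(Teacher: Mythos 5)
Your proposal is correct and follows essentially the same route as the paper's two-line proof, which likewise invokes Theorem~\ref{thm:mainthm}, Theorem~\ref{thm:mainholder}, and Lemma~\ref{lm:sanwich} and notes $T_\epsilon=O(1)$, $h=O(\log\frac{1}{\epsilon})$. Your more careful bookkeeping even yields a slightly tighter $L=O(\frac{1}{\epsilon}\log^2\frac{1}{\epsilon})$, comfortably inside the exponent the corollary claims.
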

\begin{proof}
By Theorem~\ref{thm:mainthm}, Theorem~\ref{thm:mainholder} and Lemma~\ref{lm:sanwich}, we can easily
obtain the corollary.
Note that we can choose $T_{\epsilon}=2$ for any  $\delta\in (0,1)$ and $\epsilon>0$. Thus 
$L=\poly(1/\epsilon)$. \qed
\end{proof}

Now, let us see some applications of our general results
to specific problems.

\topic{Stochastic Shortest Path}:
Finding a path with the exact target length (we allow non-simple paths)\footnote{
The exact version of {\em simple} path is NP-hard, since it
includes the Hamiltonian path problem as a special case.
}
 can be easily done in pseudopolynomial time
by dynamic programming.
%Therefore, as discussed in Section~\ref{subsec:contribution}, Corollary~\ref{cor:thres} generalizes several results for stochastic shortest path
%in prior work~\cite{nikolova2006stochastic,nikolova2010approximation}.

\topic{Stochastic Spanning Tree}:
We are given a graph $G$, where the weight of each edge $e$
is an independent, nonnegative random variable. 
Our objective is to find a spanning tree $T$ in $G$,
such that $\Prob(w(T)\leq 1)$ is maximized.
Polynomial time algorithms have been developed for Gaussian distributed edges~\cite{ishii1981stochastic,geetha1993stochastic}.
To the best of our knowledge, no approximation algorithm with provable guarantee is known for other distributions.
Noticing there exists a pseudopolynomial time algorithm for the exact spanning tree problem \cite{barahona1987exact},
we can directly apply Corollary~\ref{cor:thres}.

\topic{Stochastic $k$-Median on Trees}:
The problem asks for a set $S$ of $k$ nodes in the given probabilistic tree $G$
such that $\Prob(\sum_{v\in V(G)} \dist(v,S)\leq 1)$ is maximized,
where $\dist(v,S)$ is the minimum distance from $v$ to any node in $S$ in the tree metric.
The $k$-median problem can be solved optimally in polynomial time on trees
by dynamic programming \cite{kariv1979algorithmic}.
It is straightforward to modify the dynamic program to get a pseudopolynomial time algorithm
for the exact version.

\topic{Stochastic Knapsack with Random Sizes}:
We are given a set $U$ of $n$ items.
Each item $i$ has a random size
$w_{i}$ and a deterministic profit $v_{i}$.
We are also given a positive constant $0\leq \gamma\leq 1$.
The goal is to find a subset $S\subseteq U$ such that
$\Prob(w(S)\leq 1)\geq \gamma$
and the total profit $v(S)=\sum_{i\in S}v_{i}$ is maximized.

If the profits of the items are polynomially bounded integers,
we can see the optimal profit is also a polynomially bounded integer.
We can first guess the optimal profit.
For each guess $g$, we solve the following problem:
find a subset $S$ of items such that the total profit of $S$ is exactly $g$
and $\Exp[\tchi(w(S))]$ is maximized.
The exact version of the deterministic problem is to find a solution $S$
with a given total size and a given total profit, which can be easily solved
in pseudopolynomial time by dynamic programming.
Therefore, by Corollary~\ref{cor:thres}, we can easily show
that we can find in polynomial time a set $S$ of items such that the total profit $v(S)$ is at least the optimum
and $\Prob(w(S)\leq 1+\epsilon)\geq (1-\epsilon) \gamma$ for any constant $\epsilon$ and $\gamma$.

If the profits are general integers, we can use the standard scaling
technique to get a $(1-\epsilon)$-approximation for the total profit.
%See Appendix~\ref{app:knapsack} for the details.
We first make a guess of the optimal profit, rounded down to the nearest power of
$(1+\epsilon)$.
There are at most $\log_{1+\epsilon} \frac{n\max_{i}v_{i}}{\min_{i}v_{i}}$ guesses.
For each guess $g$, we solve the following problem.
We discard all items with a profit larger than $g$.
Let $\Delta=\frac{\epsilon g}{n^{2}}$.
For each item with a profit smaller than $\frac{\epsilon g}{n}$, we set its new profit to be $\bar{v}_{i}=0$.
Then, we scale each of the rest profits $v_{i}$ to $\bar{v}_{i}=\Delta\lfloor \frac{v_{i}}{\Delta}\rfloor$.
Now, we define the feasible set
$$\calF(g)=\{S \mid \sum_{i\in S} (1-2\epsilon)g\leq \sum_{i\in S}\bar{v}_{i} \leq (1+2\epsilon)g\}.$$
Since there are at most $\frac{n^{2}}{\epsilon}$ distinct $\bar{v}$ values,
we can easily show that finding a solution $S$ in $\calF(g)$
with a given total {\em size} can be solved in pseudopolynomial time by dynamic programming.

Denote the optimal solution by $S^{*}$ and the optimal profit by $OPT$.
Suppose $g$ is the right guess, i.e., $(\frac{1}{1+\epsilon})OPT\leq g\leq OPT$.
We can easily see that for any solution $S$, we have that
$$
(1-\frac{1}{n})\sum_{i\in S}v_{i} -\epsilon g\leq
\sum_{i\in S}\bar{v}_{i}\leq
\sum_{i\in S}v_{i}
$$
where the first inequalities are due to $v_{i}\geq \frac{\epsilon g}{n}$ and we set at most $\epsilon g$
profit to zero.
Therefore, we can see $S^{*}\in \calF(g)$.
Applying Corollary~\ref{cor:thres},
we obtain a solution $S$ such that
$
\Prob(w(S)\leq 1+\delta) +\epsilon \geq \Prob(w(S^{*})\leq 1+\delta).
$
Moreover, the profit of this solution
$
v(S)=\sum_{i\in S}v_{i} \geq \sum_{i\in S}\bar{v}_{i} \geq (1-2\epsilon) g \geq (1-O(\epsilon))OPT.
$

In sum, we have obtained the following result.
\begin{theorem}
\label{cor:knapsack}
For any constants $\epsilon>0$  and $\gamma>0$, there is a polynomial time algorithm
to compute a set $S$ of items such that the total profit $v(S)$ is within a $1-\epsilon$ factor of the optimum
and $\Prob(w(S)\leq 1+\epsilon)\geq (1-\epsilon) \gamma$.
\end{theorem}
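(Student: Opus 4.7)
The plan is to reduce the two halves of the claim (the profit guarantee and the probability guarantee) to machinery that has already been built. I would organize the argument in three stages: (1) reduce to polynomially bounded integer profits via scaling; (2) for each candidate optimal profit $g$, cast the problem as \eum\ for the utility $\tchi$ on the exact-profit constraint; (3) invoke Corollary~\ref{cor:thres} on this auxiliary problem using the pseudopolynomial exact-version algorithm from standard two-dimensional knapsack DP.

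First I would handle the profit side by standard knapsack scaling. Let $v^{*}$ be the optimum and replace each $v_{i}$ by $\widehat{v}_{i}=\lfloor v_{i}\cdot \frac{n}{\epsilon v_{\max}}\rfloor$, where $v_{\max}=\max_{i}v_{i}$. This makes the $\widehat{v}_{i}$'s integers in $[0,n/\epsilon]$, and the standard analysis shows that any solution optimal with respect to $\widehat{v}$ has true profit at least $(1-\epsilon)v^{*}$, while the feasibility constraint on sizes is untouched. So it suffices to prove the theorem under the assumption that profits are polynomially bounded integers, in which case the optimal profit $v^{*}$ is itself a polynomially bounded integer and can be enumerated: for each candidate $g\in\{0,1,\ldots,nv_{\max}\}$ we solve a subproblem and then return the best.

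For each fixed guess $g$, define problem $A_{g}$: among all subsets $S$ with $\sum_{i\in S} v_{i}=g$, maximize $\Exp[\tchi(w(S))]$. The deterministic exact version of $A_{g}$ asks for a subset $S$ with prescribed total size and prescribed total profit $g$; this is the classical two-dimensional subset-sum problem and is solvable in time $O(n\cdot K\cdot g)$ by dynamic programming, hence pseudopolynomial. Applying Corollary~\ref{cor:thres} to $A_{g}$ gives, for any constants $\epsilon',\delta>0$, a polynomial-time algorithm returning a set $S_{g}$ with $v(S_{g})=g$ satisfying
\[
\Prob(w(S_{g})\leq 1+\delta)\ +\ \epsilon'\ \geq\ \max_{S':\,v(S')=g}\Prob(w(S')\leq 1).
\]
Let $S^{*}$ be the true optimum of the original stochastic knapsack instance, and let $g^{*}=v(S^{*})$. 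For the guess $g=g^{*}$, the right-hand side is at least $\Prob(w(S^{*})\leq 1)\geq \gamma$, so choosing $\epsilon'=\delta=\epsilon\gamma$ yields
\[
\Prob(w(S_{g^{*}})\leq 1+\epsilon)\ \geq\ \gamma-\epsilon\gamma\ =\ (1-\epsilon)\gamma,
\]
with $v(S_{g^{*}})=g^{*}=v^{*}$. Among all guesses $g$, I return the one with largest profit $g$ for which the computed $S_{g}$ satisfies $\Prob(w(S_{g})\leq 1+\epsilon)\geq (1-\epsilon)\gamma$ (the expected-utility value of $\tchi$ serves as a proxy for this probability via Lemma~\ref{lm:sanwich}); this is at least $g^{*}$. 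Composing with the outer profit-scaling then gives the final guarantee $v(S)\geq (1-\epsilon)v^{*}$.

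The main subtlety is keeping the error budgets consistent: Corollary~\ref{cor:thres} produces an \emph{additive} slack $\epsilon'$ on the overflow probability, whereas the theorem statement asks for a \emph{multiplicative} slack $(1-\epsilon)\gamma$. Since $\gamma$ is a given constant, I resolve this by running Corollary~\ref{cor:thres} with parameter $\epsilon'=\epsilon\gamma$; the running time degradation is $(n/(\epsilon\gamma))^{O((1/\epsilon^{2})\log(1/\epsilon))}$, still polynomial for fixed $\epsilon,\gamma$. A minor secondary point is verifying that the guess-and-check step does not amplify the failure probability or the error across the $O(n v_{\max})$ guesses; since each subproblem is solved deterministically and we simply take the maximum-profit solution meeting the overflow condition, this is immediate. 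The rest is bookkeeping, and I expect no deeper obstacle.
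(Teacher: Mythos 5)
Your proposal follows essentially the same route as the paper: reduce to polynomially bounded integer profits by scaling, enumerate the optimal profit $g$, for each $g$ solve the exact-profit subproblem via \eum\ for $\tchi$ using the two-dimensional knapsack DP as the pseudopolynomial oracle, and invoke Corollary~\ref{cor:thres}. One small caveat worth flagging: the Ibarra--Kim rounding $\widehat{v}_{i}=\lfloor v_{i}\cdot n/(\epsilon v_{\max})\rfloor$ relies on $v^{*}\geq v_{\max}$, which in the stochastic setting holds only after first discarding every item $i$ with $\Prob(w_{i}\leq 1)<\gamma$ (such items can never be in a feasible solution); the paper sidesteps this by guessing the optimum to a $(1+\epsilon)$ factor and scaling by $\Delta=\epsilon g/n^{2}$ in Appendix~\ref{app:knapsack}. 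Your explicit choice $\epsilon'=\delta=\epsilon\gamma$ to convert the additive slack of Corollary~\ref{cor:thres} into the multiplicative $(1-\epsilon)\gamma$ guarantee is a detail the paper leaves implicit and is handled correctly.
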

Bhalgat et al.~\cite[Theorem 8.1]{bhalgat10} obtained the same result, with a running time $n^{2^{\poly(1/\epsilon)}}$,
while our running time is $n^{\poly(1/\epsilon)}$.

Moreover, we can easily extend our algorithm to generalizations of the knapsack problem
if the corresponding exact version has a pseudopolynomial time algorithm.
For example, we can get the same result for the partial-ordered knapsack problem with tree constraints \cite{garey-johnson:79,safer04fully}.
In this problem, items must be chosen in accordance with specified precedence constraints and these precedence constraints form
a partial order and the underlining undirected graph is a tree (or forest). A pseudopolynomial algorithm for
this problem is presented in \cite{safer04fully}.

\topic{Stochastic Knapsack with Random Profits}:
We are given a set $U$ of $n$ items.
Each item $i$ has a deterministic size
$w_{i}$ and a random profit $v_{i}$.
The goal is to find a subset of items that can be packed into a knapsack with capacity $1$
and the probability that the profit is at least a given threshold $T$ is maximized.
Henig \cite{henig1990risk} and Carraway et al. \cite{carraway1993algorithm} studied this problem for normally distributed profits
and presented dynamic programming and branch and bound heuristics to solve this problem optimally.

We can solve the equivalent problem of minimizing the probability that the profit is at most the given threshold,
subject to the capacity constraint.
%It is straightforward to modify our algorithm to work for the minimization problem and we can also get
%an $\epsilon$ additive error for any $\epsilon>0$.
We first show that relaxing the capacity constraint is necessary.
Consider the following deterministic knapsack instance. 
The profit of each item is the same as its size.
The given threshold is $1$. We can see that the optimal probability is $1$ if and only if there is a subset of items of total size
exactly $1$. Otherwise, the optimal probability is $0$.
However, determining whether these is a subset of items 
with total size exactly 1 is NP-Complete.
Therefore, it is NP-hard to approximate the original problem within any additive error less than $1$
without violating the capacity constraint.

The corresponding exact version of the deterministic problem is to find
a set of items $S$ such that $w(S)\leq 1$ and $v(S)$ is equal to a given target value.
In fact, there is no pseudopolynomial time algorithm for this problem.
Since otherwise we can get an $\epsilon$ additive approximation without violating the capacity constraint,
contradicting the lower bound argument.
Note that a pseudopolynomial time algorithm here should run in time polynomial in
the profit value (not the size).
However, if the sizes can be encoded in $O(\log n)$ bits (we only have a polynomial
number of different sizes), we can solve the problem
in time polynomial in $n$ and the largest profit value by standard dynamic programming.

For general sizes, we can round the size of each item down to the nearest multiple of $\frac{\delta}{n}$.
Then, we can solve the exact version in pseudopolynomial time $\poly(\max_i v_i, n, 1/\delta)$ by dynamic programming.
It is easy to show that for any subset of items, its total size is at most the total rounded size plus $\delta$.
Therefore, the total size of our solution is at most $1+\delta$.
We summarize the above discussion in the following theorem.

\begin{theorem}
If the optimal probability is $\Omega(1)$, we can find in time $(n/\epsilon\delta)^{\poly(1/\epsilon)}$
a subset $S$ of items such that
$\Prob(v(S)>(1-\epsilon)T)\geq (1-\epsilon)\opt$ and $w(S)\leq 1+\delta$, for any constant $\epsilon>0$.
\end{theorem}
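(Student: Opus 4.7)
The plan is to reduce the problem to an instance that fits the framework of Theorem~\ref{thm:mainthm}, augmented by (i) a scaling of the deterministic sizes to handle the inequality capacity constraint, and (ii) a switch from maximizing $\Prob(v(S)\ge T)$ to minimizing a smoothed probability. First I would replace the maximization by the equivalent minimization of $\Prob(v(S)\le T)$, and smooth the sharp threshold into the Lipschitz utility $\tchi_T(x)$ that equals $1$ for $x\le (1-\epsilon)T$, decreases linearly to $0$ on $[(1-\epsilon)T,T]$, and vanishes for $x\ge T$. A sandwich identical in spirit to Lemma~\ref{lm:sanwich} gives $\Prob(v(S)\le (1-\epsilon)T)\le \Exp[\tchi_T(v(S))]\le \Prob(v(S)\le T)$ for every $S$. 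Since $T$ is a constant and $\tchi_T$ is $O(1/\epsilon)$-Lipschitz (hence $1$-H\"older with constant coefficient for fixed $\epsilon$), Theorem~\ref{thm:mainholder} yields a short exponential-sum approximation, and the minimization variant of Theorem~\ref{thm:mainthm} becomes applicable.

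Next I would deal with the inequality capacity constraint $w(S)\le 1$ by a standard scaling: set $\widehat{w}_i=\lfloor (n/\epsilon)w_i\rfloor$, so that any $S$ with $\widehat{w}(S)\le n/\epsilon$ automatically satisfies $w(S)\le 1+\epsilon$. For every target $\widehat{W}\in\{0,1,\ldots,\lfloor n/\epsilon\rfloor\}$ I would invoke the algorithm of Theorem~\ref{thm:algofortmu} on the deterministic exact problem ``find $S$ with $\widehat{w}(S)=\widehat{W}$,'' which admits a pseudopolynomial dynamic program. The subroutine returns some $S_{\widehat{W}}$ with $\widehat{w}(S_{\widehat{W}})=\widehat{W}$ whose expected utility $\Exp[\tchi_T(v(S_{\widehat{W}}))]$ is within additive $\epsilon$ of the minimum over all subsets of scaled size exactly $\widehat{W}$. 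The final output is the $S_{\widehat{W}}$ minimizing $\Exp[\tchi_T(v(S_{\widehat{W}}))]$; by construction it satisfies $w(S)\le 1+\epsilon$.

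For the analysis, let $S^{*}$ be an optimal feasible solution, so $w(S^{*})\le 1$ and $\Prob(v(S^{*})\ge T)=\mathrm{OPT}$. Then $\widehat{w}(S^{*})\le n/\epsilon$, and when the enumeration reaches $\widehat{W}=\widehat{w}(S^{*})$ the subroutine returns some $S$ with $\Exp[\tchi_T(v(S))]\le \Exp[\tchi_T(v(S^{*}))]+\epsilon\le \Prob(v(S^{*})\le T)+\epsilon=1-\mathrm{OPT}+\epsilon$. By the sandwich, $\Prob(v(S)>(1-\epsilon)T)\ge \mathrm{OPT}-\epsilon$, and using the assumption $\mathrm{OPT}=\Omega(1)$ I absorb the additive $\epsilon$ into a multiplicative $(1-\epsilon)$ factor by choosing the working precision to be a small constant multiple of $\epsilon\,\mathrm{OPT}$ up front.

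The main obstacle is the mismatch between the inequality capacity constraint of stochastic knapsack and the equality ``exact version'' consumed by Theorem~\ref{thm:mainthm}: enumerating over all scaled targets $\widehat{W}$ bridges the two, at the unavoidable cost of the $\epsilon$-violation of the capacity (which, as the paper remarks, is necessary unless $P=NP$). A secondary subtlety is verifying that Theorems~\ref{thm:mainthm} and~\ref{thm:mainholder} go through unchanged when one minimizes rather than maximizes a nonnegative bounded utility tending to zero at infinity; this requires only selecting the configuration whose exponential sum is \emph{smallest}, instead of largest, in the proof of Theorem~\ref{thm:algofortmu}.
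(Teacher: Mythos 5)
Your proof is correct and follows essentially the same route as the paper's (terse) argument in Appendix~\ref{app:knapsack2} and the surrounding text: round the deterministic sizes so the capacity constraint admits a pseudopolynomial DP (your $\widehat{w}_i=\lfloor(n/\epsilon)w_i\rfloor$ is the paper's rounding to multiples of $\epsilon/n$ up to scaling), replace the sharp threshold by a one-sided Lipschitz ramp and invoke the H\"older-based decomposition plus the minimization variant of Theorem~\ref{thm:algofortmu}, and absorb the additive $\epsilon$ using $OPT=\Omega(1)$. The only cosmetic difference is that you enumerate over exact rounded-size targets $\widehat{W}$ whereas the paper folds the inequality $w(S)\le 1$ directly into the feasibility set of the exact-version DP; the two are interchangeable.
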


%\topic{Single Commodity Dynamic Lot Sizing}:
%A pseudopolynomial time algorithm for the exact single commodity dynamic lot sizing
%was presented in \cite{safer04fully}.

\section{Extensions}
\label{sec:extension}
In this section, we discuss some extensions to our basic approximation scheme.
We first consider optimizing a constant number of utility functions in Section~\ref{subsec:multiutility}.
Then, we study the problem where the weight of each element is a random vector in Section~\ref{subsec:multiweight}.

\subsection{Multiple Utility Functions}
\label{subsec:multiutility}

The problem we study in this section contains a set $U$ of $n$ elements.
Each element $e$ has a random weight $w_{e}$.
We are also given $d$ utility functions $\mu_{1},\ldots, \mu_{d}$
and $d$ positive numbers $\lambda_{1},\ldots,\lambda_{d}$.
We assume $d$ is a constant.
A feasible solution consists of $d$ subsets of elements that satisfy some property.
Our objective is to find a feasible solution $S_{1},\ldots, S_{d}$ such that
$\Exp[\mu_{i}(w(S_{i}))]\geq \lambda_{i}$ for all $1\leq i\leq d$.

We can easily extend our basic approximation scheme
to the multiple utility functions case as follows.
We decompose these utility functions into short exponential sums using \utidecomp\ as before.
Then, for each utility function, we maintain $(n/\epsilon)^{O(L)}$ configurations.
Therefore, we have $(n/\epsilon)^{O(dL)}$ configurations in total and we would like to compute
the values for these configurations.
We denote the deterministic version of the problem under consideration by $\probA$.
The exact version of $\probA$ asks for a feasible solution $S_{1},\ldots, S_{d}$
such that the total weight of $S_{i}$ is exactly the given number $t_{i}$ for all $i$.
Following an argument similar to Lemma~\ref{lm:config}, we can easily get the following
generalization of Theorem~\ref{thm:mainthm}.

\begin{theorem}
\label{thm:multiutility}
Assume that there is a pseudopolynomial algorithm for the exact version of $\probA$.
Further assume that given any $\epsilon > 0$,
we can $\epsilon$-approximate each utility function by an exponential sum with at most $L$ terms.
Then, there is an algorithm that runs in time $(n/\epsilon)^{O(dL)}$ and
finds a feasible solution $S_{1},\ldots, S_{d}$ such that $\Exp[\mu_{i}(w(S_{i})]\geq \lambda_{i}-\epsilon$
for $1\leq i\leq d$, if there is a feasible solution for the original problem.
\end{theorem}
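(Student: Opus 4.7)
The plan is to mimic the proof of Theorem~\ref{thm:mainthm} coordinate-by-coordinate, taking a direct product over the $d$ utility functions. First, I would invoke \utidecomp\ on each $\mu_{i}$ to obtain an $\epsilon$-approximation $\tmu_{i}(x)=\sum_{k=1}^{L}c_{i,k}\phi_{i,k}^{x}$ with $|\phi_{i,k}|\leq 1$. Because the weight of $S_{i}$ depends only on the elements it contains, the separability identity from Section~\ref{sec:algorithm} still yields $\Exp[\phi_{i,k}^{w(S_{i})}]=\prod_{e\in S_{i}}\Exp[\phi_{i,k}^{w_{e}}]$ for each of the $dL$ expected exponential terms, so the problem reduces to a $2dL$-dimensional multi-criterion problem.

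I would then round the per-element factors just as in Lemma~\ref{lm:close}: for each element $e$ and each pair $(i,k)\in[d]\times[L]$, set $a_{i,k}(e)=\lfloor -\ln|\Exp[\phi_{i,k}^{w_{e}}]|/\gamma\rfloor$ and $b_{i,k}(e)=\lfloor \arg(\Exp[\phi_{i,k}^{w_{e}}])/\delta\rfloor$, with $\gamma=\delta=\Theta(\epsilon/(dLn))$. A configuration is indexed by a vector $\bfa=(\alpha_{i,k},\beta_{i,k})$ with $\alpha_{i,k}\in\{1,\ldots,J\}$ and $\beta_{i,k}\in\{1,\ldots,K\}$, where $J,K$ are chosen as in Section~\ref{subsec:theorem1}, and I declare $\sigma(\bfa)=1$ iff some feasible tuple $(S_{1},\ldots,S_{d})$ achieves $\sum_{e\in S_{i}}a_{i,k}(e)=\alpha_{i,k}$ (capped at $J$) and $\sum_{e\in S_{i}}b_{i,k}(e)=\beta_{i,k}$ for every $(i,k)$. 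The total number of configurations is $(JK)^{dL}=(n/\epsilon)^{O(dL)}$.

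To test $\sigma(\bfa)$ I would bundle, for each subset index $i$, the $2L$ rounded coordinates attached to element $e$ into a single pseudopolynomially bounded scalar weight $\tau_{i}(e)$ via a positional encoding in base $O(nJK)$. Matching all $2L$ target sums for $S_{i}$ then becomes matching a single scalar sum $\sum_{e\in S_{i}}\tau_{i}(e)$ to a single integer $t_{i}$, which is exactly what the assumed $d$-target exact algorithm for $A$ solves. Since $d,L=O(1)$ and $J,K=\poly(n/\epsilon)$, each encoded $t_{i}$ is polynomial in $n/\epsilon$, so each configuration test runs in $\poly(n/\epsilon)$ time. Enumerating all configurations yields the claimed $(n/\epsilon)^{O(dL)}$ runtime.

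Finally, I would output any valid $(S_{1},\ldots,S_{d})$ whose proxy $\sum_{k}c_{i,k}e^{-\alpha_{i,k}\gamma+\uimag\beta_{i,k}\delta}$ is at least $\lambda_{i}-O(\epsilon)$ for every $i$ simultaneously. The multi-coordinate analog of Lemma~\ref{lm:close} guarantees that this proxy is within $O(\epsilon)$ of $\Exp[\tmu_{i}(w(S_{i}))]$, and combining with $|\tmu_{i}-\mu_{i}|\leq \epsilon$ yields $\Exp[\mu_{i}(w(S_{i}))]\geq \lambda_{i}-\epsilon$ after rescaling $\epsilon$ by a constant factor at the outset. The main obstacle is bookkeeping rather than a genuine new idea: one must verify that inflating the quantization grid by a factor of $dL$ still keeps the cumulative rounding error $O(\epsilon)$ across all $d$ utilities, which is exactly where the hypothesis that $d$ is constant is used.
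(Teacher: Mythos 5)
Your proposal is correct and follows essentially the same route the paper takes: decompose each $\mu_i$ via \utidecomp, round per-element factors and enumerate $(n/\epsilon)^{O(dL)}$ configurations, and test each configuration by packing the rounded coordinates into $d$ pseudopolynomial targets for the exact version of $A$, exactly as in the one-utility Lemma~\ref{lm:config}. The only cosmetic difference is that you shrink the grid by an extra factor of $d$, which is unnecessary here since the error bound $\Exp[\mu_i(w(S_i))]\geq\lambda_i-\epsilon$ is enforced per coordinate rather than summed over $i$, but this is harmless and does not affect the stated running time because $d=O(1)$.
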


Now let us consider two simple applications of the above theorem.

\topic{Stochastic Multiple Knapsack}:
In this problem we are given a set $U$ of $n$ items, $d$ knapsacks with capacity $1$,
and $d$ constants $0\leq \gamma_{i}\leq 1$. We assume $d$ is a constant.
Each item $i$ has a random size $w_{i}$ and a deterministic profit $v_{i}$.
Our objective is to find $d$ disjoint subsets $S_{1},\ldots, S_{d}$ such that
$\Prob(w(S_{i})\leq 1) \geq \gamma_{i}$ for all $1\leq i\leq d$ and $\sum_{i=1}^{d}v(S_{i})$ is maximized.
The exact version of the problem is to
find a packing such that the load of each knapsack $i$ is {\em exactly} the given value $t_{i}$.
It is not hard to show this problem can be solved
in pseudopolynomial time by standard dynamic programming.
If the profits are general integers, we also need the scaling technique
as in stochastic knapsack  with random sizes. In sum, we can get the following
generalization of Theorem~\ref{cor:knapsack}.

\begin{theorem}
\label{thm:multiknapsack}
For any constants $d\in \mathbb{N}$, $\epsilon>0$ and $0\leq \gamma_{i}\leq 1$ for $1\leq i\leq d$,
there is a polynomial time algorithm to compute
$d$ disjoint subsets $S_{1},\ldots, S_{d}$ such that the total profit $\sum_{i=1}^{d} v(S_{i})$ is within a $1-\epsilon$ factor of the optimum
and $\Prob(w(S_{i})\leq 1+\epsilon)\geq (1-\epsilon) \gamma_{i}$ for $1\leq i\leq d$.
\end{theorem}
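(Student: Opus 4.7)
The plan is to reduce Stochastic Multiple Knapsack to the multi-utility framework of Theorem~\ref{thm:multiutility}, using $d$ copies of the continuous threshold function $\tchi$ (one per knapsack capacity constraint) as the utility functions $\mu_1,\ldots,\mu_d$, with thresholds $\lambda_i = \gamma_i$. The outer loop handles the profit objective by guessing the target total profit $g$, and the inner call to Theorem~\ref{thm:multiutility} handles the $d$ stochastic capacity constraints simultaneously.

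First, assume all profits $v_i$ are polynomially bounded nonnegative integers; the general case is reduced to this by the same scaling technique used in Appendix~\ref{app:knapsack}, losing at most a $(1-\epsilon)$ factor. Under this assumption the optimal total profit $g^\star \le n\max_i v_i$ is polynomially bounded, so we enumerate all possible target values $g$ and, for each, invoke Theorem~\ref{thm:multiutility} on the problem ``find disjoint $S_1,\ldots,S_d\subseteq U$ with $\sum_i v(S_i)=g$ and $\Exp[\tchi(w(S_i))]\ge \gamma_i$ for each $i$.'' We return the solution corresponding to the largest feasible $g$.

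Next I verify the hypotheses of Theorem~\ref{thm:multiutility}. The exact version of the underlying deterministic problem asks whether there exist disjoint $S_1,\ldots,S_d$ with $w(S_i)$ equal to given integers $t_i$ and $\sum_i v(S_i)=g$. A straightforward dynamic program whose state is (current item index, vector of accumulated sizes $t_1,\ldots,t_d$, accumulated profit) decides this in time polynomial in $n$, $g$, and $\prod_i t_i$, which is pseudopolynomial for constant $d$. For the utility side, Theorem~\ref{thm:mainholder} together with the $\frac{1}{\delta}$-Lipschitz property of $\tchi$ (observed just after equation~(\ref{ex:utility1})) yields an $\epsilon$-approximation of $\tchi$ by an exponential sum with $L=O(\frac{1}{\epsilon^2}\log\frac{1}{\epsilon})$ terms, choosing $\delta=\epsilon$ as in the proof of Corollary~\ref{cor:thres}. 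Plugging these into Theorem~\ref{thm:multiutility} gives, for the correct guess $g=g^\star$, disjoint sets with $\Exp[\tchi(w(S_i))]\ge \gamma_i-\epsilon$, and Lemma~\ref{lm:sanwich} converts this to $\Prob(w(S_i)\le 1+\epsilon)\ge \gamma_i-\epsilon \ge (1-\epsilon/\gamma_i)\gamma_i$, which after a constant rescaling of $\epsilon$ gives the claimed $(1-\epsilon)\gamma_i$ bound.

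The only real work beyond citing earlier results is (i) writing down the pseudopolynomial DP for the $d$-knapsack exact version, and (ii) handling general profits by scaling; both are routine since $d$ is a constant. The main obstacle to keep an eye on is the running time: the configuration count $(n/\epsilon)^{O(dL)}$ and the $(d+2)$-dimensional DP table are both polynomial only because we take $d$ to be a constant, so the exponent's dependence on $d$ is unavoidable with this approach. Putting the scaling, enumeration of $g$, Theorem~\ref{thm:multiutility}, and Lemma~\ref{lm:sanwich} together yields Theorem~\ref{thm:multiknapsack}.
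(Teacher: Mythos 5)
Your proof is correct and follows essentially the same route the paper takes: enumerate the target profit $g$ (after scaling so profits are polynomially bounded), for each $g$ invoke Theorem~\ref{thm:multiutility} with $\mu_i = \tchi$ and $\lambda_i = \gamma_i$ using a pseudopolynomial DP for the $d$-knapsack exact version, and transfer the guarantee back to $\Prob(w(S_i)\le 1+\epsilon)$ via Lemma~\ref{lm:sanwich}. The paper only sketches this in two sentences, and you have correctly filled in the details (in particular the DP state, the choice $\delta=\epsilon$ giving $L=O(\frac{1}{\epsilon^2}\log\frac1\epsilon)$, and the constant rescaling of $\epsilon$ using that the $\gamma_i$ are constants), so there is nothing missing or divergent.
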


\topic{Stochastic Multidimensional Knapsack}:
In this problem we are given a set $U$ of $n$ items and a constant $0\leq \gamma\leq 1$.
Each item $i$ has a deterministic profit $v_{i}$ and a random size which is a random $d$-dimensional vector
$\bw_{i}=\{w_{i1},\ldots, w_{id}\}$. We assume $d$ is a constant.
Our objective is to find a subset $S$ of items such that
$\Prob(\bigwedge_{j=1}^{d} (\sum_{i\in S} w_{ij}\leq 1)) \geq \gamma$ and total profit is maximized.
This problem can be also thought as the fixed set version of the
stochastic packing problem considered in \cite{dean2005adaptivity,bhalgat10}.
We first assume the components of each size vector are independent.
The correlated case will be addressed in the next subsection.

For ease of presentation, we assume $d=2$ from now on. Extension to general constant $d$ is straightforward.
We can solve the problem by casting it into a multiple utility problem as follows.
For each item $i$, we create two copies $i_{1}$ and $i_{2}$.
The copy $i_{j}$ has a random weight $w_{ij}$.
A feasible solution consists of two sets $S_{1}$ and $S_{2}$ such that
$S_{1}$ ($S_{2}$) only contains the first (second) copies of the elements and
$S_{1}$ and $S_{2}$ correspond to exactly the same subset of original elements.
We enumerate all such pairs $(\gamma_{1}, \gamma_{2})$ such that
$\gamma_{1}\gamma_{2}\geq \gamma$ and $\gamma_{i} \in [\gamma, 1]$
is a power of $1-\epsilon$ for $i=1,2$. Clearly, there are a polynomial number of
such pairs. For each pair $(\gamma_{1},\gamma_{2})$,
we solve the following problem:
find a feasible solution $S_{1}, S_{2}$ such that
$\Prob(\sum_{i\in S_{j}} w_{ij}\leq 1) \geq \gamma_{j}$ for all $j=1,2$ and total profit is maximized.
Using the scaling technique and
Theorem~\ref{thm:multiutility} for optimizing multiple utility functions, we can get a $(1-\epsilon)$-approximation
for the optimal profit and $\Prob(\bigwedge_{j=1}^{2} (\sum_{i\in S_{j}} w_{ij}\leq 1))=
\prod_{j=1}^{2}\Prob(\sum_{i\in S_{j}} w_{ij}\leq 1)
\geq (1-O(\epsilon))\gamma_{1}\gamma_{2}\geq (1-O(\epsilon))\gamma$.

We note that the same result for independent components
can be also obtained by using the discretization technique developed
for the adaptive version of the problem in \cite{bhalgat10}
\footnote{With some changes of the discretization technique, the correlated case can be also handled \cite{bhalgat10note}.}.
If the components of each size vector are correlated,
we can not decompose the problem into two $1$-dimensional utilities
as in the independent case.
%and it is not clear how to extend the discretization technique in \cite{bhalgat10} to multidimensional distributions.
Now, we introduce a new technique to handle the correlated case.

\subsection{Multidimensional Weight}
\label{subsec:multiweight}
The general problem we study contains a set $U$ of $n$ elements.
Each element $e$ has a random weight vector
$w_{i}=(w_{i1},\ldots, w_{id})$.
We assume $d$ is a constant.
We are also given a utility function $\mu: \mathbb{R}^{d}\rightarrow \mathbb{R}^{+}$.
A feasible solution is a subset of elements satisfying some property.
We use $w(S)$ as a shorthand notation for vector $(\sum_{i\in S}w_{i1},\ldots, \sum_{i\in S}w_{id})$.
Our objective is to find a feasible solution $S$ such that
$\Exp[\mu(w(S)]$ is maximized.

From now on, $x$ and $k$ denote $d$-dimensional vectors
and $kx$ (or $k\cdot x$) denotes the inner product of $k$ and $x$.
As before, we assume $\mu(x)\in [0,1]$ for all $x\geq 0$
and $\lim_{|x|\rightarrow+\infty}\mu(x)=0$, where $|x|=\max(x_{1},\ldots,x_{d})$,
Our algorithm is almost the same as in the one dimension case and we briefly sketch it here.
We first notice that expected utilities decompose for exponential utility functions, i.e.,
$\Exp[\phi^{k\cdot w(S)}]= \prod_{i\in S}\Exp[\phi^{k\cdot w_{i}}]$.
Then, we attempt to
$\epsilon$-approximate the utility function $\mu(x)$
by a short exponential sum $\sum_{|k|\leq N} c_{k}  \phi_{k}^{kx}$ (there are $O(N^{d})$ terms).
If this can be done, $\Exp[\phi^{k\cdot w(S)}]$ can be approximated by $\sum_{|k|\leq N}c_{k}\Exp[\phi^{k\cdot w(S)}]$.
Using the same argument as in Theorem~\ref{thm:mainthm},
we can show that there is a polynomial time algorithm that can find a feasible solution $S$ with $\Exp[\mu(w(S))]\geq OPT-\epsilon$
for any $\epsilon>0$, provided that a pseudopolynomial algorithm exists for the exact version of the deterministic problem.

To approximate the utility function $\mu(x)$,
we need the multidimensional Fourier series expansion of a function
$f: \mathbb{C}^{d}\rightarrow \mathbb{C}$ (assuming $f$
is $2\pi$-periodic in each axis):
$
f(x) \sim \sum_{k\in \mathbb{Z}^{d}} c_{k}  e^{ikx}
$
where
$c_{k} = {1 \over (2 \pi)^d} \int_{x\in [-\pi,\pi]^{d}} f(x) e^{-ikx}\, dx $.
The {\em rectangular partial sum} is defined to be
$$
S_{N} f(x) =\sum_{|k_{1}|\leq N}\ldots\sum_{|k_{d}|\leq N} c_{k}  e^{ikx}.
$$
It is known that the rectangular partial sum  $S_{N}f(x)$ converges uniformly to $f(x)$ in $[-\pi,\pi]^{d}$
for many function classes as $n$ tends to infinity.
In fact, a generalization of Theorem~\ref{thm:fourier} to $[-\pi,\pi]^{d}$ also holds~\cite{alimov92multiple}:
If $f$ satisfies the  $\alpha$-H\"{o}lder condition, then
$$
|f(x)-(S_Nf)(x)|\le O\Bigl( {|f|_{C^{0,\alpha}}\ln^{d} N\over N^\alpha}\Bigr) \ \ \ \text{for }x\in [-\pi,\pi]^{d}.
$$
Now, we have an algorithm \algo\ that can approximate a function in a bounded domain.
It is also straightforward to extend \utidecomp\ to the multidimensional case. Hence,
we can $\epsilon$-approximate $\mu$ by a short exponential sum in $[0,+\infty)^{d}$, thereby proving
the multidimensional generalization of Theorem~\ref{thm:mainholder}.
Let us consider an application of our result.

\topic{Stochastic Multidimensional Knapsack (Revisited)}:
We consider the case where the components of each weight vector can be correlated.
Note that the utility function $\chi_{2}$ corresponding to this problem is the two dimensional threshold function:
$\chi_{2}(x,y)=1$ if $x\leq 1$ and $y\leq 1$; $\chi_{2}(x,y)=0$ otherwise.
As in the one dimensional case, we need to consider a continuous version $\tchi_{2}$ of $\chi_{2}$ (see Figure~\ref{fig_utility}(3)).
By the result in this section and a generalization of Lemma~\ref{lm:sanwich} to higher dimension, we can get the following.
\begin{theorem}
\label{thm:multidimknapsack}
For any constants $d\in \mathbb{N}$, $\epsilon>0$ and $0\leq \gamma\leq 1$,
there is a polynomial time algorithm for finding a set $S$ of items
such that the total profit $v(S)$ is $1-\epsilon$ factor of the optimum and
$\Prob(\bigwedge_{j=1}^{d} (\sum_{i\in S} w_{ij}\leq 1+\epsilon)) \geq (1-\epsilon)\gamma$.
\end{theorem}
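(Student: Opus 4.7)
The plan is to reduce Theorem~\ref{thm:multidimknapsack} to the multidimensional \eum\ machinery developed just above, in direct analogy with how Theorem~\ref{cor:knapsack} was obtained from Corollary~\ref{cor:thres} in the one-dimensional case. The key object is a continuous relaxation $\tchi_d: \mathbb{R}^d \to [0,1]$ of the $d$-dimensional threshold function $\chi_d(x)=\prod_{j=1}^{d}\mathbf{1}[x_j\leq 1]$, obtained by tapering $\chi_d$ linearly down to $0$ over a strip of width $\delta$ beyond the box $[0,1]^d$ (see Figure~\ref{fig_utility}(3)). By construction $\chi_d(x)\leq \tchi_d(x)\leq \chi_d(x-\delta\mathbf{1})$, so for every solution $S$ one has the sandwich
\[
\Prob\bigl(\textstyle\bigwedge_j \sum_{i\in S} w_{ij}\leq 1\bigr)\;\leq\; \Exp[\tchi_d(w(S))] \;\leq\; \Prob\bigl(\textstyle\bigwedge_j \sum_{i\in S} w_{ij}\leq 1+\delta\bigr),
\]
which is the direct generalization of Lemma~\ref{lm:sanwich}.

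Next I would verify that $\tchi_d$ is amenable to the Fourier-based approximation from Section~\ref{subsec:fourier}. Since $\tchi_d$ is a product of $d$ one-dimensional $\frac{1}{\delta}$-Lipschitz functions, each bounded by $1$, it is itself $\alpha$-H\"older with $\alpha=1$ and coefficient $O(d/\delta)=O(1)$. Moreover it has bounded support $[0,1+\delta]^d$, so we can take $T_\epsilon=2$. The multidimensional generalization of Theorem~\ref{thm:mainholder} sketched in Section~\ref{subsec:multiweight} then yields an exponential sum $\tmu(x)=\sum_{|k|\leq N} c_k \phi_k^{k\cdot x}$ with $O(N^d)$ terms and $|\phi_k|\leq 1$ that $\epsilon$-approximates $\tchi_d$ on $[0,\infty)^d$, where $N=O\bigl(\frac{1}{\epsilon}\log\frac{1}{\epsilon}\bigr)^{1/\alpha}$ up to polylog factors.

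To invoke the multidimensional version of Theorem~\ref{thm:mainthm}, I need a pseudopolynomial exact algorithm for the deterministic problem in question. Here I would guess the optimal profit $g$ (using the standard $(1-\epsilon)$-scaling trick on the $v_i$ to make it polynomially bounded, as in Appendix~\ref{app:knapsack}) and solve: find a subset $S$ with $v(S)=g$ maximizing $\Exp[\tchi_d(w(S))]$. The relevant exact version asks for a subset with prescribed integer profit \emph{and} prescribed integer value of each of the $d$ weight coordinates; this is solvable in pseudopolynomial time by the standard $(d{+}1)$-dimensional knapsack DP. Crucially, the DP treats each item's weight vector $\bw_i$ as a single integer point, so correlations among the coordinates of $\bw_i$ are irrelevant: the only independence we use is across items, which is assumed. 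Plugging this DP into the multidimensional analogue of Theorem~\ref{thm:algofortmu} gives, in time $(n/\epsilon)^{O(N^d)}$, a solution $S$ with $v(S)\geq (1-\epsilon)\mathrm{OPT}$ and $\Exp[\tchi_d(w(S))]\geq \Exp[\tchi_d(w(S^*))]-\epsilon$. Combining with the sandwich inequality on both sides, with $\delta=\epsilon$, converts this into the stated bicriterion guarantee $\Prob(\bigwedge_j \sum_{i\in S} w_{ij}\leq 1+\epsilon)\geq (1-\epsilon)\gamma$.

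The main obstacle I expect is not the reduction itself but bookkeeping the constants in the multidimensional Fourier approximation: one must check that the H\"older coefficient of $\tchi_d$, the choice of scaling parameter $h$, and the upper bound on $\sum_k |c_k|$ needed to make \utidecomp\ work (cf.\ the ``How to Choose $h$'' remark) all remain $O(1)$ for constant $d$. Once these are verified via Lemmas~\ref{lm:holderpiecewise} and~\ref{lm:holderscale} applied coordinate-wise and the multivariate Jackson bound quoted in Section~\ref{subsec:multiweight}, the number of Fourier terms $N^d$ is still $\mathrm{poly}(1/\epsilon)$, so the final running time $(n/\epsilon)^{\mathrm{poly}(1/\epsilon)}$ is polynomial in $n$ for constant $d$ and $\epsilon$, completing the proof.
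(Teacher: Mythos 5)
Your proposal is correct and takes essentially the same route as the paper, which simply invokes the multidimensional Fourier machinery of Section~\ref{subsec:multiweight} together with a $d$-dimensional generalization of Lemma~\ref{lm:sanwich}. You supply more detail than the paper's one-line proof (the $(d{+}1)$-dimensional exact DP, the profit-scaling step, the H\"{o}lder-coefficient bookkeeping, and the explicit observation that correlations \emph{within} an item's weight vector are harmless since only independence \emph{across} items is used), but the conceptual skeleton is identical.
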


\ignore{

\subsection{Extension to MAX}
\label{sec:max}
In this section, we study the following problem:
The setting is the same as \eum, except that the objective for the deterministic problem
is to minimize the maximum weight of any element in the chosen subset.
By the expected utility maximization principle, our goal is still to maximize the expected utility of the solution,
i.e., $\max_{S\in \calF}\Exp[\mu(\max_{e\in S}w_{e})]$.
We refer to this problem as \eummax.

Now, we prove Theorem~\ref{xxx}.
Define $g_{\beta}(x_{1},x_{2},\ldots,x_{k})=\frac{1}{\beta} \ln (\sum_{i=1}^{k} \exp(\beta x_{k}))$.
This function is known as the convex log-sum-exp
approximation to the max function. In particular, the following lemma holds.

\begin{lemma}
\label{lm:softmax}
{\em \cite{chen2010markov}}
For a positive constant $\beta$, we have
$$
\max(x_{1},\ldots, x_{n})\leq g_{\beta}(x_{1},\ldots, x_{n})\leq \max(x_{1},\ldots, x_{n})+ \frac{1}{\beta}\log n.
$$
\end{lemma}

Define a new utility function $\nu(x) = \mu(\frac{1}{\beta} \ln (x))$.
In the following, we use $\beta=\Omega(\frac{\log n}{\epsilon^{1/\alpha}})$.
We will show a few useful properties of $\nu(x)$.

\begin{lemma}
\label{lm:nu}
If the utility function $\uti$ satisfies
the  $\alpha$-H\"{o}lder condition $| \mu(x) - \mu(y) | \leq C \, |x - y|^{\alpha}$, for some constant $C$,
then $\nu(x) \text{ for }x\geq 1$ satisfies
the  $\alpha$-H\"{o}lder condition with H\"{o}lder constant $\frac{C}{\beta^{\alpha}}$.
\end{lemma}
\begin{proof}
For any $x, y\geq 1$, we have
$$
| \nu(x) - \nu(y) | =|\mu(\frac{1}{\beta} \ln x) -\mu(\frac{1}{\beta} \ln y) | \leq C \, |\frac{1}{\beta} \ln x- \frac{1}{\beta} \ln y|^{\alpha}
\leq \frac{C}{\beta^{\alpha}} |x-y|^{\alpha}.
$$
The last inequality holds since the derivative of $\ln x$ is at most $1$ for $x\geq 1$. \qed
\end{proof}

The proof of the following lemma is trivial.
\begin{lemma}
\label{lm:nu}
If $\lim_{x\rightarrow+\infty}\mu(x)=0$, then $\lim_{x\rightarrow+\infty}\nu(x)=0$.
\end{lemma}

For any $S\subseteq U$, we use $\nu(S)$ as a shorthand notation for $\nu(\sum_{e\in S} \exp(\beta w_{e}))$.
First, with $\beta=\Omega(\frac{\log n}{\epsilon^{1/\alpha}})$,
we can easily verify that $\nu(x)$ (for $x\geq 1$) satisfies the  $\alpha$-H\"{o}lder condition with $\alpha>1/2$ and $|\nu|_{C^{0,\alpha}}=O(1)$
by Lemma~\ref{lm:nu}. However, we can not use Theorem~\ref{thm:mainholder} directly.
This is because we assume that the utility function is not a part of input in Section~\ref{sec:algorithm}.
Hence, we could treat $T_{\epsilon}$ as a constant previously (Recall $T_{\epsilon}$ is defined to be a number such that
the $\mu(x)\leq \epsilon$ $\forall x>T_{\epsilon}$).
Let $T_{\nu,\epsilon}=e^{\beta T_{\epsilon}}$.
It is easy to verify that $\nu(x)\leq \epsilon$ $\forall x>T_{\nu,\epsilon}$.
We can see that the new utility function $\nu$ clearly depends on $n$ and $T_{\nu,\epsilon}$ is not a constant anymore.
In order to carry the same argument in the proof of Theorem~\ref{thm:mainholder} through,
we need to consider the scaled version of the function, i.e., $\hnu(x)=\nu(x T_{\nu,\epsilon})$.
$\hnu(x)\leq \epsilon$ for $x\geq 1$.
We only need to check if $\hnu(x), x\in [0,1]$ satisfies the smoothness requirement
needed for the uniform convergence of partial Fourier sum.

\begin{align*}
|\hnu(x)-\hnu(y)| &= |\mu(\frac{1}{\beta} \ln e^{\beta T_{\epsilon}}x )-\mu(\frac{1}{\beta} \ln e^{\beta T_{\epsilon}}x ) | \\
&=|\mu(T_{\epsilon}+\frac{\ln x}{\beta})-\mu(T_{\epsilon}+\frac{\ln y}{\beta})|
\leq \frac{C}{\beta} |\ln x-\ln y|
\end{align*}

The following lemma is crucial to us.
\begin{lemma}
\label{lm:nu}
For any $2\pi$-periodic continuous function $f$ such that $0\leq f(x)\leq 1$ and
$|f(x)-f(y)|\leq C|\ln x -\ln y|$ for any $x>0,y>0$
$$
|f(x)-(S_Nf)(x)|\le O\Bigl( \Bigr) \ \ \ \text{for }x\in [0,2\pi].
$$
\end{lemma}
\begin{proof}
Define $g_h(x)=f(x+h)-f(x-h)$ for every positive $h$.
%Recall that the Parseval identity (see e.g.~\cite{stein2003fourier}) states that
%for $f(x)\sim \sum_k c_k e^{ikx}$ and $g(x)\sim \sum_k d_k e^{ikx}$, we have
%$$
%\frac{1}{2\pi} \int_{0}^{2\pi} f(x)\overline{g(x)} \d x =\sum_{k=-\infty}^{+\infty} c_k \overline{d_k}
%$$
%where $\overline{d_k}$ is the conjugate of $d_k$.
Using Parseval identity, we can get that
\begin{align*}
\frac{1}{2\pi}\int_{0}^{2\pi} |g_h(x)|^2 \d x = \sum_{k=-\infty}^{\infty} 4 |\sin kh|^2 |c_k|^2,
\end{align*}
where $c_k$s are the Fourier coefficients (see e.g.~\cite[pp.93]{stein2003fourier}).
Therefore, we have
\begin{align*}
\sum_{k=2^{p-1}}^{2^p}|\sin kh|^2 |c_k|^2 &\leq \sum_{k=-\infty}^{\infty} |\sin kh|^2 |c_k|^2
\leq \frac{1}{8\pi} \int_{0}^{2\pi} |g_h(x)|^2 \d x \\
&=\frac{1}{8\pi} \int_{0}^{2\pi} |f(x+h)-f(x-h)|^2 \d x \\
& \leq  \frac{1}{8\pi} \int_{0}^{2\pi} \min \bigl(1,(\ln(x+h)-\ln(x-h))^2\bigr) \d x \\
& \leq \frac{1}{8\pi}\Bigl( \frac{e+1}{e-1}h + \int_{\frac{e+1}{e-1}h}^{2\pi} (\ln(x+h)-\ln(x-h))^2\bigr) \d x\Bigr) \\
& \leq \frac{1}{8\pi}\Bigl( \frac{e+1}{e-1}h + \int_{\frac{e+1}{e-1}h}^{2\pi} \bigl(\frac{2h}{x-h}\bigr)^2 \d x\Bigr)= O(h) \\
\end{align*}
The second inequality is due to the fact that $f(x)\in [0,1]$.
The third inequality holds since $|\ln(x+h)-\ln(x-h)|\leq 1$ for $x\geq \frac{e+1}{e-1}h$
and the forth holds since $\ln(1+x)\leq x$ for $x\geq 0$.
By letting $h=\pi/2^{p+1}$ and noticing $\sin kh \geq 0.5$ for $k=2^{p-1},\ldots, 2^p$,
we obtain that $$\sum_{k=2^{p-1}}^{2^p}|c_k|^2 = \frac{1}{2^p}O(1).$$
\qed
\end{proof}

This lemma proves the existence of the subroutine \algo.
Then, we can apply \utidecomp\ to obtain a short exponential sum
which $\epsilon$-approximates $\nu$ for $x\geq 0$.
From Theorem~\ref{thm:mainthm} and , we know that
we can maximize $\Exp[\nu(S)]$ within an additive factor $\epsilon$. Indeed, we only need to
apply our algorithm to the instance where each element $e$ has a random weight $\exp(\beta w_{e})$.

From Lemma~\ref{lm:softmax}, we can also get that for any $S\subseteq U$,
\begin{align*}
|\nu(S) - \mu(\max_{e\in S}\{w_{e}\})| &=
|\mu(g_{\beta}(\{w_{e}\}_{e\in S})) - \mu(\max_{e\in S}\{w_{e}\})| \\
& \leq  C | g_{\beta}(\{w_{e}\}_{e\in S}) - \max_{e\in S}\{w_{e}\}|^{\alpha}
 \leq \frac{C}{\beta^{\alpha}}\log^{\alpha} n= O(\epsilon)
\end{align*}
Since this holds for any instantiation of $w_{e}$s, it also holds in expectation, i.e.,
$$
|\Exp[\nu(S)] - \Exp[\mu(\max_{e\in S}\{w_{e}\})]|
 \leq O(\epsilon).
 $$

}

\section{A Few Remarks}
\label{subsec:discussion}

\topic{Convergence of Fourier series}:
The convergence of the Fourier series of a function
is a classic topic in harmonic analysis.
Whether the Fourier series converges to the given function and
the rate of the convergence typically depends on a variety of
smoothness condition of the function.
We refer the readers to \cite{stein2003fourier}
for a more comprehensive treatment of this topic.
We note that we could obtain a smoother version of $\chi$ (e.g., see Figure~\ref{fig_utility}(2)),
instead of the piecewise linear $\tchi$,
and then use Theorem~\ref{thm:fourier} to obtain a better bound for $L$.
This would result in an even better running time.
Our choice is simply for the ease of presentation.

\topic{Discontinuous utility functions}:
If the utility function $\uti$ is discontinuous, e.g., the threshold function,
then the partial Fourier series behaves poorly around the discontinuity (this is known as the {\em Gibbs phenomenon}).
However, informally speaking, as the number of Fourier terms increases, the poorly-behaved strip around the
edge becomes narrower.
Therefore, if the majority of the probability mass of our solution lies outside the strip,
we can still guarantee a good approximation of the expected utility.
There are also techniques to reduce the effects of the Gibbs phenomenon (See e.g., \cite{gottlieb1997gibbs}).
However, the techniques are not sufficient to handle discontinuous functions.
We note that very recently,  Daskalakis et al. \cite{daskalakis2014polynomial} obtained a true additive PTAS
(instead of a bi-criterion additive PTAS) 
for a closely related problem, called {\em the fault tolerant storage problem},
under certain technical 
assumptions.~\footnote{
In the fault tolerant storage problem, we are given $n$ real numbers 
$0<p_1\leq \ldots\leq p_n<1$, and an addition number $0<\theta<1$.
Our goal is to partition $1$ into $n$ positive values $x_1,\ldots, x_n$ (i.e., $\sum_{i=1}^n x_i=1$),
such that $\Pr[\sum_{i=1}^n X_i\geq \theta]$ is maximized, where $X_i$
is the Bernoulli random variable which takes value $x_i$ with probability $p_i$.
In order to obtained an additive PTAS, Daskalakis et al. \cite{daskalakis2014polynomial} assumed
that all $p_i$s are bounded below by a constant.  
}
It is not clear how to use their technique to obtain a true additive PTAS for 
our expected utility maximization problem.
We leave this problem as an interesting open problem.

\section{Conclusion}
We study the problem of maximizing expected utility for
several stochastic combinatorial problems, such as shortest path, spanning tree and knapsack, and several classes of 
utility functions.
A key ingredient in our algorithm is to decompose the utility function into a short exponential sum, using
the Fourier series decomposition.
Our general approximation framework may be useful for other stochastic optimization problems.
We leave the problems of obtaining a true additive PTAS,
or nontrivial multiplicative approximation factors for $\classA$
as interesting open problems.

\section{Acknowledgments}
We would like to thank Evdokia Nikolova for providing an extended version of \cite{nikolova2006stochastic}
and many helpful discussions.
We also would like to thank Chandra Chekuri for pointing to us the work \cite{bhalgat10}
and Anand Bhalgat for some clarifications of the same work.

%{\small
\bibliographystyle{abbrv}
\bibliography{paperbib}

\begin{thebibliography}{10}

\bibitem{ackermann2005decision}
H.~Ackermann, A.~Newman, H.~R{\"o}glin, and B.~V{\"o}cking.
\newblock {Decision making based on approximate and smoothed pareto curves}.
\newblock {\em Algorithms and Computation}, pages 675--684, 2005.

\bibitem{agrawal2008stochastic}
S.~Agrawal, A.~Saberi, and Y.~Ye.
\newblock {Stochastic Combinatorial Optimization under Probabilistic
  Constraints}.
\newblock {\em Arxiv preprint arXiv:0809.0460}, 2008.

\bibitem{alimov92multiple}
S.~Alimov, R.~Ashurov, and A.~Pulatov.
\newblock Multiple fourier series and fourier integrals, in commutative
  harmonic analysis. {IV}: Harmonic analysis in $\mathbb{R}_{n}$.
\newblock {\em Encyclopedia of Mathematical Science}, 42, 1992.

\bibitem{bansal2010lp}
N.~Bansal, A.~Gupta, J.~Li, J.~Mestre, V.~Nagarajan, and A.~Rudra.
\newblock {When LP is the Cure for Your Matching Woes: Improved Bounds for
  Stochastic Matchings}.
\newblock {\em European Symposium on Algorithms}, pages 218--229, 2010.

\bibitem{barahona1987exact}
F.~Barahona and W.~Pulleyblank.
\newblock {Exact arborescences, matchings and cycles}.
\newblock {\em Discrete Applied Mathematics}, 16(2):91--99, 1987.

\bibitem{bard1991arc}
J.~Bard and J.~Bennett.
\newblock {Arc reduction and path preference in stochastic acyclic networks}.
\newblock {\em Management Science}, 37(2):198--215, 1991.

\bibitem{bernoulli1954exposition}
D.~Bernoulli.
\newblock Exposition of a new theory on the measurement of risk.
\newblock {\em Econometrica: Journal of the Econometric Society}, pages 23--36,
  1954.
\newblock Originally published in 1738; translated by Dr. Louise Sommer.

\bibitem{Bernoulli1738}
D.~Bernoulli.
\newblock Exposition of a new theory on the measurement of risk.
\newblock {\em Econometrica}, 22(1):22--36, 1954.
\newblock Originally published in 1738; translated by Dr. Lousie Sommer.

\bibitem{beylkin2002generalized}
G.~Beylkin and L.~Monz{\'o}n.
\newblock {On Generalized Gaussian Quadratures for Exponentials and Their
  Applications* 1}.
\newblock {\em Applied and Computational Harmonic Analysis}, 12(3):332--373,
  2002.

\bibitem{beylkin2005approximation}
G.~Beylkin and L.~Monz{\'o}n.
\newblock {On approximation of functions by exponential sums}.
\newblock {\em Applied and Computational Harmonic Analysis}, 19(1):17--48,
  2005.

\bibitem{beylkin2010approximation}
G.~Beylkin and L.~Monz{\'o}n.
\newblock {Approximation by exponential sums revisited}.
\newblock {\em Applied and Computational Harmonic Analysis}, 28(2):131--149,
  2010.

\bibitem{bhalgat10note}
A.~Bhalgat, 2011.
\newblock Personal Communication.

\bibitem{bhalgat10}
A.~Bhalgat, A.~Goel, and S.~Khanna.
\newblock Improved approximation results for stochastic knapsack problems.
\newblock In {\em ACM-SIAM symposium on Discrete algorithms}, 2011.

\bibitem{bhalgat2014utility}
A.~Bhalgat and S.~Khanna.
\newblock A utility equivalence theorem for concave functions.
\newblock In {\em Integer Programming and Combinatorial Optimization}, pages
  126--137. Springer, 2014.

\bibitem{carraway1993algorithm}
R.~Carraway, R.~Schmidt, and L.~Weatherford.
\newblock {An algorithm for maximizing target achievement in the stochastic
  knapsack problem with normal returns}.
\newblock {\em Naval research logistics}, 40(2):161--173, 1993.

\bibitem{chekuri2000ptas}
C.~Chekuri and S.~Khanna.
\newblock {A PTAS for the multiple knapsack problem}.
\newblock In {\em ACM-SIAM symposium on Discrete algorithms}, pages 213--222,
  2000.

\bibitem{chen2009approximating}
N.~Chen, N.~Immorlica, A.~Karlin, M.~Mahdian, and A.~Rudra.
\newblock {Approximating matches made in heaven}.
\newblock {\em International Colloquium on Automata, Languages and
  Programming}, pages 266--278, 2009.

\bibitem{cheney2000}
W.~Cheney and W.~Light.
\newblock {\em A Course in Approximation Theory}.
\newblock Brook/Cole Publishing Company, 2000.

\bibitem{cheng2008cleaning}
R.~Cheng, J.~Chen, and X.~Xie.
\newblock {Cleaning uncertain data with quality guarantees}.
\newblock {\em Proceedings of the VLDB Endowment}, 1(1):722--735, 2008.

\bibitem{daskalakis2014polynomial}
C.~Daskalakis, A.~De, I.~Diakonikolas, A.~Moitra, and R.~A. Servedio.
\newblock A polynomial-time approximation scheme for fault-tolerant distributed
  storage.
\newblock In {\em SODA}, pages 628--644. SIAM, 2014.

\bibitem{dean2005adaptivity}
B.~Dean, M.~Goemans, and J.~Vondr{\'a}k.
\newblock {Adaptivity and approximation for stochastic packing problems}.
\newblock In {\em ACM-SIAM symposium on Discrete algorithms}, pages 395--404,
  2005.

\bibitem{dean2008approximating}
B.~Dean, M.~Goemans, and J.~Vondrak.
\newblock {Approximating the Stochastic Knapsack Problem: The Benefit of
  Adaptivity}.
\newblock {\em Mathematics of Operations Research}, 33(4):945, 2008.

\bibitem{fazel2005network}
M.~Fazel and M.~Chiang.
\newblock Network utility maximization with nonconcave utilities using
  sum-of-squares method.
\newblock In {\em Decision and Control, 2005 and 2005 European Control
  Conference. CDC-ECC'05. 44th IEEE Conference on}, pages 1867--1874. IEEE,
  2005.

\bibitem{Fishburn70}
P.~Fishburn.
\newblock {\em Utility Theory and Decision Making}.
\newblock John Wiley \& Sons, Inc, 1970.

\bibitem{garey-johnson:79}
M.~Garey and D.~Johnson.
\newblock {\em {``Computers and Intractability: A Guide to the Theory of
  NP-Completeness''}}.
\newblock W.H. Freeman, 1979.

\bibitem{geetha1993stochastic}
S.~Geetha and K.~Nair.
\newblock {On stochastic spanning tree problem}.
\newblock {\em Networks}, 23(8):675--679, 1993.

\bibitem{goel1999stochastic}
A.~Goel and P.~Indyk.
\newblock {Stochastic load balancing and related problems}.
\newblock In {\em Annual Symposium on Foundations of Computer Science}, page
  579, 1999.

\bibitem{gottlieb1997gibbs}
D.~Gottlieb and C.~Shu.
\newblock {On the Gibbs phenomenon and its resolution}.
\newblock {\em SIAM review}, 39(4):644--668, 1997.

\bibitem{goyal2009chance}
V.~Goyal and R.~Ravi.
\newblock {Chance constrained knapsack problem with random item sizes}.
\newblock {\em To appear in Operation Research Letter}, 2009.

\bibitem{guha2008adaptive}
S.~Guha and K.~Munagala.
\newblock {Adaptive Uncertainty Resolution in Bayesian Combinatorial
  Optimization Problems}.
\newblock {\em To appear in ACM Transactions on Algorithms}, 2008.

\bibitem{gupta2004boosted}
A.~Gupta, M.~P{\'a}l, R.~Ravi, and A.~Sinha.
\newblock {Boosted sampling: approximation algorithms for stochastic
  optimization}.
\newblock In {\em ACM Symposium on Theory of Computing}, pages 417--426. ACM,
  2004.

\bibitem{henig1990risk}
M.~Henig.
\newblock {Risk criteria in a stochastic knapsack problem}.
\newblock {\em Operations Research}, 38(5):820--825, 1990.

\bibitem{huang2015approximating}
L.~Huang and J.~Li.
\newblock Approximating the expected values for combinatorial optimization
  problems over stochastic points.
\newblock In {\em Automata, Languages, and Programming}, pages 910--921.
  Springer, 2015.

\bibitem{li2014epsilon}
L.~Huang, J.~Li, J.~M. Phillips, and H.~Wang.
\newblock $\epsilon$-kernel coresets for stochastic points.
\newblock {\em arXiv preprint arXiv:1411.0194}, 2014.

\bibitem{ishii1981stochastic}
H.~Ishii, S.~Shiode, and T.~Nishida~Yoshikazu.
\newblock {Stochastic spanning tree problem}.
\newblock {\em Discrete Applied Mathematics}, 3(4):263--273, 1981.

\bibitem{kahneman1979prospect}
D.~Kahneman and A.~Tversky.
\newblock Prospect theory: An analysis of decision under risk.
\newblock {\em Econometrica: Journal of the Econometric Society}, pages
  263--291, 1979.

\bibitem{kariv1979algorithmic}
O.~Kariv and S.~Hakimi.
\newblock {An algorithmic approach to network location problems. II: The
  p-medians}.
\newblock {\em SIAM Journal on Applied Mathematics}, 37(3):539--560, 1979.

\bibitem{kleinberg1997allocating}
J.~Kleinberg, Y.~Rabani, and {\'E}.~Tardos.
\newblock {Allocating bandwidth for bursty connections}.
\newblock In {\em ACM Symposium on Theory of Computing}, page 673, 1997.

\bibitem{pods09_LD}
J.~Li and A.~Deshpande.
\newblock Consensus answers for queries over probabilistic databases.
\newblock In {\em ACM SIGMOD-SIGACT-SIGART Symposium on Principles of Database
  Systems}, 2009.

\bibitem{li2010ranking}
J.~Li and A.~Deshpande.
\newblock {Ranking continuous probabilistic datasets}.
\newblock {\em Proceedings of the VLDB Endowment}, 3(1), 2010.

\bibitem{li09unified}
J.~Li, B.~Saha, and A.~Deshpande.
\newblock A unified approach to ranking in probabilistic databases.
\newblock In {\em Proceedings of the VLDB Endowment}, 2009.

\bibitem{li2014fully}
J.~Li and T.~Shi.
\newblock A fully polynomial-time approximation scheme for approximating a sum
  of random variables.
\newblock {\em Operations Research Letters}, 42(3):197--202, 2014.

\bibitem{li2013stoch}
J.~Li and W.~Yuan.
\newblock Stochastic combinatorial optimization via poisson approximation.
\newblock In {\em ACM Symposium on Theory of Computing}, 2013.

\bibitem{loui1983optimal}
R.~Loui.
\newblock {Optimal paths in graphs with stochastic or multidimensional
  weights}.
\newblock {\em Communications of the ACM}, 26(9):670--676, 1983.

\bibitem{robert04stpetersburg}
R.~Martin.
\newblock The {S}t. {P}etersburg {P}aradox.
\newblock {\em The Stanford Encyclopedia of Philosophy}, 2004.
\newblock
  \url{http://plato.stanford.edu/archives/fall2004/entries/paradox-stpetersburg}.

\bibitem{mittal2008general}
S.~Mittal and A.~Schulz.
\newblock {A general framework for designing approximation schemes for
  combinatorial optimization problems with many objectives combined into one}.
\newblock {\em Approximation, Randomization and Combinatorial Optimization.
  Algorithms and Techniques}, pages 179--192, 2008.

\bibitem{munteanu2014smallest}
A.~Munteanu, C.~Sohler, and D.~Feldman.
\newblock Smallest enclosing ball for probabilistic data.
\newblock In {\em Proceedings of the thirtieth annual symposium on
  Computational geometry}, page 214. ACM, 2014.

\bibitem{murthy1997exact}
I.~Murthy and S.~Sarkar.
\newblock {Exact algorithms for the stochastic shortest path problem with a
  decreasing deadline utility function}.
\newblock {\em European Journal of Operational Research}, 103(1):209--229,
  1997.

\bibitem{murthy1998stochastic}
I.~Murthy and S.~Sarkar.
\newblock {Stochastic shortest path problems with piecewise-linear concave
  utility functions}.
\newblock {\em Management Science}, 44(11):125--136, 1998.

\bibitem{nikolova2010approximation}
E.~Nikolova.
\newblock {Approximation Algorithms for Reliable Stochastic Combinatorial
  Optimization}.
\newblock {\em International Workshop on Approximation Algorithms for
  Combinatorial Optimization Problems}, pages 338--351, 2010.

\bibitem{nikolova2006optimal}
E.~Nikolova, M.~Brand, and D.~Karger.
\newblock {Optimal route planning under uncertainty}.
\newblock In {\em Proceedings of International Conference on Automated Planning
  and Scheduling}, 2006.

\bibitem{nikolova2006stochastic}
E.~Nikolova, J.~Kelner, M.~Brand, and M.~Mitzenmacher.
\newblock {Stochastic shortest paths via quasi-convex maximization}.
\newblock In {\em European Symposium on Algorithms}, pages 552--563, 2006.

\bibitem{oberhettinger1973fourier}
F.~Oberhettinger.
\newblock {\em {Fourier transforms of distributions and their inverses: a
  collection of tables}}.
\newblock Academic press, 1973.

\bibitem{Osborne95modified}
M.~R. Osborne and G.~K. Smyth.
\newblock A modified prony algorithm for fitting sums of exponential functions.
\newblock {\em SIAM Journal of Scientific Computing}, 1995.

\bibitem{papadimitriou2000approximability}
C.~Papadimitriou and M.~Yannakakis.
\newblock {On the approximability of trade-offs and optimal access of web
  sources}.
\newblock In {\em Annual Symposium on Foundations of Computer Science}, 2000.

\bibitem{powell81}
M.~J.~D. Powell.
\newblock {\em Approximation theory and methods}.
\newblock Cambridge University Press, 1981.

\bibitem{Book_numerical}
A.~Ralston and R.~Rabinowitz.
\newblock {\em A First Course in Numerical Analysis}.
\newblock 2001.

\bibitem{safer04fully}
H.~Safer, J.~B. Orlin, and M.~Dror.
\newblock Fully polynomial approximation in multi-criteria combinatorial
  optimization, 2004.
\newblock MIT Working Paper.

\bibitem{samuelson1977st}
P.~A. Samuelson.
\newblock St. petersburg paradoxes: Defanged, dissected, and historically
  described.
\newblock {\em Journal of Economic Literature}, 15(1):24--55, 1977.

\bibitem{shmoys2006approximation}
D.~Shmoys and C.~Swamy.
\newblock {An approximation scheme for stochastic linear programming and its
  application to stochastic integer programs}.
\newblock {\em J. ACM}, 53(6):1012, 2006.

\bibitem{sigal1980stochastic}
C.~Sigal, A.~Pritsker, and J.~Solberg.
\newblock {The stochastic shortest route problem}.
\newblock {\em Operations Research}, 28(5):1122--1129, 1980.

\bibitem{stein2003fourier}
E.~Stein and R.~Shakarchi.
\newblock {\em {Fourier analysis: an introduction}}.
\newblock Princeton University Press, 2003.

\bibitem{swamy2008algorithms}
C.~Swamy.
\newblock {Risk-Averse Stochastic Optimization: Probabilistically-Constrained
  Models and Algorithms for Black-Box Distributions.}
\newblock {\em ACM-SIAM symposium on Discrete algorithms}, 2010.

\bibitem{swamy2006approximation}
C.~Swamy and D.~Shmoys.
\newblock {Approximation algorithms for 2-stage stochastic optimization
  problems}.
\newblock {\em ACM SIGACT News}, 37(1):46, 2006.

\bibitem{VonNeumann1947}
J.~von Neumann and O.~Morgenstern.
\newblock {\em Theory of Games and Economic Behavior}.
\newblock Princeton Univ. Press, 2nd edition, 1947.

\end{thebibliography}
%}

\appendix

\section{Computing $\Exp[\phi^{w_{e}}]$}
\label{app:approxe}

If $X$ is a random variable, then the
{\em characteristic function} of $X$ is defined as
$$G(z) = \Exp[e^{izX}].$$
We can see $\Exp[\phi^{w_{e}}]$ is nothing but
the value of the characteristic function of $w_{e}$ evaluated at $-i\ln \phi$
(here $\ln$ is the complex logarithm function).
For many important distributions, including negative binomial,
Poisson, exponential, Gaussian, Chi-square and Gamma, a closed-form characteristic function is known.
See \cite{oberhettinger1973fourier} for a more comprehensive list.

\begin{example}
Consider the Poisson distributed $w_{e}$ with mean $\lambda$, i.e.,
$\Prob(w_{e}=k)=\lambda^{k}e^{-\lambda}/k!\,$.
Its characteristic function is known to be
$
G(z)= \! e^{\lambda(e^{iz}-1)}.
$
Therefore,
$$
\Exp[\phi^{w_{e}}]= G(-i\ln \phi)= \! e^{\lambda(\phi-1)}.
$$
\end{example}
\begin{example}
For Gaussian distribution $N(\mu, \sigma^{2})$, we know its characteristic function is
$
G(z)=e^{iz\mu - \frac{1}{2}\sigma^2z^2}.
$
Therefore,
$$
\Exp[\phi^{w_{e}}]= G(-i\ln \phi)=\phi^{u+\frac{1}{2}\sigma^{2}\ln \phi}.
$$
\end{example}

For some continuous distributions, no closed-form characteristic function is known
and we need proper numerical approximation method.

If the support of the distribution is bounded, we can use for example Gauss-Legendre quadrature \cite{Book_numerical}.
If the support is infinite, we can truncate the distribution and approximate the integral over the remaining finite interval;
%(2) use a quadrature formula developed for infinite interval, such as Gauss-Laguerre quadrature \cite{Book_numerical}.
Generally speaking a quadrature method approximates $\int_{a}^{b} f(x) \d x$ by a linear sum $\sum_{i=1}^k c_i f(x_i)$
where $c_i$ and $x_i$ are some constants independent of the function $f$.
%When we increase $k$, the approximation quality becomes better and better.
A typical practice is to use  {\em composite rule}, that is to
partition $[a,b]$ into $N$ subintervals and approximate the integral using some quadrature formula over each subinterval.
For the example of Gauss-Laguerre quadrature, assuming continuity of the $2k$th derivative of $f(x)$ for some constant $k$,
if we partition $[a,b]$ into $M$ subintervals and apply Gauss-Legendre quadrature of degree $k$ to each subinterval,
the approximation error is
%\smallmathtwo{$
%Error={(b-a)^{2k+1}\over N^{2k}}  {(k!)^4\over (2k+1)[(2k)!]^3} f^{(2k)}(\xi)
%$}
$$
\mathsf{Error}={(b-a)^{2k+1}\over M^{2k}}  {(k!)^4\over (2k+1)[(2k)!]^3} f^{(2k)}(\xi)
$$
where $\xi$ is some point in $(a,b)$ \cite[pp.116]{Book_numerical}.
Let $\Delta={b-a\over M}$.
If we treat $k$ as a constant, the behavior of the error (in terms of $\Delta$) is
$\mathsf{Error}(\Delta)=O(\Delta^{2k}\max_{\xi} f^{(2k)}(\xi))$.
Therefore, if the support and $\max_{\xi} f^{(2k)}(\xi)$ are bounded by a polynomial, we can approximate the integral, in polynomial time,
such that the error is $O(1/n^{\beta})$ for any fixed integer $\beta$.

The next lemma shows that we do not lose too much even though we can only get an
approximation of $\Exp[\phi^{w_{e}}]$.

\begin{lemma}
Suppose in Theorem~\ref{thm:algofortmu}, we can only compute an approximate value of $\Exp[\phi_{i}^{w_{e}}]$,
denoted by $E_{e,i}$, for each $e$ and $i$, such that $|\Exp[\phi_{i}^{w_{e}}]-E_{e,i}|\leq O(n^{-\beta})$ for some positive integer $\beta$.
Denote $E(S)=\sum_{k=1}^{L} c_{k} \prod_{e\in S}E_{e,i}$.
For any solution $S$, we have that
$$
|\Exp[\tmu(\len(S))]- E(S) | \leq O(n^{1-\beta}).
$$
\end{lemma}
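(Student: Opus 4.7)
The plan is to reduce the claim to a standard ``product perturbation'' estimate. First I would expand the two quantities into a form where they can be compared termwise. Using the exponential decomposition of $\tmu$ and the independence of the weights $w_e$, we have
\[
\Exp[\tmu(\len(S))] \;=\; \sum_{k=1}^{L} c_{k}\,\Exp[\phi_{k}^{\len(S)}] \;=\; \sum_{k=1}^{L} c_{k}\prod_{e\in S}\Exp[\phi_{k}^{w_{e}}],
\]
exactly as in the derivation at the start of Section~\ref{sec:algorithm}. On the other hand $E(S)=\sum_{k=1}^{L} c_{k}\prod_{e\in S}E_{e,k}$. Subtracting and using the triangle inequality,
\[
|\Exp[\tmu(\len(S))] - E(S)| \;\leq\; \sum_{k=1}^{L} |c_{k}|\,\Bigl|\prod_{e\in S}\Exp[\phi_{k}^{w_{e}}] \;-\; \prod_{e\in S}E_{e,k}\Bigr|.
\]

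Next I would bound each inner difference of products by a telescoping argument. Since $|\phi_{k}|\leq 1$ and $w_{e}\geq 0$, we have $|\Exp[\phi_{k}^{w_{e}}]|\leq 1$; combined with the hypothesis $|\Exp[\phi_{k}^{w_{e}}]-E_{e,k}|\leq O(n^{-\beta})$, this gives $|E_{e,k}|\leq 1+O(n^{-\beta})$. Writing $a_{e}=\Exp[\phi_{k}^{w_{e}}]$ and $b_{e}=E_{e,k}$ and telescoping,
\[
\prod_{e\in S} a_{e} - \prod_{e\in S} b_{e} \;=\; \sum_{e\in S}\Bigl(\prod_{e'<e}a_{e'}\Bigr)(a_{e}-b_{e})\Bigl(\prod_{e'>e}b_{e'}\Bigr),
\]
so each term is bounded by $(1+O(n^{-\beta}))^{|S|-1}\cdot O(n^{-\beta})= O(n^{-\beta})$ (because $|S|\leq n$ and $(1+O(n^{-\beta}))^{n}=1+O(n^{1-\beta})=O(1)$), and therefore the whole sum is at most $|S|\cdot O(n^{-\beta})\leq O(n^{1-\beta})$.

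Finally I would put the two inequalities together. Since $L$ is a constant and the $|c_k|$ are $O(1)$ (they come from the fixed decomposition of the utility function), summing over $k=1,\ldots,L$ gives
\[
|\Exp[\tmu(\len(S))] - E(S)| \;\leq\; L\cdot\max_{k}|c_{k}|\cdot O(n^{1-\beta}) \;=\; O(n^{1-\beta}),
\]
as claimed.

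The only mildly delicate step is the telescoping bound, because one has to be careful that the approximate factors $E_{e,k}$ have absolute value not much bigger than $1$ so that $(1+O(n^{-\beta}))^{n}$ stays bounded; this is immediate from $|\phi_{k}|\leq 1$ and the hypothesis on $E_{e,k}$, so the argument is straightforward and I expect no real obstacle.
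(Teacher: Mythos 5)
Your proof is correct and takes essentially the same route as the paper: expand via the exponential decomposition and independence, reduce to a product-perturbation bound, and absorb the constant factors $L$ and $|c_k|$. The only cosmetic difference is that the paper cites the product-perturbation estimate ($|\prod(a_i+e_i)-\prod a_i|\leq O(n^{1-\beta})$ for $|a_i|\leq 1$, $|e_i|\leq n^{-\beta}$) from an external reference, whereas you prove it inline with the standard telescoping argument.
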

\begin{proof}
We need the following simple result (see \cite{li2010ranking} for a proof):
$a_1,\ldots,a_n$ and $e_1,\ldots,e_n$ are complex numbers such that
$|a_i|\leq 1$ and
$|e_i|\leq n^{-\beta}$ for all $i$ and some $\beta>1$. Then, we have
$$%begin{align*}
\Bigl|\prod_{i=1}^n (a_i+e_i)-\prod_{i=1}^n E_i \Bigr| \leq O(n^{1-\beta}).
$$%\end{align*}
Since $|\phi_{i}|\leq 1$, we can see that
$$|\Exp[\phi_{i}^{w_{e}}]|=|\int_{x\geq 0} \phi_{i}^{x}p_{e}(x)\d x| \leq 1.$$
The lemma simply follows by applying the above result and
noticing that $L$ and all $c_{k}$s are constants. \qed
\end{proof}

We can show that Theorem~\ref{thm:mainthm} still holds
even though we only have the
approximations of the $\Exp[\phi^{w_{e}}]$ values.
The proof is straightforward and omitted.

\end{document}